\documentclass[11pt]{article}

\usepackage{sn-preamble} 
\usepackage{tikz}
\usepackage{verbatim}
\usepackage{cancel}
\usepackage{caption}

\usetikzlibrary{decorations.pathreplacing,angles,quotes}
\usetikzlibrary{shapes.geometric}
\usetikzlibrary{patterns}

\usepackage{mathtools} 
\usepackage{bm} 

\newcommand{\red}[1]{{\color{red} {#1}}}

\newcommand{\ignore}[1]{}

\newcommand{\duc}{\mathsf{dist}_{\mathsf{UC}}}
\newcommand{\dint}{\mathsf{dist}_{\mathsf{int}}}

\newcommand{\view}{\mathrm{view}}
\newcommand{\Dyes}{\calD_{\mathrm{yes}}}
\newcommand{\Dno}{\calD_{\mathrm{no}}}
\newcommand{\fyes}{\boldf_{\mathrm{yes}}}
\newcommand{\fno}{\boldf_{\mathrm{no}}}
\newcommand{\yes}{\mathrm{yes}}
\newcommand{\no}{\mathrm{no}}
\newcommand{\Tal}{\mathsf{Talagrand}}

\begin{document}

\title{Testing Intersecting and Union-Closed Families\vspace{0.5em}}

\author{
Xi Chen \thanks{Columbia University. Email: \url{xichen@cs.columbia.edu}.}\hspace{-0.1cm} \and 
Anindya De \thanks{University of Pennsylvania. Email: \url{anindyad@seas.upenn.edu}.}\hspace{-0.1cm} \and 
Yuhao Li \thanks{Columbia University. Email: \url{yuhaoli@cs.columbia.edu}.}\hspace{-0.1cm} \and 
Shivam Nadimpalli \thanks{Columbia University. Email: \url{sn2855@columbia.edu}.}\hspace{-0.1cm} \and 
Rocco A. Servedio \thanks{Columbia University. Email: \url{rocco@cs.columbia.edu}.} 
\vspace{0.5em}
}

\date{
\small\today
}

\pagenumbering{gobble}

\maketitle  

\begin{abstract}

Inspired by the classic problem of Boolean function monotonicity testing, 
we investigate the testability of other well-studied properties of combinatorial finite set systems, specifically  \emph{intersecting} families and \emph{union-closed} families.  A function $f: \zo^n \to \zo$ is intersecting (respectively, union-closed) if its set of satisfying assignments corresponds to an intersecting family (respectively, a union-closed family) of subsets of $[n]$.  

Our main results are that --- in sharp contrast with the property of being a monotone set system ---  the property of being an intersecting set system, and the property of being a union-closed set system, both turn out to be information-theoretically difficult to test.  We show that: 

\begin{itemize}

\item For $\eps \geq \Omega(1/\sqrt{n})$, any non-adaptive two-sided $\eps$-tester for intersectingness must make $2^{\Omega(n^{1/4}/\sqrt{\eps})}$ queries. We also give a 
$2^{\Omega(\sqrt{n \log(1/\eps)})}$-query
lower bound for non-adaptive one-sided $\eps$-testers for intersectingness.

\item For $\eps \geq 1/2^{\Omega(n^{0.49})}$, any non-adaptive two-sided $\eps$-tester for union-closedness must make $n^{\Omega(\log(1/\eps))}$ queries.

\end{itemize}
Thus, neither intersectingness nor union-closedness shares the $\poly(n,1/\eps)$-query non-adaptive testability that is enjoyed by monotonicity.

To complement our lower bounds, we also give a simple $\poly(n^{\sqrt{n\log(1/\eps)}},1/\eps)$-query, one-sided, non-adaptive algorithm for $\eps$-testing each of these properties (intersectingness and union-closedness).  We thus achieve nearly tight upper and lower bounds for two-sided testing of intersectingness when $\eps = \Theta(1/\sqrt{n})$, and for one-sided testing of intersectingness when $\eps=\Theta(1).$

\end{abstract}


\newpage
\setcounter{page}{1}
\pagenumbering{arabic}


\section{Introduction}
\label{sec:intro}

 Monotonicity testing is among the oldest and most intensively studied problems in property testing (see e.g. \cite{GGLRS,DGLRRS,FLNRRS,HalevyKushilevitz:07,BCGM12,CS13a,CS13b,CST14,CDST15,BB15,CWX17stoc,BCS18,KMS18,CS19,BCS20,PRW22,BCS23,BKKM23}
 and the numerous references contained therein).  The simplicity with which the core monotonicity testing problem can be formulated---given query access to an unknown $f: \zo^n \to \zo$, output ``yes'' if $f$ is monotone and ``no'' if $f$ is far in Hamming distance from every monotone function---belies the wealth of sophisticated technical ingredients and ideas (such as combinatorial shifting \cite{GGLRS,DGLRRS}, multidimensional limit theorems \cite{CST14,CDST15}, and isoperimetric inequalities \cite{CS13a,KMS18,BCS18,PRW22,BCS23,BKKM23}) which have been deployed in both algorithms and lower bounds for this problem.  Thanks to this body of work the basic problem of monotonicity testing is now fairly well understood:  \cite{KMS18} gave an $\tilde{O}(\sqrt{n}/\eps^2)$-query non-adaptive testing algorithm, and \cite{CWX17stoc} gave an $\tilde{\Omega}(n^{1/3})$-query lower bound which holds even for adaptive algorithms.

Monotonicity testing has several intriguing features as a property testing problem: 
\begin{itemize} 

\item Since the class of all monotone functions is of doubly exponential size\footnote{Observe that any assignment of 0/1 values to the middle level of the Boolean hypercube $\zo$ corresponds to at least one monotone function, and hence there are at least $2^{\Omega(2^n/\sqrt{n})}$ many distinct monotone functions over $\zo^n$.}, the results mentioned above tell us that the query complexity of testing this class, which contains $N=2^{2^{\Theta(n)}}$ functions, is $(\log \log N)^c$ for some constant ${\frac 1 3} \leq c \leq {\frac 1 2}$. This is an interesting contrast with both the $O(\log N)$ query complexity which suffices to test any class of $N$ functions\footnote{This follows straightforwardly from the fact that $O(\log N)$ samples suffice to properly PAC learn any concept class of $N$ Boolean functions \cite{bluehrhauwar89} and the well-known reduction from proper PAC learning to property testing given in \cite{GGR98}.}  and the constant query complexity (independent of $N$ and depending only on the error parameter $\eps$) of  a number of other well-studied property testing problems such as linearity testing \cite{BLR93}, testing linear separability \cite{MORS10}, and testing dictatorship \cite{PRS02}.

\item The monotonicity of  $f: \zo^n \to \zo$ is equivalent to having all pairs of inputs $x,y$ satisfy a simple ``pair condition,'' which is that 
\begin{equation} \label{eq:mono}
x \leq y \implies f(x) \leq f(y).
\end{equation}
Given this, it is natural to consider ``pair testers'' for monotonicity which work by drawing a pair of inputs $\bx,\by \in \zo^n$ with $\bx \leq \by$ according to some distribution over such pairs, and checking whether the pair violates monotonicity.  Indeed, all known algorithms for testing monotonicity, including the state-of-the-art algorithm of \cite{KMS18}, work in this fashion.

\item Finally, we observe that a monotone function $f: \zo^n \to \zo$ can alternately be viewed as an \emph{upward-closed} set system: this is a collection of subsets ${\cal S} \subseteq 2^{[n]}$, corresponding to the satisfying assignments of $f$, which has the property that for every subset $S \subseteq [n]$, if $S \in {\cal S}$ then $S \cup \{i\} \in {\cal S}$ for every $i \in [n]$.
\end{itemize}

\noindent {\bf This Work.}  Motivated by monotonicity testing, we propose to study other combinatorial property testing problems of a similar flavor. In particular, we are interested in the testability of properties which (a) are ``very large'' (meaning that the number of functions with the property is doubly exponential in $n$); (b) are defined by a natural condition on pairs or triples of inputs; and (c) correspond to well-studied properties of set systems. We focus on two specific properties of this sort, namely \emph{intersecting} and \emph{union-closed} set systems.

\medskip 

\noindent {\bf Intersectingness.}
A set system ${\cal S} \subseteq 2^{[n]}$ is said to be \emph{intersecting} if any two sets $S_1, S_2 \in {\cal S}$ have a nonempty intersection, i.e. $S_1 \cap S_2 \neq \emptyset.$
Intersecting families are intensively studied in extremal combinatorics, where they are the subject of many touchstone results, beginning with the seminal Erd\"{o}s-Ko-Rado theorem \cite{EKR61} and continuing to the present day.  Recent years have witnessed exciting progress on many problems dealing with intersecting families and their generalizations via analytic techniques that are highly relevant to the study of Boolean functions in theoretical computer science; see e.g.~\cite{Friedgut08,DF09,EKL19} and more generally \cite{Ellis22} for a recent and extensive survey.

Translating the above definition to the setting of Boolean functions, a function $f: \zo^n \to \zo$ is intersecting if the following ``pair condition'' holds: whenever $f(x)=f(y)=1$, there is (at least one) coordinate $i \in [n]$ such that $x_i=y_i=1.$ This is equivalent to
\begin{equation} \label{eq:intersecting}
x \leq \overline{y} \implies f(x) \leq \overline{f(y)},
\end{equation}
i.e. if $x \leq \overline{y}$, then having $f(y)=1$ implies that $f(x)$ must be 0, where $\overline{y}=(\overline{y}_1, \overline{y}_2, \dots, \overline{y}_n)$ is the  point in $\zo^n$ that is antipodal to $y.$ Finally, we observe that any $n$-variable Boolean function whose satisfying assignments all have first bit 1 is an intersecting function, so indeed the set of all $n$-variable intersecting Boolean functions is of doubly exponential size (at least $2^{2^{n-1}}$).

\medskip

\noindent {\bf Union-closedness.}
A set system ${\cal S} \subseteq 2^{[n]}$ is said to be \emph{union-closed} if whenever $S_1$ and $S_2$ belong to ${\cal S}$ then $S_1 \cup S_2$ also belongs to ${\cal S}$. In the Boolean function setting, this corresponds to the ``triple condition''  that $f: \zo^n \to \zo$ satisfy
\begin{equation} \label{eq:implication-UC}
z=x \cup y \implies 
f(x) f(y) \leq f(z),
\end{equation}
i.e.~if $f(x)=f(y)=1$ then $f(x \cup y)$ must also be 1.  Union-closed families have long been of interest in combinatorics, in part due to the well-known ``union-closed conjecture'' of Frankl \cite{Frankl95,BruhnSchaudt15}, which states that in any union-closed family some element $i \in [n]$ must appear in at least half the sets in the family.  Dramatic progress was recently made on the union-closed conjecture by Gilmer \cite{Gilmer22}, who proved a weaker form of the conjecture with $1/2$ replaced by $0.01$ (this constant was subsequently improved to ${\frac {3 - \sqrt{5}}{2}} \approx 0.38$ by \cite{AHS22,CL22,Pebody22,Sawin22}).  Since every monotone function is easily seen to be union-closed, union-closedness is a ``large'' property, with at least $2^{\Omega(2^n/\sqrt{n})}$ $n$-variable functions having the property.

\medskip

In this paper we initiate the study of intersectingness and union-closedness from a property testing perspective.  Given that (like monotonicity) these are  ``large'' properties that are defined by a simple ``pair'' or ``triple'' property, it is natural to wonder: Is the query complexity of testing these properties similar to the query complexity of testing monotonicity, or are these properties harder--- or easier---to test than monotonicity?

\subsection{Main Results}

As our main results, we show that both intersectingness and union-closedness are significantly more difficult to test than monotonicity: We give information-theoretic lower bounds which establish that neither of these properties admits a $\poly(n,1/\eps)$-query non-adaptive testing algorithm.  We also give sub-exponential non-adaptive testing algorithms for each of these properties; our algorithms have one-sided error (they never reject functions which have the property), while most of our lower bounds are for testing algorithms that are allowed two-sided error. 

We turn now to a detailed description of our main results.

\medskip

\noindent {\bf Positive results: Algorithms for Testing Intersectingness and Union-Closedness.} As a warm-up, and to develop intuition for these properties, we first give simple testing algorithms for intersectingness and for union-closedness which have sub-exponential query complexity:

\begin{theorem} [Testers for intersectingness and union-closedness]  \label{thm:main-alg}
There is a $\poly(n^{\sqrt{n \log(1/\eps)}},1/\eps)$-query, non-adaptive, one-sided\footnote{A tester is \emph{non-adaptive} if the choice of its $i$-th query point does not depend on the responses received to queries $1,\dots,i-1$.  A \emph{one-sided} tester for a class of functions is one which must accept every function in the class with probability 1.} algorithm for $\eps$-testing whether an unknown $f: \zo^n \to \zo$ is intersecting versus $\eps$-far from every intersecting function. 
The same is true for  union-closedness. 
\end{theorem}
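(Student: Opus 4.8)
The plan is to reduce testing each of these properties to a ``local'' search for a small violating structure (a pair for intersectingness, a triple for union-closedness), and to argue that if $f$ is $\eps$-far from the property then there must be \emph{many} such violating structures concentrated in a combinatorially simple region, so that a modest amount of sampling will find one. Concretely, for intersectingness, call a pair $(x,y)$ with $x \le \overline y$, $f(x)=f(y)=1$ a \emph{witness} of non-intersectingness. The one-sided tester will sample inputs, query $f$ on them (and on various derived points), and accept unless it finds a witness; since a genuinely intersecting function has no witness, the tester has one-sided error. The content is to show that if $f$ is $\eps$-far, then with the stated query budget we hit a witness with constant probability.

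The key combinatorial step is a ``removal''-style argument. Suppose $f$ is $\eps$-far from intersecting. Consider the (unique) largest intersecting ``core'' obtained by a greedy/canonical deletion process: repeatedly delete satisfying assignments that participate in violations, in some fixed order, until what remains is intersecting. Because $f$ is $\eps$-far, at least an $\eps$-fraction of the cube must be deleted, i.e.\ $\ge \eps 2^n$ inputs get removed, and each removed input was removed because it formed a witness with some surviving input. The trouble is that a uniformly random pair $(\bx, \overline{\by})$ with $\bx \le \overline{\by}$ is an extremely biased event and a random pair of satisfying assignments is exponentially unlikely to be a witness. The right move is to restrict attention to inputs of low Hamming weight: any witness $(x,y)$ has $|x| + |y| \le n$, so at least one of $x,y$ has weight $\le n/2$; more usefully, one shows that being $\eps$-far forces a noticeable density of satisfying assignments of weight roughly $\sqrt{n \log(1/\eps)}$ or below that are ``responsible'' for the farness, which is exactly where the $n^{O(\sqrt{n\log(1/\eps)})}$ comes from --- the tester enumerates, or samples heavily within, the Hamming ball of that radius. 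I would make this precise via a potential/weighting argument: weight each input $x$ by $2^{-|x|}$ (or by the measure under the $p$-biased cube for a suitable small $p$), observe that the total $p$-biased measure of the deleted set is $\ge \eps \cdot (\text{something})$, and deduce that sampling $\poly(n^{\sqrt{n\log(1/\eps)}},1/\eps)$ points from the appropriate low-weight distribution, together with querying all pairwise ``antipodal-compatible'' combinations among the sampled satisfying assignments, finds a witness whp.

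For union-closedness the structure is the same with triples: a \emph{witness} is $(x,y,z)$ with $z = x \cup y$, $f(x)=f(y)=1$, $f(z)=0$. Again the tester accepts unless it finds such a triple, giving one-sided error. The distance argument is analogous: if $f$ is $\eps$-far from union-closed, a canonical ``completion/deletion'' process must touch $\ge \eps 2^n$ inputs, each touch being certified by a violating triple. The reduction to low Hamming weight is slightly different because now we also care about $z = x\cup y$ which can have large weight; the point is that one can instead look at witnesses where $x$ and $y$ differ from $z$ in few coordinates, equivalently where $z \setminus x$ and $z\setminus y$ are small, and a far function must have many such ``short-edge'' witnesses. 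Sampling a point $\bz$ and then all pairs of subsets of $\bz$ obtained by removing up to $O(\sqrt{n\log(1/\eps)})$ coordinates realizes the same $n^{O(\sqrt{n\log(1/\eps)})}$ budget.

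I expect the main obstacle to be the distance-to-violation-density lemma: turning ``$\eps$-far from the property'' into ``an $\eps$-related mass of low-Hamming-weight inputs each lies in a violating pair/triple.'' The subtlety is that the naive deletion argument only guarantees $\eps 2^n$ deleted inputs \emph{somewhere} in the cube, whereas we need them (or enough of them) in a ball of radius $O(\sqrt{n \log(1/\eps)})$; this requires either (i) observing that one can always fix violations by deleting the \emph{higher-weight} endpoint, so that the \emph{surviving} witnesses can be taken to have a low-weight endpoint, together with a counting argument that a $(1-\eps)$-dense intersecting family already essentially lives on $\{x : x_i = 1\text{ for some fixed }i\}$-type structures, or (ii) a more careful argument directly bounding, via the $p$-biased measure with $p \approx \sqrt{\log(1/\eps)/n}$, the mass that must be moved. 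Everything else --- the one-sided error guarantee, the non-adaptivity, and the query accounting --- should be routine once this lemma is in hand; I would state it as a standalone ``robustness'' lemma for each property and then derive Theorem~\ref{thm:main-alg} in a few lines.
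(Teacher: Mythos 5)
Your high-level blueprint --- a one-sided ``search for a local witness,'' a removal/deletion lemma to show that $\eps$-far implies many disjoint witnesses, and a combinatorial restriction to a small region so the search costs $n^{O(\sqrt{n\log(1/\eps)})}$ --- matches the paper's strategy, and your removal argument is essentially the paper's Lemma~\ref{lem:dist-to-intersecting} and Lemma~\ref{lem:distance-uc}. However, the crucial technical step is where you go wrong, and you correctly flag it yourself as the ``main obstacle'': you propose concentrating the search on inputs of \emph{low} Hamming weight (a ball of radius $\approx \sqrt{n\log(1/\eps)}$ around $0^n$), and neither of your two proposed fixes works. For (i), a $(1-\eps)$-dense intersecting family cannot exist for $\eps < 1/2$ (intersecting families have density $\le 1/2$), so the ``counting argument'' you allude to cannot get off the ground; moreover, a function can be $\Theta(1)$-far from intersecting while having no satisfying assignments of weight below $n/2 - O(\sqrt{n})$ (put all 1's on weights in $[n/2 - \sqrt{n}, n/2]$), so there is simply nothing to find near the origin. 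For (ii), distance is measured under the uniform measure, and a $p$-biased reweighting with small $p$ shifts mass to a region that may contain no witnesses at all.

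The paper's resolution is different in a way that matters: rather than searching a small-radius Hamming ball, it restricts attention to the \emph{middle layers} $I = [n/2 - T, n/2 + T]$ with $T = \Theta(\sqrt{n\log(1/\eps)})$, via a \emph{truncation} construction (Claim~\ref{claim:zeroing} and Claim~\ref{claim:zeroing-inter}). Define $f_{\mathsf{trunc}}$ to agree with $f$ on weights in $I$ and to take a fixed value outside (0/0 for intersectingness; 0 below, 1 above for union-closedness). Because the tails outside $I$ carry $\le \eps/2$ mass, $f_{\mathsf{trunc}}$ is still $\eps/2$-far, and the specific choice of boundary values preserves the property; crucially, every violating pair/tuple for $f_{\mathsf{trunc}}$ then lies \emph{entirely} in $\binom{[n]}{I}$. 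The algorithm samples $\bx$ from $\binom{[n]}{I}$ and queries the downward cone of $\bx$ (for UC) or of $\overline{\bx}$ (for intersectingness) \emph{restricted to the middle layers}; since the window has width $2T$, this cone has at most $n^{O(T)}$ points, which is where the bound comes from --- it is a thin slab near level $n/2$, not a ball near $0^n$. Combined with the removal lemma showing $\ge (\eps/2)2^n$ end-distinct witnesses, a random $\bx$ is the ``top'' of some witness with probability $\ge \eps/2$, and the exhaustive cone search finds it. Without the truncation idea and the switch from ``low weight'' to ``middle layers,'' your argument has a genuine gap.
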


\Cref{thm:main-alg} is proved by analyzing a ``pair tester'' for intersectingness and a  ``triple tester'' for union-closedness.  The distribution of pairs (respectively, triples) used by our algorithm is extremely simple, so it is natural to wonder whether a more sophisticated algorithm, perhaps using a cleverer distribution over pairs or triples, could result in a tester with an improved query complexity (indeed, this would be analogous to how the cleverer distribution over pairs used in \cite{CS13a,KMS18} resulted in a better query complexity for testing monotonicity than the simple distribution that was used in \cite{GGLRS}).  However, our main results---lower bounds for testing intersectingness and union-closedness---indicate that there are strong information-theoretic limitations on the possible performance of any non-adaptive testing algorithm for these properties.  We now describe our lower bound results.

\medskip

\noindent {\bf Negative Results: Lower Bounds for Testing Intersectingness and Union-Closedness.}  Our lower bounds show that both intersectingness and union-closedness are significantly harder to test than monotonicity: Neither of these properties has a $\poly(n,1/\eps)$-query non-adaptive testing algorithm, even if we allow two-sided error. (Recall that in contrast, the algorithms of \cite{GGLRS,CS13a,KMS18} are all $\poly(n,1/\eps)$-query non-adaptive one-sided testing algorithms for monotonicity.) In more detail, our main lower bound for intersectingness is the following (in all of our lower bound theorem statements, $c>0$ represents some sufficiently small absolute positive constant):

\begin{theorem} [Two-sided lower bound for intersectingness] \label{thm:main-lb-intersecting}
For $c > \eps\geq 1/\sqrt{n}$, any non-adaptive $\eps$-testing algorithm for whether an unknown $f: \zo^n \to \zo$ is intersecting versus $\eps$-far from intersecting must make $2^{\Omega(n^{1/4}/\sqrt{\eps})}$ queries to $f$.
\end{theorem}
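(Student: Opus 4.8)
The plan is to establish the lower bound via Yao's principle, by constructing two distributions $\Dyes$ and $\Dno$ over $n$-variable Boolean functions such that (i) every $\boldf \sim \Dyes$ is intersecting; (ii) with high probability $\boldf \sim \Dno$ is $\eps$-far from every intersecting function; and (iii) no non-adaptive deterministic algorithm making $2^{o(n^{1/4}/\sqrt{\eps})}$ queries can distinguish the two distributions with constant advantage. The natural starting point for a ``large'' property like intersectingness, where the honest functions can be highly unstructured, is to embed a hard \emph{monotonicity}-style construction. Here the right template is the Talagrand-style random DNF used in monotonicity lower bounds (e.g.~\cite{KMS18,CWX17stoc}): take $\boldf$ to be an $\mathsf{OR}$ of roughly $2^{\Theta(\sqrt n)}$ random ``terms,'' each term a conjunction of about $\sqrt n$ randomly chosen coordinates. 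In the yes-case we will tilt every term to guarantee intersectingness; in the no-case we introduce a small random perturbation that destroys it. The parameter $\eps$ governs the width/number of terms (the $\sqrt\eps$ in the exponent suggests terms of width $\sim\sqrt{n}/\sqrt\eps$ or a corresponding rescaling of the number of terms), which is why the construction only works for $\eps\ge 1/\sqrt n$.

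The key steps, in order, are as follows. First, fix a base distribution of Talagrand-type DNFs $\boldf = \bigvee_j T_j$ where each term $T_j$ is a monotone conjunction of a random set of coordinates of size $w \approx \sqrt n$; standard anticoncentration/second-moment arguments show that such an $\boldf$ is balanced and that flipping a few coordinates in a few terms changes $\boldf$ on an $\Omega(1)$-fraction of the cube, giving the ``far'' guarantee. Second, for the yes-distribution, modify each term to force $x\le\overline y \Rightarrow f(x)\le \overline{f(y)}$: the cleanest way is to pin a common coordinate, i.e.~require a fixed variable $x_1$ (or a small fixed ``core'' of variables) to appear positively in every term, since any function all of whose satisfying assignments share a common $1$-coordinate is automatically intersecting — and the excerpt already notes that this forces intersectingness. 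We must check that pinning a small core does not destroy the farness-preserving richness of the construction; since the core has constant (or $O(\log)$) size this costs only a constant factor in distance. Third, for the no-distribution, break the common-core structure in a controlled, hard-to-detect way — e.g.~replace the pinned positive literal in each term by a random literal, or add a small family of ``bad'' terms that create a violating pair — and argue that the resulting $\boldf$ is $\eps$-far from intersecting: a repair would have to zero out an $\eps$-fraction of $\boldf^{-1}(1)$, which the anticoncentration of the DNF rules out. Fourth, prove indistinguishability: for any fixed non-adaptive query set $Q$ with $|Q| = 2^{o(n^{1/4}/\sqrt\eps)}$, the restriction of $\boldf$ to $Q$ has essentially the same distribution under $\Dyes$ and $\Dno$ — intuitively, a query point only ``sees'' the discrepancy between the two distributions if it happens to hit one of the few modified terms in a way that reveals the pinned-vs-random literal, and a union bound over $|Q|$ points times the tiny probability of such a coincidental hit stays $o(1)$ as long as $|Q|$ is sub-$2^{n^{1/4}/\sqrt\eps}$.

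The main obstacle will be Step four, the indistinguishability argument, which is where all monotonicity-style lower bounds do their real work. The subtlety is that a non-adaptive tester can query cleverly correlated points (e.g.~along lines or subcubes) designed to amplify the yes/no discrepancy, so one cannot simply argue pointwise; the standard remedy is to show that the random choice of terms (their coordinate sets and, in the no-case, their perturbations) is ``hidden'' from any small query set in a strong statistical sense — typically by exhibiting a coupling of $\Dyes$ and $\Dno$ that agrees on $Q$ except on a low-probability event, or by a direct total-variation bound on $\boldf|_Q$. Making the width $w$ and the number of terms scale correctly with $\eps$ so that the final query bound comes out to exactly $2^{\Omega(n^{1/4}/\sqrt\eps)}$ — rather than, say, $2^{\Omega(\sqrt n)}$ — will require care in how the perturbation magnitude trades off against detectability, and verifying the ``$\eps$-far'' claim for the no-distribution quantitatively (not just qualitatively) is the other place where real anticoncentration estimates for random DNFs are needed. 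I would expect the one-sided bound $2^{\Omega(\sqrt{n\log(1/\eps)})}$ to follow from a simpler version of the same construction, since a one-sided tester must actually exhibit a violating pair, so it suffices to argue that a small query set is unlikely to contain any violating pair of a yes-instance that has been minimally perturbed.
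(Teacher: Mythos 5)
The high-level frame of your proposal---Yao's principle, a Talagrand-style random DNF, a yes-distribution that is perfectly intersecting and a no-distribution that is $\eps$-far---matches the paper, and your instinct that the main work lies in the indistinguishability step is also right. But the concrete yes/no constructions you propose differ fundamentally from the paper's, and the specific mechanism you suggest does not close the gap you yourself identify.

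The core problem is the tension between \emph{hiding the perturbation} and \emph{being far from intersecting}, and neither of your suggested no-constructions resolves it. If you ``replace the pinned positive literal in each term by a random literal,'' the yes-functions have a global $1$-coordinate on every satisfying assignment and the no-functions do not; this structural gap is far too visible. Conversely, if you ``add a small family of bad terms,'' each term of width $w\approx\sqrt n$ touches only a $2^{-w}=2^{-\sqrt n}$ fraction of the cube, so any $o(\eps\cdot 2^{\sqrt n})$ bad terms can be removed to restore intersectingness at cost $\ll\eps$; to be $\eps$-far you would need a constant fraction of all terms to be bad, at which point the no-function is nothing like the yes-function. There is no intermediate regime for this style of perturbation. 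Your observation that ``a repair would have to zero out an $\eps$-fraction of $\boldf^{-1}(1)$'' begs the question: the whole difficulty is constructing a no-function where this is simultaneously true \emph{and} undetectable, and pinning/unpinning a common literal does not do it.

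What the paper actually does is quite different, and the difference is the missing idea. Following \cite{PRW22,CDLNS23}, the $n$ variables are split into a large set of \emph{control} variables and a small hidden set of \emph{action} variables of size $a=\sqrt n/\eps$. The Talagrand DNF lives only on the control variables, and for each input whose control bits uniquely satisfy some term $T_\ell$, the function's value is determined by a carefully designed function on the $a$-dimensional action subcube. The domain is also extended to $\{0,1\}^{n+2}$, and the two extra bits create two mirror-image ``halves'' of the cube; the yes-functions are intersecting not because of a pinned coordinate but because the two halves are defined with an anti-symmetry (via $g^{(+,0)}\equiv 0$ and $g^{(+,1)}$) that makes any two satisfying assignments traceable either to different halves or to complementary action-cube regions, forcing a shared $1$-coordinate (Lemma~\ref{lemma: Dyes is intersecting}). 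The no-functions are far from intersecting because the action cube is split so that, for every term with $\bb_\ell=1$, there are $\Omega(2^a)$ matched violating pairs in the top and bottom of the action cube; since an $\Omega(\eps)$-fraction of control strings uniquely satisfy a term, this yields $\Omega(\eps)$-farness (Lemma~\ref{lem:hehe2}). Crucially the farness is spread across \emph{every} term's action cube, not concentrated in a few ``bad terms,'' which is exactly what evades your repair argument.

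Finally, the distinguishing event in the paper is not ``hitting a modified term'' but a joint condition on \emph{two} query points: both must uniquely satisfy the same term $T_\ell$ on the control bits, while one has action-bit weight above $a/2+\sqrt a$ and the other below $a/2-\sqrt a$. The exponent $n^{1/4}/\sqrt\eps$ is precisely $\sqrt a$ with $a=\sqrt n/\eps$---it comes from the anti-concentration of the action-cube weights, a quantity that has no analogue in your construction. The proof that this event is rare (Lemma~\ref{lemma: Bad is unlikely}) trades off the chance that the randomly placed action variables capture many coordinates where $x$ and $y$ differ against the chance that $x$ and $y$ still co-satisfy a term, and this trade-off is what produces the $n^{1/4}/\sqrt\eps$ rate. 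Your proposal has no ingredient that would produce this exponent.
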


When $\eps=1/\sqrt{n}$, the lower bound of \Cref{thm:main-lb-intersecting} essentially matches the performance of our algorithm from \Cref{thm:main-alg}, and even when $\eps$ is a constant, \Cref{thm:main-lb-intersecting} gives a $2^{\Omega(n^{1/4})}$ lower bound.
In view of the similarity between the defining conditions for monotonicity and intersectingness (\Cref{eq:mono} and \Cref{eq:intersecting}), we view \Cref{thm:main-lb-intersecting} as a potentially surprising result.

By imposing a stricter one-sided error condition, we can establish a stronger lower bound which almost matches the one-sided algorithm from \Cref{thm:main-alg} even for constant $\eps$:

\begin{theorem} [One-sided lower bound for intersectingness]\label{thm:one-sided-lb-intersecting}
For $c>\eps \geq 2^{-n}$, any non-adaptive one-sided $\eps$-testing algorithm for whether an unknown $f: \zo^n \to \zo$ is intersecting versus $\eps$-far from intersecting must make $2^{\Omega(\sqrt{n \log(1/\eps)})}$ queries to $f$.
\end{theorem}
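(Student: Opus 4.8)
The plan is to exploit the structure of one-sided testers: any one-sided non-adaptive $\eps$-tester for intersectingness must, whenever it rejects, have already ``seen a proof'' of non-intersectingness, which by \Cref{eq:intersecting} means it must have queried a pair of points $x,y$ with $x \le \overline{y}$ and $f(x)=f(y)=1$ (equivalently, two satisfying assignments $S_1, S_2$ of $f$ that are disjoint). So a one-sided tester $T$ making $q$ queries on an input that is $\eps$-far from intersecting must, with probability $\ge 2/3$, have its query set $Q$ (of size $q$) contain such a disjoint witnessing pair. The strategy is therefore to construct a distribution $\mathcal{D}_{\mathrm{no}}$ over functions that are $\eps$-far from intersecting, yet for which the disjoint witnessing pairs are so ``spread out'' and ``rare'' that no fixed set $Q$ of size $2^{o(\sqrt{n\log(1/\eps)})}$ can hit a witnessing pair except with tiny probability over $\mathbf{f} \sim \mathcal{D}_{\mathrm{no}}$. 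Averaging over $\mathcal{D}_{\mathrm{no}}$ and using that the tester's query set is (for a non-adaptive tester) a fixed distribution over sets $Q$ independent of $\mathbf{f}$ then yields the contradiction. No ``yes'' distribution is needed: one-sidedness converts the problem into a purely combinatorial covering/hitting statement.

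The main work is the construction of $\mathcal{D}_{\mathrm{no}}$. I would take each $\mathbf{f}$ in the support to be (close to) the indicator of a ``dictator-like'' or ``junta-like'' family that has been locally perturbed on a pseudorandom set of inputs so as to plant a sparse, well-hidden collection of disjoint satisfying pairs. Concretely, a natural template: partition the coordinates into blocks of size $\approx \sqrt{n/\log(1/\eps)}$ (or a comparable scale chosen to balance the two competing requirements below), and let $\mathbf{f}(S)=1$ iff $S$ contains a ``selected'' coordinate from a randomly chosen structure, except that we additionally set $\mathbf{f}(S)=1$ on a random $\eps$-fraction of low-weight sets $S$ lying inside a single block — these low-weight sets can be made pairwise disjoint within a block, which is what forces $\eps$-farness from intersecting (to repair the family one would have to delete or add $\eps 2^n$ points, because each planted disjoint pair must be ``broken''). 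The two requirements to verify are: (i) \emph{$\eps$-farness}: any intersecting $g$ must differ from $\mathbf{f}$ on $\ge \eps 2^n$ points — this follows from a counting/packing argument showing the planted disjoint pairs form a large matching in the ``disjointness graph'' restricted to $f^{-1}(1)$, so any intersecting function loses a constant fraction of that matching's vertices; (ii) \emph{hardness of hitting}: for any fixed pair of points $x \le \overline y$, the probability over $\mathbf{f} \sim \mathcal{D}_{\mathrm{no}}$ that both $f(x)=f(y)=1$ \emph{and} this was a planted (rather than generic) coincidence is at most $2^{-\Omega(\sqrt{n\log(1/\eps)})}$, because the pair $(x,y)$ must be ``compatible'' with the random structure, which happens with exponentially small probability in the block size. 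A union bound over the $\binom{q}{2}$ pairs in $Q$ then shows $q$ must be $2^{\Omega(\sqrt{n\log(1/\eps)})}$. One subtlety to handle carefully: we must ensure that the \emph{generic} (non-planted) satisfying pairs of $\mathbf{f}$ are \emph{never} disjoint, i.e., the underlying un-perturbed family is genuinely intersecting (e.g.\ a dictatorship or a ``contains an element of a fixed small core'' family); otherwise the tester could reject without finding a planted witness.

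I expect requirement (ii) — quantitatively hiding the planted disjoint pairs so that the hitting probability is $2^{-\Omega(\sqrt{n\log(1/\eps)})}$ rather than merely $2^{-\Omega(\log(1/\eps))}$ or $n^{-\Omega(\log(1/\eps))}$ — to be the crux, and getting the right exponent will drive the choice of block size and the precise randomization. The tension is: making the blocks larger makes each individual planted pair harder to guess (good for (ii)) but reduces how many disjoint low-weight sets fit inside a block, which threatens (i); one must choose the scale $\approx\sqrt{n/\log(1/\eps)}$ (up to constants) so that a block of that size simultaneously (a) supports $\mathrm{poly}$-many pairwise-disjoint sets whose total measure, summed over the $\eps^{-1}$-many independently randomized blocks/structures, reaches $\Omega(\eps)2^n$, and (b) has $2^{\Omega(\sqrt{n\log(1/\eps)})}$ ``configurations,'' only one of which is realized, so that a fixed query pair is consistent with the realized configuration with probability $2^{-\Omega(\sqrt{n\log(1/\eps)})}$. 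Finally I would double-check the regime constraint $\eps \ge 2^{-n}$ is exactly what is needed for the block size $\sqrt{n/\log(1/\eps)}$ to be $\ge 1$ (i.e., for the construction to be non-degenerate), matching the theorem statement.
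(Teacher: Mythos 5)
Your high-level reduction is right and is exactly what the paper does: a one-sided non-adaptive tester can only reject after having queried an explicit $\textsf{I}$-violating pair (two disjoint $1$-inputs), so it suffices to build a single no-distribution $\Dno$ for which any fixed query set of the claimed size hits such a pair with $o(1)$ probability, then union-bound over pairs of queries; no yes-distribution or indistinguishability argument is needed. But the construction you sketch has gaps I do not see how to close. Your two stated requirements on the block size $B$ are inconsistent: $B=\sqrt{n/\log(1/\eps)}$ gives only $2^{\sqrt{n/\log(1/\eps)}}$ configurations, whereas you need $2^{\Omega(\sqrt{n\log(1/\eps)})}$, which would require $B=\Theta(\sqrt{n\log(1/\eps)})$. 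Planting disjoint pairs on ``low-weight sets inside a single block'' cannot force $\eps$-farness, since there are at most $2^B$ (or $(n/B)\cdot 2^B$ over all blocks) such sets, which is $\ll \eps 2^n$. And the ``intersecting base plus planted perturbation'' template has a universal-witness problem that you flag but do not solve: if the base is a dictator on coordinate $1$, the singleton $\{1\}$ is disjoint from every planted set $T\not\ni 1$, so a tester that queries $\{1\}$ and a few random points finds a violation cheaply; avoiding this forces a fundamentally different base.

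The paper's $\Dno$ avoids planting altogether and hides a different object. It draws a random set $\bA\subseteq[n]$ of size $n/100$ and, on $\zo^{n+2}$ (two extra bits ensure any two queries on the same ``side'' automatically intersect), sets $f=1$ exactly on those $x$ in the middle $\Theta(K)$ Hamming layers, with $K=\sqrt{n\ln(1/\eps)}$, that satisfy $|x_{\bA}|<n/200-K$. Farness is shown via a reverse Chernoff bound plus a bipartite perfect matching on the action cube, yielding $\Omega(\eps^{O(1)})\cdot 2^n$ vertex-disjoint violating pairs. The hitting bound is then a deviation estimate in $\bA$, not a configuration-guessing estimate: any potentially violating query pair reduces to comparable $x\le z$ in the middle layers, so they differ in at most $20K$ coordinates, yet a violation requires $|z_{\bA}|-|x_{\bA}|>2K$, i.e.\ the density-$1/100$ random set $\bA$ must hit at least $2K$ of those $\le 20K$ positions. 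This happens with probability at most $(10e/99)^{2K}=2^{-\Omega(\sqrt{n\log(1/\eps)})}$, and a union bound over $q^2$ pairs finishes. I'd suggest rebuilding your argument around hiding a $\Theta(n)$-size random action set with a threshold function on it, rather than a random block/junta structure --- this is the mechanism that naturally delivers the $\sqrt{n\log(1/\eps)}$ exponent.
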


Turning to union-closedness, the lower bound we give is not as strong as for intersectingness, but it is strong enough to rule out a $\poly(n,1/\eps)$-query non-adaptive algorithm, again even allowing two-sided error:

\begin{theorem} [Two-sided lower bound for union-closedness]\label{thm:two-sided-lb-UC}
For $c>\eps \geq 2^{-n^{0.49}}$, any non-adaptive $\eps$-testing algorithm for whether an unknown $f: \zo^n \to \zo$ is union-closed versus $\eps$-far from union-closed must make $n^{\Omega(\log(1/\eps))}$ queries to $f$.
\end{theorem}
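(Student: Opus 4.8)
The plan is to prove the lower bound via Yao's minimax principle. We will construct two distributions $\Dyes$ and $\Dno$ over functions $\zo^n \to \zo$, with $\Dyes$ supported on union-closed functions and $\Dno$ supported on functions that are $\eps$-far from union-closed, such that no deterministic non-adaptive algorithm making $q = n^{o(\log(1/\eps))}$ queries distinguishes a draw from $\Dyes$ from a draw from $\Dno$ with advantage better than $1/3$. Since a non-adaptive algorithm is specified by a fixed query set $Q \subseteq \zo^n$ with $|Q| = q$ together with a decision rule applied to the answers, it suffices to show that for every such $Q$,
\[
\dtv\!\left(\view_{\Dyes}(Q),\ \view_{\Dno}(Q)\right) \le 1/3,
\]
where $\view_{\calD}(Q)$ is the distribution of the answer vector $\big(f(x)\big)_{x \in Q}$ when $f$ is drawn from $\calD$.

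For the construction, set $w := \Theta(\log(1/\eps))$, so that $2^{-w} = \Theta(\eps)$; the hypothesis $\eps \ge 2^{-n^{0.49}}$ ensures $w \le n^{0.49}$, leaving ample room for the gadgets below. Both distributions share a common random \emph{scaffold}: a Talagrand-type random monotone DNF on $\zo^n$ whose terms are conjunctions of $\Theta(w)$ variables drawn from random coordinate blocks, with parameters tuned --- in the spirit of the $\UTal$ construction --- so that the scaffold is roughly balanced and so that a typical satisfying input is ``claimed'' by an essentially unique term. Being monotone, the scaffold is union-closed, and $\Dyes$ outputs it directly. To obtain $\Dno$ we \emph{corrupt} the scaffold locally: inside the subcube associated with one (or a few) randomly chosen term(s) --- i.e.\ where all of that term's variables are set to $1$ --- we overwrite the scaffold with a small non-union-closed gadget supported on an auxiliary random block of $\Theta(w)$ coordinates, designed so that (i) restricted to that subcube the gadget is $\Omega(1)$-far from every union-closed function, which (after adjusting the constant in $w$) gives $\duc(\fno) \ge \eps$; and, crucially, (ii) the gadget is ``locally balanced'' so that the value of $\fno$ at any single queried point is distributed identically to that of $\fyes$, and, more generally, no small and blindly chosen tuple of query points has a different joint answer-distribution under the two ensembles.

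The combinatorial half of the argument is to verify the two promised properties. That every $\fyes$ is union-closed is immediate from monotonicity. That every $\fno$ is $\eps$-far from union-closed uses the uniqueness feature of the scaffold: since restricting a union-closed function to the (union-closed) subcube of a corrupted term again yields a union-closed function, any union-closed $g$ must disagree with $\fno$ on an $\Omega(1)$ fraction of each corrupted subcube, and --- because distinct corrupted subcubes are essentially independent and their violations are not ``absorbed'' elsewhere --- these disagreements add up to $\duc(\fno) \ge \eps$.

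The information-theoretic half --- bounding $\dtv(\view_{\Dyes}(Q), \view_{\Dno}(Q))$ --- is the heart of the argument and the step I expect to be the main obstacle. The point is that a tuple of points of $Q$ can ``witness'' a corruption only if it is precisely aligned with the hidden term and auxiliary block: the points must lie in the term's subcube and differ exactly on the coordinates where the gadget fails to be monotone. Since the term and the auxiliary block are random width-$\Theta(w)$ objects chosen from among $n^{\Theta(w)}$ possibilities, a fixed constant-size tuple of points of $Q$ is so aligned with probability at most $n^{-\Omega(w)}$; a union bound over the $q^{O(1)}$ relevant tuples shows that with probability $1 - q^{O(1)} \cdot n^{-\Omega(w)} = 1 - o(1)$ no tuple witnesses anything, and on this event one can couple $\Dyes$ and $\Dno$ (agreeing on the scaffold randomness) so that the two answer vectors coincide. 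Making this rigorous is delicate: one must pin down precisely what a ``witness'' is in the presence of the scaffold's own randomness, and --- this is the crux --- one must ensure there is no cheap one- or two-query signature (for instance, a shift in $\Pr[f(x)=1]$ at an easily guessed point such as one near the all-$1$s point) that betrays the corruption without any alignment, while still keeping each corrupted subcube $\Omega(1)$-far from union-closed and the scaffold union-closed. Achieving this balance is exactly what the $\Tal$/$\UTal$ machinery is built for, and the bulk of the technical work lies here, together with the accompanying distance lower bound for $\Dno$.
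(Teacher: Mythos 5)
You correctly identify the overall framework (Yao's principle, a Talagrand-style random monotone DNF scaffold with a hidden $\Theta(\log(1/\eps))$-coordinate block, and an alignment argument giving an $n^{-\Omega(\log(1/\eps))}$ probability of detection), and you correctly flag the central tension — the gadget must be locally balanced so that no small tuple of blindly chosen queries has a different joint answer-distribution. But the concrete construction you sketch does not resolve that tension, and as written it fails at a basic level.

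The first gap is that you take $\Dyes$ to be the monotone Talagrand DNF ``directly.'' In the paper's construction, $\Dyes$ is \emph{also} a modified DNF: both distributions set $f=1$ where two or more terms are satisfied and $f=0$ where no term is satisfied, and then both fill in the unique-term region with a gadget on a hidden action block $\bA$ of size $a=\log(1/\eps)$. The yes-gadget is a singleton $\Indicator\{x_{\bA}=\bs_\ell\}$ and the no-gadget is $\bb_\ell\cdot\Indicator\{x_{\bA}\in\{\br_\ell,\overline{\br}_\ell\}\}$; both have expected density $2^{-a}$ in the action cube, which is what makes the marginals match. If instead $\Dyes$ were the raw DNF, it would be identically $1$ on the unique-term region while $\Dno$ would be mostly $0$ there, and a single uniformly random query would distinguish the two distributions trivially. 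Your ``local balance'' requirement rules out exactly the $\Dyes$ you propose.

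The second gap is quantitative. You describe $\Dno$ as corrupting ``one (or a few)'' term subcubes, each carrying a gadget that is $\Omega(1)$-far from union-closed when restricted to that subcube. Each such subcube has measure roughly $2^{-\sqrt{n}}$, so a few of them give $\duc(\fno)=O(2^{-\sqrt{n}})$, which is astronomically smaller than the target $\eps\geq 2^{-n^{0.49}}$. The paper gets $\Omega(\eps)$ distance because the corruption is applied to the action cube for \emph{every} control-string that uniquely satisfies a term — an $\Omega(1)$ fraction of control strings — and the per-action-cube distance is $\Theta(2^{-a})=\Theta(\eps)$ coming from the disjoint violating triples $\big((x_C,\br_\ell),(x_C,\overline{\br}_\ell),(x_C,1^a)\big)$. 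Relatedly, the specific antipodal-pair structure of the no-gadget is what makes the indistinguishability argument go: the $\Bad$ event is that two query points $x^i,x^j$ uniquely satisfy the same term with $x^i_{\bA}=\overline{x^j_{\bA}}$, which forces $\bA$ to lie entirely inside the coordinates where $x^i$ and $x^j$ disagree; this is what yields the $n^{-\Omega(a)}$ bound. Your proposal names the goal but does not supply this gadget or the antipodality mechanism, and you yourself flag this as the unresolved ``crux.'' So this is a correct high-level outline but with the essential construction — which is the actual content of the theorem — missing or incorrect.
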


As we discuss in \Cref{sec:discussion}, an interesting goal for future work is to narrow the gap between our algorithm and our lower bound for testing union-closed families.

\subsection{Techniques}

In this section, we give a technical overview of our main results, starting with the lower bounds. 

\medskip

\noindent {\bf Lower Bounds.}
Our two-sided lower bound for intersectingness, \Cref{thm:main-lb-intersecting}, builds on a lower bound approach for \emph{tolerant} monotonicity testing which was introduced in \cite{PRW22} and was recently quantitatively strengthened in \cite{CDLNS23}.
  As is standard for non-adaptive property testing lower bounds, \cite{PRW22} and \cite{CDLNS23} use Yao's minimax lemma and define  a ``yes''-distribution $\Dyes$  and a ``no''-distribution $\Dno$ over Boolean functions; in the rest of this discussion we focus chiefly on \cite{CDLNS23}.  A function $\boldf$ drawn from either of the \cite{CDLNS23} distributions $\Dyes$ or $\Dno$ is defined based on a random partition of the $n$ variables into a (large) set of ``control'' variables and a (small) set of ``action'' variables.  In both cases $\boldf \sim \Dyes$ or $\boldf \sim \Dno$, the definition of $\boldf$ involves a ``Talagrand DNF,'' $\bT=\bT_1 \vee \cdots \vee \bT_m,$ which is essentially a random monotone DNF formula  over the control variables.\footnote{The earlier work \cite{PRW22} used a different function over the control variables instead of a Talagrand DNF.}  The crucial assignments to $\boldf$ are the ones for which the control variables satisfy exactly one term $\bT_i$ of the Talagrand DNF; for such an input string $x$, the value of $\boldf$ then depends on the setting of the action variables, and the difference between $\boldf \sim \Dyes$ and $\boldf \sim \Dno$ comes from how the function is defined over the action variables in each case.  The values of the function on the action subcubes are carefully defined in such a way as to make it impossible for a testing algorithm to distinguish a “yes”-function from a “no”-function unless  it manages to query two inputs $x, x'$ which (i) both have their control variables set in such a way as to uniquely satisfy the same term $\bT_i$, but (ii) differ on ``many coordinates'' among the action variables:  essentially, one of $x,x'$ must have its vector of action bits landing in the top portion of the action subcube while the other one must have its vector of action bits landing in the bottom portion.  The crux of the non-adaptive lower bound of \cite{CDLNS23} is the tension between requirements (i) and  (ii): if  $x$ and $x'$ differ in too  many coordinates then it is difficult to satisfy (i), but if they differ in too few coordinates then it is difficult to satisfy (ii).

In the setting of monotonicity  testing, the  \cite{CDLNS23} construction's yes-functions are only close to, but not actually, monotone; their non-monotonicity essentially comes from assignments for which the vector of action bits lands in the middle portion of the action subcube.  This is why the  mildly exponential lower bound proved in  that paper only holds for \emph{tolerant} monotonicity  testing (indeed, the  existence of highly efficient monotonicity  testers  \cite{GGLRS,CWX17stoc,KMS18} implies that quantitatively strong lower bounds such as those of \cite{CDLNS23} are  impossible for  ``standard'' non-tolerant monotonicity testing).  The main component of our lower bound for intersectingness in this paper is a careful modification of the \cite{CDLNS23} construction; we show that, perhaps surprisingly, for the modification that we introduce, the yes-functions have satisfying assignments which form a \emph{perfectly} intersecting family,  while the no-functions are far from intersecting. We thus obtain a quantitatively strong lower bound, similar to \cite{CDLNS23}, already for the ``standard'' testing problem of intersectingness rather than the more challenging tolerant version.

Our $2^{\Omega(\sqrt{n \log(1/\eps)})}$-query one-sided lower bound for intersectingness, \Cref{thm:one-sided-lb-intersecting}, takes a related but somewhat simpler approach.  In a nutshell, since for one-sided lower bounds it is not necessary to give a yes-distribution and establish indistinguishability of yes-functions and no-functions, it turns out that we can dispense with the Talagrand DNF part of the construction. Instead, our construction ``hides'' a randomly chosen ``small'' set of action bits in a simpler way (see \Cref{sec:intersecting-one-sided} for details); since we do not need to use the Talagrand DNF, it turns out that we can have the ``small'' set of action bits be larger than in our intersectingness lower bound, and this lets us obtain a quantitatively stronger lower bound.

Finally, our $n^{\Omega(\log(1/\eps))}$-query two-sided lower bound for union-closedness, like our two-sided intersectingness lower bound, uses the framework of control bits and action bits with a Talagrand DNF over the control bits. This construction uses a somewhat different definition of the yes- and no- functions over the action bits, which now ensures that a testing algorithm can distinguish yes-functions from no-functions only if it manages to query two inputs whose control variables satisfy the same term $\bT_i$ but whose action variables are set to two particular \emph{antipodal} assignments in the action cube.   For this construction we use many fewer action bits than in the earlier construction (and the quantitative lower bound obtained is correspondingly weaker than the lower bound of the earlier construction); this is because in our no-functions, the distance to union-closedness is inverse exponential in the dimension of the action cubes. 

\medskip

\noindent {\bf Algorithms.} Our algorithms for testing intersectingness and for testing union-closedness are similar at a high level; for conciseness we only describe the algorithm for testing union-closedness.  

As is standard for testing algorithms, we consider the two possible scenarios. In the ``yes" case, the given function $f$ is union-closed. In the ``no" case, the function $f$ is $\epsilon$-far in Hamming distance from any union-closed function.

At a conceputal level, the first simplification is as follows: given $f$, we can define a truncated version of $f$, call it $f_{\mathsf{trunc}}$ as follows: for any $x$ such that $| x | \in  [n/2 - T, n/2 +T]$ where $T = \sqrt{n \log(4/\epsilon)}$, $ f_{\mathsf{trunc}}(x) = f(x)$. If $| x | > n/2 + T$, we set $ f_{\mathsf{trunc}}(x) = 1$ and if $| x | < n/2 - T$, we set $f_{\mathsf{trunc}}(x)=0$. In other words, $f_{\mathsf{trunc}}$ is obtained by keeping it the same as $f$ in the middle $2T$ layers; otherwise, it is set to $1$ in the layers above the middle layers and $0$ below it. Since all but $\epsilon/2$ fraction of the mass of the discrete cube lies in the layers $[n/2-T, n/2+T]$, the following is immediate: (i) if $f$ is union-closed, so is $f_{\mathsf{trunc}}$; (ii) if $f$ is $\epsilon$-far from union-closed, $f_{\mathsf{trunc}}$ is also $\epsilon/2$-far from union-closed (\Cref{claim:zeroing}). 
The above property of $f_{\mathsf{trunc}}$ ensures that instead of working with $f$, the algorithm can instead work with $f_{\mathsf{trunc}}$.

 Now, the main idea behind the algorithm is to search for {\em violations of union-closedness}. In this sense, our algorithm is similar in spirit to algorithms for monotonicity testing~\cite{GGLRS, CS13a, KMS18} which search for violations of monotonicity. In particular, we call a sequence $(x_1, \ldots, x_k, x_1 \cup \ldots \cup x_k)$ a {\em union-closed violating tuple} if $f(x_1) = \ldots = f(x_k)=1$ and $f(x_1 \cup \ldots \cup x_k)=0$ -- we will abbreviate this as a $\mathsf{UC}$-violating tuple.  Note that if the algorithm finds a union-closed violating tuple in $f$, then it is a certificate for $f$ not being union-closed. 
 
 The main technical lemma we prove is that if  $f$ is $\epsilon$-far from union closed, then it has at least $\epsilon \cdot 2^n$ $\mathsf{UC}$-violating tuples which are end-disjoint. This means that for any two such  tuples $(x_1, \ldots, x_k, x_1 \cup \ldots \cup x_k)$ and $(y_1, \ldots, y_k, y_1 \cup \ldots \cup y_k)$, the last coordinate $(x_1 \cup \ldots \cup x_k) \not = (y_1 \cup \ldots \cup y_k)$. The proof of this lemma 
is quite simple -- essentially, we show that the function $f$ can be changed to a union closed function by only modifying it at points which are the last coordinate of a $\mathsf{UC}$-violating tuple. Given this lemma, it follows that $f$ must have at least $\epsilon \cdot 2^n$ end-disjoint $\mathsf{UC}$-violating tuples. Since $f$ and $f_{\mathsf{trunc}}$ are $\epsilon/2$-close to each other, it follows that $f_{\mathsf{trunc}}$ also has at least $\epsilon/2 \cdot 2^n$ end-disjoint $\mathsf{UC}$-violating tuples.

We next observe that a $\mathsf{UC}$-violating tuple 
$(x_1, \ldots, x_k, x_1 \cup \ldots \cup x_k)$ 
for 
$f_{\mathsf{trunc}}$ is such that
(i) for each $1 \le i \le k$,  $|| x_i | -n/2| \le T$; (ii) $| | x_1\cup \ldots \cup x_k |-n/2| \le T$. Let us call a point $x=x_1 \cup \ldots \cup x_k$ a witness if there is a $\mathsf{UC}$-violating tuple 
$(x_1, \ldots, x_k, x_1 \cup \ldots \cup x_k)$ satisfying the above conditions. 
From the fact that $f_{\mathsf{trunc}}$ also has at least $\epsilon/2 \cdot 2^n$ end-disjoint $\mathsf{UC}$-violating tuples, it follows that there are at least $\epsilon/2 \cdot 2^n$ points which are a witness. 

 Our algorithm now proceeds as follows: We sample a random point $\bx \in \{0,1\}^n$ conditioned on $|| \bx | - n/2| \le T$. Next, we query $f$ on $\bx$ as well as all the points in the set $\bx_{\downarrow} := \{y \le \bx: || y |  - n/2| \le T\}$. We then check if there are any points $y_1, \ldots, y_k \in \bx_{\downarrow}$ such that $(y_1, \ldots, y_k, \bx)$ is a $\mathsf{UC}$-violating tuple. Note that if $f$ is union-closed, then the algorithm is certainly not going to find a $\mathsf{UC}$-violating tuple, i.e., it has perfect completeness. On the other hand, if $f$ is at least $\epsilon$-far from union closed, then the point $\bx$ sampled above is a witness with probability $\epsilon/2$. If $\bx$ is a witness then since we are querying every point in $\bx_{\downarrow}$, the algorithm is going to find a $\mathsf{UC}$-violating tuple. 
 
 Thus, repeating the above procedure say $100/\epsilon$ times, the algorithm will still have perfect completeness. On the other hand, if $f$ is $\epsilon$-far from union-closed, it is going to find a $\mathsf{UC}$-violating tuple with probability at least $0.9$. The query complexity of the algorithm is given by $O(1/\epsilon) \cdot|\bx_{\downarrow}|$. As $|\bx_{\downarrow}|$ is uniformly bounded by $n^{O(\sqrt{n \log (1/\epsilon)})}$, this establishes the upper bound on the query complexity of our algorithm.  (While the algorithm described above is not a ``triple tester,'' an easy modification of the algorithm and its analysis yields a triple tester with similar query complexity; see \Cref{sec:ubs}.)

\subsection{Related Work}

As mentioned earlier, some of the technical specifics of our lower bound constructions build off of the tolerant testing lower bounds of \cite{PRW22} and \cite{CDLNS23}; in particular, the idea, first introduced by \cite{PRW22}, of ``hiding'' a set of action variables among the entire set of input variables was a significant influence on the lower bound constructions of the current paper.  More generally, the entire broad literature on monotonicity testing of Boolean functions (i.e.~testing upward-closed set systems) provided the conceptual backdrop for a study of the testability of other types of combinatorial finite set systems.

We note that the recent work of Filmus et al.~\cite{FLMM20} (see also \cite{CFMMS22}) studies the problem of ``AND-testing,'' which at first glance may seem to be related to the problems we consider.  The ``AND-property'' is that of satisfying the implication
\begin{equation} \label{eq:and-testing}
z = x \cap y \implies f(z) = f(x) \wedge f(y)
\end{equation}
for \emph{every} $x,y \in \zo^n$; the
main result of \cite{FLMM20}, roughly speaking, is that the only functions which have a high probability of satisfying \Cref{eq:and-testing}
for uniform random $x,y$ are functions which are close to being either a constant-function or an AND of some subset of the $n$ input variables.  

Despite the superficial resemblance between \Cref{eq:implication-UC} and \Cref{eq:and-testing}, it turns out that the AND-property and the properties we consider are of quite different character from each other.  To see this, observe that the only functions $f: \zo^n \to \zo$ which perfectly satisfy the AND-property are constant functions and AND-functions; hence there are only $O(2^n)$ many possible yes-functions, and every yes-function must have a very precise and rigid structure (and a very simple description). This is quite different from the intersectingness and union-closedness properties we study; each of these properties has $2^{2^{\Theta(n)}}$ many yes-functions, and hence yes-functions do not need to be so highly structured (and by standard counting arguments almost all yes-functions require highly complex descriptions).  As another point of difference, the \cite{FLMM20} result mentioned above implies that there is an $O_\eps(1)$-query non-adaptive one-sided tester for the AND-property. In contrast, our \Cref{thm:two-sided-lb-UC} shows that even two-sided non-adaptive testers for the property of union-closedness must have a query complexity which not only depends on $n,$ but in fact is at least $n^{\Omega(\log(1/\eps))}.$


\section{Preliminaries}
\label{sec:prelims}

We will write 
\[{[n]\choose k} :=  \cbra{S\sse[n] : |S| = k}\]
to denote the collection of all $k$-element subsets of $[n]$, and for a subset $I \subseteq [n]$ we will write ${[n]\choose I}$ to denote $\cup_{j \in I} {[n]\choose j}$. 
We will denote the $0/1$-indicator of an event $A$ by $\mathbf{1}\cbra{A}$. All probabilities and expectations will be with respect to the uniform distribution over the relevant domain unless stated otherwise. We use boldfaced letters such as $\bx, \boldf$, and $\bA$ to denote random variables (which may be real-valued, vector-valued, function-valued, or set-valued; the intended type will be clear from the context).
We write $\bx \sim \calD$ to indicate that the random variable $\bx$ is distributed according to probability distribution $\calD$. 

\begin{notation}
	Given a string $x\in\zo^n$ and a set $A\sse[n]$, we write $x_A \in \zo^{A}$ to denote the $|A|$-bit string obtained by restricting $x$ to coordinates in $A$, i.e. $x_A := (x_i)_{i\in A}$, and we write $|x|$ to denote the number of 1's in $x$.
\end{notation}

We will frequently  view strings in $\zo^n$ as subsets of $[n]$ and vice versa; i.e.~for $x,y \in \zo^n$ we refer to ``$x \cap y$'' to mean the string in $\zo^n$ which has a 1 in coordinate $i$ iff $x_i=y_i=1.$

Given two Boolean functions $f,g\isazofunc$, we define the \emph{distance} between $f$ and $g$ (denoted by $\dist(f,g)$) to be the normalized Hamming distance between $f$ and $g$, i.e. 
\[\dist(f,g) := \Prx_{\bx\sim\zo^n}\sbra{f(\bx)\neq g(\bx)}.\]
A \emph{property} $\calP$ is a collection of Boolean functions; we say that a function $f\isazofunc$ is \emph{$\eps$-far from the property $\calP$} if 
\[\dist(f,\calP) := \min_{g\in\calP} \dist(f,g)\geq \epsilon.\]

\subsection{Lower Bounds for Testing Algorithms}
\label{subsec:testing-lb-prelims}

Our query-complexity lower bounds for testing algorithms are obtained via Yao's minimax principle~\cite{Yao:77}, which we recall below. (We remind the reader that an algorithm for the problem of $\eps$-property testing is correct on an input function $f$ provided that it outputs ``yes'' if $f$ \emph{perfectly} satisfies the property and outputs ``no'' if $f$ is $\eps$-far from the property; if the distance to the property is strictly between $0$ and $\eps$ then the algorithm is correct regardless of what it outputs.)
\begin{theorem}[Yao's principle] \label{thm:yao-minimax}

	To prove a $q$-query lower bound on the worst-case query complexity of any non-adaptive {randomized} testing algorithm, it suffices to give a distribution $\calD$ on instances
	such that for any $q$-query non-adaptive \emph{deterministic} algorithm $\calA$, we have 
	\[\Prx_{\bm{f}\sim \calD}\big[\calA\text{ is correct on }\bm{f}\big]\leq 99.9\%.\] 
	Here $99.9\%$ can be replaced by any universal constant in $[0,1)$. 
\end{theorem}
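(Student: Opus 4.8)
The plan is to run the standard averaging argument behind Yao's principle, specialized to the property-testing setup. First I would fix an arbitrary non-adaptive randomized algorithm $\calA$ that makes at most $q$ queries, and recall that any such algorithm can be written as a distribution over non-adaptive \emph{deterministic} $q$-query algorithms: letting $\mathbf{r}$ denote $\calA$'s internal coin tosses---which may be sampled all at once before any queries are made, without loss of generality since $\calA$ is non-adaptive---conditioning on $\mathbf{r}=r$ produces a deterministic non-adaptive algorithm $\calA_r$ that still makes at most $q$ queries. Thus for every input function $f$ we have $\Prx[\calA\text{ correct on }f]=\Prx_{\mathbf{r}}[\calA_{\mathbf{r}}\text{ correct on }f]$, where ``correct'' is the event described just before the theorem statement.

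Next I would argue by contradiction. Suppose $\calA$ were a valid $\eps$-tester, so that $\Prx_{\mathbf{r}}[\calA_{\mathbf{r}}\text{ correct on }f]>99.9\%$ for \emph{every} function $f$ (here I use the flexibility in the constant to match whatever success threshold is built into the definition of ``tester''). Averaging this inequality over $\boldf\sim\calD$ gives $\Prx_{\boldf\sim\calD,\,\mathbf{r}}[\calA_{\mathbf{r}}\text{ correct on }\boldf]>99.9\%$. Since $\boldf$ and $\mathbf{r}$ are independent, this joint probability can be computed by averaging over $\mathbf{r}$ first and then over $\boldf$, \emph{or} the other way around; doing it the other way around shows that the average over $\mathbf{r}$ of the quantity $\Prx_{\boldf\sim\calD}[\calA_{\mathbf{r}}\text{ correct on }\boldf]$ exceeds $99.9\%$. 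An average of numbers in $[0,1]$ that exceeds $99.9\%$ must have a term exceeding $99.9\%$, so there is a fixed seed $r^{\star}$ with $\Prx_{\boldf\sim\calD}[\calA_{r^{\star}}\text{ correct on }\boldf]>99.9\%$. But $\calA_{r^{\star}}$ is a deterministic non-adaptive $q$-query algorithm, contradicting the hypothesis on $\calD$. Hence no non-adaptive randomized algorithm making at most $q$ queries is a valid $\eps$-tester, which is precisely a $q$-query lower bound. That $99.9\%$ may be replaced by any universal constant $c\in[0,1)$ is immediate, as the argument never used the specific value; and in the standard convention where a tester need only succeed with probability $2/3$ on each input, one either runs the argument directly with $c=2/3$, or first boosts the tester's success probability to $99.9\%$ by $O(1)$ independent repetitions followed by a majority vote, inflating the query count by only a constant factor and therefore not affecting any of the asymptotic lower bounds proved in this paper.

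There is essentially no hard step here; the only points needing a little care are (i) the passage from a randomized algorithm to a mixture of deterministic ones while preserving both non-adaptivity and the bound of $q$ queries, and (ii) the bookkeeping that relates the constant in the statement to the success probability demanded by the definition of a tester. Because both are routine, in the body of the paper we will simply invoke \Cref{thm:yao-minimax} and devote all of our effort to the genuine work: designing the distributions $\Dyes$ and $\Dno$ (so that $\calD$ is, for instance, the equal mixture $\frac{1}{2}\Dyes+\frac{1}{2}\Dno$) and proving that no deterministic non-adaptive $q$-query algorithm distinguishes them with advantage more than $0.1\%$.
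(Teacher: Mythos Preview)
Your proof is correct and is the standard averaging argument behind Yao's principle. The paper, however, does not prove this theorem at all: it states it as a known result with a citation to~\cite{Yao:77} and then simply invokes it, so there is no ``paper's own proof'' to compare against. Your write-up is fine as a self-contained justification, but for the purposes of this paper the theorem is treated as background.
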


\subsection{Talagrand's Random DNF} 
\label{subsec:kane-and-talagrand}

We define a useful distribution over Boolean functions that will play a central role in the proofs of our lower bounds. The construction is a slight generalization of a distribution over DNF (disjunctive normal form) formulas that was constructed by Talagrand~\cite{Talagrand:96}.  The generalization we consider, which was also studied in \cite{CDLNS23}, is that we allow a parameter $\eps$ to control the size of each term and the number of terms; the original construction corresponds to $\eps=1$.

\begin{definition}[Talagrand's random DNF] \label{def:talagrand}
Let $\eps\in (0,1]$ and let $L\coloneqq 0.1\cdot2^{\sqrt{n}/\epsilon}$. Let 
$\Tal(n, \epsilon)$ 
be the following distribution on ordered tuples of $L$ monotone terms: for each $i=1,\dots,L$, the $i$-th term is obtained by independently drawing a set $\bT_i\sse [n]$ where each set $\bT_i$ is obtained by drawing $\sqrt{n}/\eps$ elements of $[n]$ independently and with replacement.  We use $\bm{T}$ to denote the ordered tuple $\bm{T}=(\bm{T}_1,\cdots,\bm{T}_L)$ which is a draw from $\Tal(n, \epsilon)$.
		Then a ``Talagrand DNF'' is given by 
		\[\boldf(x) = \bigvee_{\ell=1}^{L} \pbra{\bigwedge_{j\in \bT_{\ell}} x_{j}}.\]
\end{definition}

It is clear that any Talagrand DNF obtained by a draw from $\Tal(n, \epsilon)$ is a monotone function. 

We will frequently view $T_{i}\subseteq [n]$ as the term $\bigwedge_{j\in T_{i}} x_{j}$, where we say $T_i(x)=1$ if and only if $x_{j}=1$ for all $j\in T_{i}$. We may also write $T=(T_1,\cdots,T_k)$ to represent a DNF, which is defined by the disjunction of the terms $T_{i}$. We will often be interested in the probability of a random input $\bx\sim\zo^n$ satisfying a unique term $\bT_i$ in a Talagrand DNF; towards this, we introduce the following notation:

\begin{notation} \label{notation:term-count}
	Given a DNF $T=(T_1,\cdots,T_k)$ where each $T_i$ is a term, we define the collection of \emph{terms of $T$ satisfied by $x$}, written $S_T(x)$, as 
	\[S_T(x) := \cbra{\ell\in [k] : T_\ell(x) = 1}.\]
\end{notation}

The following claim shows that on average over the draw of $\bm{T}\sim\Tal(n,\eps)$, an $\Omega(\eps)$ fraction of strings from $\zo^n$ satisfy a \emph{unique} term in the Talagrand DNF (i.e. $|S_{\bm{T}}(x)| = 1$ for $\Omega(\eps)$-fraction of $x\in\zo^n$). We note that an elegant argument of Kane~\cite{Kane13monotonejunta} gives this for $\eps=\Theta(1)$, but this argument does not extend to the setting of small $\eps$ which we require. The proof below appears in \cite{CDLNS23}; we repeat the argument here for the reader's convenience.

\begin{proposition} \label{prop:talagrand-unique-property}
For $\eps\in (0,1]$, let $\bm{T}\sim\Tal(n,\eps)$ be as in \Cref{def:talagrand}. Then 
	\[\Prx_{\bm{T}, \bx}\sbra{|S_{\bm{T}}(\bx)| = 1} = \Omega\left(\max \{\eps,1/\sqrt{n}\}\right).\]
\end{proposition}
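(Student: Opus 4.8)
The goal is to show that $\Prx_{\bm{T},\bx}[|S_{\bm{T}}(\bx)| = 1] = \Omega(\max\{\eps, 1/\sqrt{n}\})$. Since we may assume $\eps \geq 1/\sqrt{n}$ throughout (else $\eps < 1/\sqrt{n}$ and the term size $\sqrt{n}/\eps > n$ would be degenerate; in any case the $1/\sqrt{n}$ bound is what matters when $\eps$ is tiny, and can be derived by comparison with the $\eps = 1/\sqrt{n}$ case via monotonicity-type arguments), the main task is to prove the bound is $\Omega(\eps)$. I would condition on the draw of $\bx$ and study the number of satisfied terms $|S_{\bm{T}}(\bx)| = \sum_{\ell=1}^L \mathbf{1}\{\bT_\ell(\bx) = 1\}$ as a sum of $L$ i.i.d. indicators over the random choice of terms. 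For a string $x$ with $|x| = w$, the probability that a single random term $\bT_\ell$ (formed by $\sqrt{n}/\eps$ independent draws with replacement) is satisfied is exactly $(w/n)^{\sqrt{n}/\eps}$. So writing $p(w) := (w/n)^{\sqrt{n}/\eps}$, the count $|S_{\bm{T}}(x)|$ is $\mathrm{Bin}(L, p(w))$.

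\textbf{Key steps.} First, I would identify the ``sweet spot'' layer(s) of the cube: we want $w$ such that $L \cdot p(w) = \Theta(1)$, i.e. $p(w) = \Theta(1/L) = \Theta(2^{-\sqrt{n}/\eps})$, which means $w/n = \Theta(1/2)$ after taking $(\sqrt{n}/\eps)$-th roots — more precisely $w = n/2 + O(\sqrt{n})$ for an appropriate constant, so that $(w/n)^{\sqrt n/\eps} = (1/2 + O(1/\sqrt n))^{\sqrt n /\eps}$ hovers around a constant multiple of $1/L$. Second, a standard anticoncentration / binomial computation: if $\bY \sim \mathrm{Bin}(L, p)$ with $Lp = \lambda = \Theta(1)$, then $\Pr[\bY = 1] = Lp(1-p)^{L-1} = \Theta(\lambda e^{-\lambda}) = \Theta(1)$. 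So for any fixed $x$ in this good layer, $\Pr_{\bm{T}}[|S_{\bm{T}}(x)| = 1] = \Omega(1)$. Third, I would lower-bound the uniform-measure of the set of good layers: by standard binomial/Gaussian estimates, the fraction of $x \in \zo^n$ with $|x| \in [n/2 - c\sqrt{n}, n/2 + c\sqrt{n}]$ is $\Omega(1)$ for a constant $c$, but to get the $\Omega(\eps)$ bound I actually need a \emph{narrower} band whose width scales with $\eps$: the set of $w$ for which $L p(w) \in [1/10, 10]$ (say) corresponds to $w/n \in [1/2 - \Theta(\eps/\sqrt{n}), 1/2 + \Theta(\eps/\sqrt{n})]$, i.e. an interval of $w$-values of width $\Theta(\eps \sqrt{n})$ centered near $n/2$, which carries uniform mass $\Theta(\eps\sqrt{n}) \cdot \Theta(1/\sqrt{n}) = \Theta(\eps)$ since each near-middle layer has mass $\Theta(1/\sqrt n)$. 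Fourth, combining:
\[
\Prx_{\bm{T},\bx}[|S_{\bm{T}}(\bx)| = 1] \;\geq\; \Prx_{\bx}[\,|\bx| \text{ is a good layer}\,] \cdot \min_{x \text{ good}} \Prx_{\bm{T}}[|S_{\bm{T}}(x)| = 1] \;=\; \Omega(\eps) \cdot \Omega(1) \;=\; \Omega(\eps).
\]
Finally, the $\Omega(1/\sqrt n)$ bound: even when $\eps$ is very small, one can restrict attention to the single middle layer(s) $|x| \in \{n/2, n/2 \pm 1\}$ — wait, that does not immediately work since $p(n/2) = 2^{-\sqrt n/\eps}$ and $L p(n/2) = 0.1$, which is $\Theta(1)$, so in fact the single middle layer already gives $\Pr_{\bm T}[|S_{\bm T}(x)| = 1] = \Omega(1)$ and that layer has mass $\Theta(1/\sqrt n)$, yielding $\Omega(1/\sqrt n)$ directly and uniformly in $\eps$. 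So really both bounds come out of the same computation, with the $\eps$-wide band giving the $\Omega(\eps)$ term.

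\textbf{Main obstacle.} The delicate point is controlling $p(w) = (w/n)^{\sqrt n/\eps}$ as $w$ ranges over the band near $n/2$: the exponent $\sqrt n/\eps$ is large, so $p(w)$ is extremely sensitive to $w$, and I need to verify that the band of $w$ where $Lp(w) = \Theta(1)$ genuinely has width $\Theta(\eps\sqrt n)$ (not smaller). Writing $w = n/2 + t$, we have $\log p(w) = (\sqrt n/\eps)\log(1/2 + t/n) = (\sqrt n /\eps)(-\log 2 + 2t/n + O(t^2/n^2))$, so $\log(L p(w)) = \log(0.1) + (2/\eps)\cdot(t/\sqrt n) + O(t^2/(\eps n^{3/2}))$; for $|t| \leq C\eps\sqrt n$ the linear term is $O(1)$ and the quadratic correction is $O(\eps/\sqrt n) = o(1)$, so indeed $Lp(w) = \Theta(1)$ on an interval of length $\Theta(\eps \sqrt n)$ in $t$. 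The other mild technical care needed is handling the regime $\eps$ close to $1$ (where $\eps\sqrt n$ might exceed the $\Theta(\sqrt n)$ natural scale of a constant-mass band — but then $\Omega(\eps) = \Omega(1)$ and we just use a constant-mass band) and confirming that the error from replacing ``$\bT_\ell$ drawn with replacement'' never causes $p(w)$ to deviate from $(w/n)^{\sqrt n/\eps}$ (it is exactly equal, since with-replacement sampling of the coordinates makes $\bT_\ell(x) = 1$ iff every one of the $\sqrt n/\eps$ draws lands in the support of $x$, an event of probability exactly $(|x|/n)^{\sqrt n/\eps}$). Since the proposition says the proof already appears in \cite{CDLNS23}, I would present this as the natural binomial-anticoncentration argument and not belabor the constants.
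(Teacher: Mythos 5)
Your proposal is correct and follows essentially the same approach as the paper: identify the band of Hamming weights near $n/2$ of width $\Theta(\eps\sqrt{n})$ on which $L\cdot p(w)=\Theta(1)$, show that any fixed $x$ in that band satisfies a unique term with constant probability, and note that the band has uniform mass $\Omega(\max\{\eps,1/\sqrt n\})$. The only (cosmetic) difference is in the anticoncentration step: the paper bounds $\Pr[|S_{\bm T}(x)|\ge 2]$ via Markov on $\E[|S_{\bm T}(x)|]$ and bounds $\Pr[|S_{\bm T}(x)|=0]$ by a direct $(1-p)^L$ computation, then takes a union bound, whereas you invoke the exact binomial point-mass formula $\Pr[\bY=1]=Lp(1-p)^{L-1}$; both give $\Omega(1)$. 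One small caution: the paper deliberately uses the one-sided band $|x|\in[n/2,\,n/2+0.05\eps\sqrt n]$ so that $|x|\ge n/2$ forces $Lp\ge 0.1$ and $|x|\le n/2+0.05\eps\sqrt n$ forces $Lp\le 0.1e^{0.1}$; your two-sided band is also fine, but its lower half requires separately checking that $Lp$ does not drop too far below $0.1$, which your Taylor expansion does handle, though it is slightly more delicate than what is needed.
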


\begin{proof}
	Note that \Cref{prop:talagrand-unique-property} is immediate if the following holds: For every string $x\in\zo^n$ with $|x| \in [n/2, n/2 + 0.05\eps\sqrt{n}]$,\footnote{Note that when $0.05\eps\sqrt{n}<1$, only strings $x\in \{0,1\}^n$ with $|x|=n/2$ are considered.} we have 
	\begin{equation} \label{eq:tal-goal}
		\Prx_{\bm{T}}\sbra{|S_{\bm{T}}(x)| = 1} = \Omega(1).
	\end{equation}
	This is because a straightforward application of the Chernoff bound, and the well-known middle binomial coefficient bound ${n \choose n/2}/2^n = \Theta(1/\sqrt{n})$, together imply that 
	\[\Prx_{\bx}\sbra{|x| \in [n/2, n/2 + 0.05\eps\sqrt{n}]} = \Omega\left(\max\{\epsilon,1/\sqrt{n}\}\right).\]
	
We prove (\ref{eq:tal-goal}) in the rest of the proof.
Fix such an $x\in\zo^n$ and let $\bT_i$ be one of the $0.1\cdot2^{\sqrt{n}/\epsilon}$ terms of $\bm{T}\sim\Tal(n,\eps)$. Recalling that $\bT_i$ consists of $\sqrt{n}/\eps$ many variables (with repetition), we have 
	\begin{align*}
		\Prx_{\bT_i}\sbra{\bT_i(x) = 1} &\leq \pbra{\frac{1}{2} + \frac{0.05\epsilon}{\sqrt{n}}}^{\sqrt{n}/\epsilon} =\pbra{\frac{1}{2} + \frac{0.1\epsilon}{2\sqrt{n}}}^{\sqrt{n}/\epsilon} \leq 2^{-\sqrt{n}/\epsilon}\exp\pbra{0.1}.
	\end{align*}
	where in the second inequality we used the fact that $1+x\leq e^x$ for all $x\in\R$. It follows by the linearity of expectation that for any $x$ as above, we have
	\[\Ex_{\bm{T}}\sbra{|S_{\bm{T}}(x)|} = \sum_{i=1}^{0.1\cdot2^{\sqrt{n}/\epsilon}} \Prx_{\bT_i}\sbra{\bT_i(x) = 1} \leq 0.1\cdot\exp(0.1) < 0.12.\]
	Markov's inequality then implies that 
	\begin{equation} \label{eq:tal-proof-markov}
		\Prx_{\bm{T}}\sbra{|S_{\bm{T}}(x)| \geq 2}\leq \frac{1}{2}\cdot\Ex_{\bm{T}}\sbra{|S_{\bm{T}}(x)|} \leq 0.06.
	\end{equation}
On the other hand, since $|x| \geq n/2$, we have 
	\begin{align} 
		\Prx_{\bm{T}}\sbra{|S_{\bm{T}}(x)| = 0} 
		\le \left(1-\left(\frac{1}{2}\right)^{\sqrt{n}/\epsilon}\right)^{0.1\cdot 2^{\sqrt{n}/\epsilon}}
		\le \exp(-0.1)<0.91,\label{eq:tal-proof-unsat-prob}
	\end{align}
	where the second inequality used again $1+x\le e^x$.
	Combining \Cref{eq:tal-proof-markov,eq:tal-proof-unsat-prob}, we get that 
	\[\Prx_{\bm{T}}\sbra{|S_{\bm{T}}(x)| = 1} >0.03,\]
	thus establishing \Cref{eq:tal-goal}, which in turn completes the proof.
\end{proof}


\def\Dy{\Dyes}
\def\Dn{\Dno}
\def\prob{\textrm{Pr}}
\def\ManyOne{\textsf{ManyOne}}
\def\GoodTalagrand{\textsf{GoodTalagrand}}
\def\Bad{\textsf{Bad}}

\section{Lower Bounds for Testing Intersecting Families}
\label{sec:intersecting}

We now present our lower bound for two-sided non-adaptive testers for intersecting families.
 As mentioned earlier, the construction builds closely on the earlier constructions of \cite{PRW22,CDLNS23} which were used in those papers for tolerant testing lower bounds.

Let $\eps\in(0,c]$ be a parameter with $c>\eps\geq c_0/\sqrt{n}$ for some sufficiently large constant $c_0$ and sufficiently small constant $c>0$. We start with some objects that we need in the construction of the two distributions $\Dyes$ and $\Dno$. 
We partition the variables $x_1,\cdots,x_n$ into \emph{control} variables and \emph{action} variables as follows: Let $a\coloneqq \sqrt{n}/\eps$ and let ${A}\subseteq [n]$ be a fixed subset of $[n]$ of size $a$. Let $C\coloneqq [n]\setminus A$. We refer to the variables $x_i$ for $i\in C$ as \emph{control variables} and the variables $x_i$ for $i\in A$ as \emph{action variables}. We first define two pairs of functions over $\{0,1\}^A$ on the action variables as follows (we will use these functions later in the definition of $\Dy$ and $\Dn$):

\begin{equation*}
g^{(+,0)}(x_A) =
	\begin{cases}
       \ 0 & |x_A|>\frac{a}{2}+\sqrt{a}; \\[0.5ex]
       \ 0 & |x_A|\in[\frac{a}{2}-\sqrt{a},\frac{a}{2}+\sqrt{a}]; \\[0.5ex]
       \ 0 & |x_A|<\frac{a}{2}-\sqrt{a}.
    \end{cases}
\quad g^{(+,1)}(x_A) =
	\begin{cases}
       \ 1 & |x_A|>\frac{a}{2}+\sqrt{a}; \\[0.5ex]
       \ 0 & |x_A|\in[\frac{a}{2}-\sqrt{a},\frac{a}{2}+\sqrt{a}]; \\[0.5ex]
       \ 1 & |x_A|<\frac{a}{2}-\sqrt{a}.
    \end{cases}
\end{equation*}
and
\begin{equation*}
g^{(-,0)}(x_A) =
	\begin{cases}
       \ 1 & |x_A|>\frac{a}{2}+\sqrt{a}; \\[0.5ex]
       \ 0 & |x_A|\in[\frac{a}{2}-\sqrt{a},\frac{a}{2}+\sqrt{a}]; \\[0.5ex]
       \ 0 & |x_A|<\frac{a}{2}-\sqrt{a}.
    \end{cases}
\quad g^{(-,1)}(x_A) =
	\begin{cases}
       \ 0 & |x_A|>\frac{a}{2}+\sqrt{a}; \\[0.5ex]
       \ 0 & |x_A|\in[\frac{a}{2}-\sqrt{a},\frac{a}{2}+\sqrt{a}]; \\[0.5ex]
       \ 1 & |x_A|<\frac{a}{2}-\sqrt{a}.
    \end{cases}
\end{equation*}

Now we are ready to define the distributions $\Dy$ and $\Dn$ over $f:\{0,1\}^{n+2}\rightarrow\{0,1\}$. We follow the convention that random variables are in boldface and fixed quantities are in the standard typeface.

A function $\fyes\sim \Dy$ is drawn as follows.
We start by sampling a subset $\bm{A}\subseteq [n]$ of size $a$ uniformly at random and let $\bm{C}\coloneqq [n]\setminus \bm{A}$. Note that there are in total $n-a$ control variables. We let $L\coloneqq 0.1\cdot2^{\sqrt{n-a}/\epsilon}$ and draw an $L$-term monotone Talagrand DNF $\bm{T}\sim \Tal(n-a,\eps)$ on $\bm{C}$ as described in \Cref{def:talagrand}. Finally, we sample $L$ random bits $\bm{b}\in\{0,1\}^L$ uniformly at random. 
Given $\bA,\bT$ and $\bb$, $\fyes$ is defined 
  by letting  $\fyes(x,0,0)=\fyes(x,1,1)=0$ for all $x\in\{0,1\}^n$, and letting

\begin{align*}
\fyes(x,0,1) &=
	\begin{cases}
       \ 0 & |S_{\bm{T}}(x_{\bm{C}})|\neq 1; \\[0.5ex]
       \ g^{(+,0)}(x_{\bm{A}}) & S_{\bm{T}}(x_{\bm{C}})=\{\ell\} \text{ and } \bm{b}_{\ell}=0; \\[0.5ex]
       \ g^{(+,1)}(x_{\bm{A}}) & S_{\bm{T}}(x_{\bm{C}})=\{\ell\} \text{ and } \bm{b}_{\ell}=1.
    \end{cases}\\[1ex]
 \fyes(x,1,0) &=
	\begin{cases}
       \ 0 & |S_{\bm{T}}(\overline{x}_{\bm{C}})|\neq 1; \\[0.5ex]
       \ g^{(+,1)}(x_{\bm{A}}) & S_{\bm{T}}(\overline{x}_{\bm{C}})=\{\ell\} \text{ and } \bm{b}_{\ell}=0; \\[0.5ex]
       \ g^{(+,0)}(x_{\bm{A}}) & S_{\bm{T}}(\overline{x}_{\bm{C}})=\{\ell\} \text{ and } \bm{b}_{\ell}=1.
    \end{cases}
\end{align*}
(Recall that $\overline{x}$ is the bitwise complement of string $x$).

To draw a function $\fno\sim \Dn$, we sample $\bA,\bT$ and $\bb$ exactly as in the definition of $\Dy$ above, but we use $g^{(+,b)}$ and $g^{(-,b)}$ functions in a different way than in the $\Dy$ functions described above. In more detail, $\fno$ is defined by 
  $\fno(x,0,0)=\fno(x,1,1)=0$ for all $x\in\{0,1\}^n$, and  
\begin{align*}
\fno(x,0,1) &=
	\begin{cases}
       \ 0 & |S_{\bm{T}}(x_{\bm{C}})|\neq 1; \\[0.5ex]
       \ g^{(-,0)}(x_{\bm{A}}) & S_{\bm{T}}(x_{\bm{C}})=\{\ell\} \text{ and } \bm{b}_{\ell}=0;\\[0.5ex]
       \ g^{(-,1)}(x_{\bm{A}}) & S_{\bm{T}}(x_{\bm{C}})=\{\ell\} \text{ and } \bm{b}_{\ell}=1.
    \end{cases}\\
\fno(x,1,0) &=
	\begin{cases}
       \ 0 & |S_{\bm{T}}(\overline{x}_{\bm{C}})|\neq 1; \\[0.5ex]
       \ g^{(-,0)}(x_{\bm{A}}) & S_{\bm{T}}(\overline{x}_{\bm{C}})=\{\ell\} \text{ and } \bm{b}_{\ell}=0; \\[0.5ex]
       \ g^{(-,1)}(x_{\bm{A}}) & S_{\bm{T}}(\overline{x}_{\bm{C}})=\{\ell\} \text{ and } \bm{b}_{\ell}=1.
    \end{cases}
\end{align*}

See \Cref{fig:int-yes,fig:int-no} for illustrations of the yes- and no- functions.


\begin{figure} \label{fig:yes-intersecting}
\centering

\begin{tikzpicture}

\node (potato1) at (0,5.5) {\begin{tikzpicture}[scale=0.9]

	
	\fill[color=purple!20!white!80] (0, -3) -- (-3, 0) -- (0, 3) -- (3, 0);
	\fill[color=white] (-2.75, 0.25) -- (-1, -1.5) -- (1.75, 1.25) -- (0, 3);\fill[pattern=crosshatch,pattern color=black, opacity=0.175] (-2.75, 0.25) -- (-1, -1.5) -- (1.75, 1.25) -- (0, 3);
	\fill[color=white] (-2.25, 0.75) -- (0, -1.5) -- (2.25, 0.75) -- (0, 3);\fill[pattern=crosshatch,pattern color=black, opacity=0.175] (-2.25, 0.75) -- (0, -1.5) -- (2.25, 0.75) -- (0, 3);
	\fill[color=white] (-1.75, 1.25) -- (1, -1.5) -- (2.75, 0.25) -- (0,3) -- (-1.75, 1.25);\fill[pattern=crosshatch,pattern color=black, opacity=0.175] (-1.75, 1.25) -- (1, -1.5) -- (2.75, 0.25) -- (0,3) -- (-1.75, 1.25);
	
	\fill[color=purple!20!white!80] (-2.25, 0.75) -- (-0.5, -1) -- (0, -0.5) -- (0.5, -1) -- (2.25, 0.75) -- (0, 3);
	
	\draw[-] (0, -3) -- (-3, 0) -- (0, 3) -- (3, 0) -- (0, -3);

	\draw[-] (-2.75, 0.25) -- (-1, -1.5) -- (1.75, 1.25);
	\draw[-] (-2.25, 0.75) -- (0, -1.5) -- (2.25, 0.75);
	\draw[-,line width=0.3mm] (-1.75, 1.25) -- (1, -1.5) -- (2.75, 0.25) -- (0,3) -- (-1.75, 1.25);
		
	\node (0) at (0, -2){$0$};
	\node (1) at (0, 1){$0$};
	\node (talagrand-cube) at (1.15, -0.85){$\bT_{\ell}$};

	
	\fill[color=purple!20!white!80] (9, 1) -- (7.5, 2.5) -- (9, 4) -- (10.5, 2.5);
	\node (heads) at (5.5, 2.5){\includegraphics[width=1.1cm]{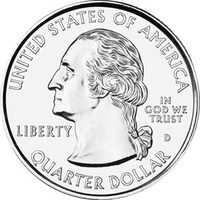}};
	\node (bh) at (5.5, 1.5){If $\bb_{\ell} = 1$:};
	\fill[color=purple!20!blue!40!white!70] (10, 3) -- (9, 4) -- (8, 3);
	\fill[color=purple!20!blue!40!white!70] (10, 2) -- (9, 1) -- (8, 2);
	\draw[] (10, 2) -- (8, 2);
	\draw[] (10, 3) -- (8, 3);
	
	\node() at (9, 3.4){$1$};
	\node() at (9, 2.5){$0$};
	\node() at (9, 1.6){$1$};
	
	\draw[-] (9, 1) -- (7.5, 2.5) -- (9, 4) -- (10.5, 2.5) -- (9, 1);
	
	\fill[color=purple!20!white!80] (9, -3.5) -- (7.5, -2) -- (9, -0.5) -- (10.5, -2);
	\node (tails) at (5.5, -2){\includegraphics[width=1.1cm]{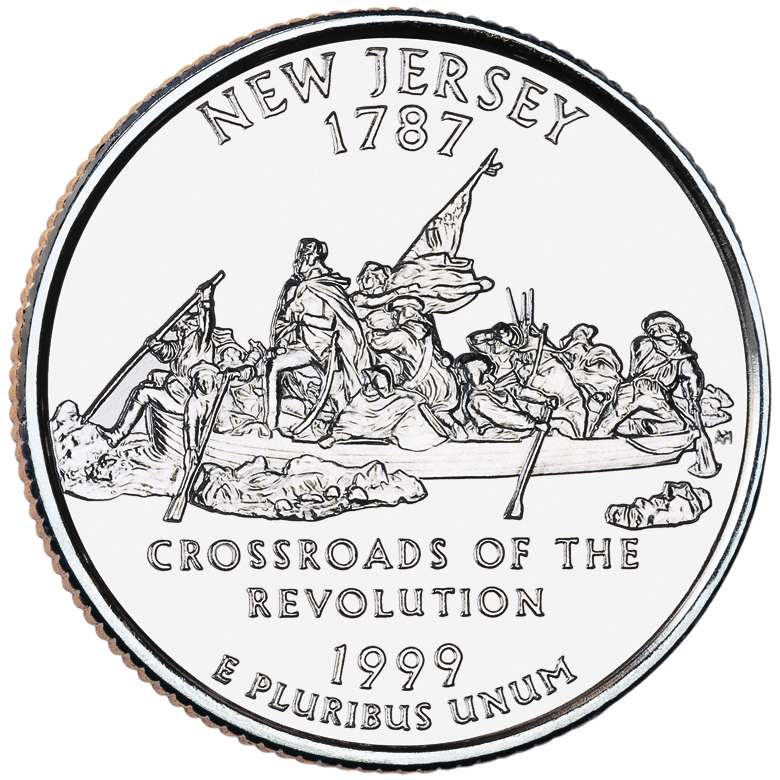}};
	\node (bt) at (5.5, -3){If $\bb_{\ell} = 0$:};
	
	\draw[-] (9, -3.5) -- (7.5, -2) -- (9, -0.5) -- (10.5, -2) -- (9, -3.5);	
	\draw[] (10, -2.5) -- (8, -2.5);
	\draw[] (10, -1.5) -- (8, -1.5);
	
	\node() at (9, -1.1){$0$};
	\node() at (9, -2){$0$};
	\node() at (9, -2.9){$0$};

	
	\draw[-latex,dashed,color=gray] (2.35, 0.1) -- (4.5, 2.25);
	\draw[-latex,dashed,color=gray] (2.35, 0.1) -- (4.5, -2.25);
	\node () at (2.35, 0.1) [circle,fill,inner sep=1pt]{};
		\node () at (2, 0.1) {${x}_{\bC}$};
	
	\node (control) at (0, -4.75){\phantom{$\zo^{\bC} \equiv \zo^m$}};
	\node (action) at (9, -4.75){\phantom{$\zo^{\bA} \equiv \zo^{a}$}};

\end{tikzpicture}};

\node (potato2) at (0,-6) {\begin{tikzpicture}[scale=0.9]

	
	\fill[color=purple!20!white!80] (0,-3) -- (-3,0) -- (0,3) -- (3,0);
	\fill[color=white] (-2.75, -0.25) -- (-1, 1.5) -- (1.75, -1.25) -- (0, -3);\fill[pattern=crosshatch,pattern color=black, opacity=0.175] (-2.75, -0.25) -- (-1, 1.5) -- (1.75, -1.25) -- (0, -3);
	\fill[color=white] (-2.25, -0.75) -- (0, 1.5) -- (2.25, -0.75) -- (0, -3);\fill[pattern=crosshatch,pattern color=black, opacity=0.175] (-2.25, -0.75) -- (0, 1.5) -- (2.25, -0.75) -- (0, -3);
	\fill[color=white] (-1.75, -1.25) -- (1, 1.5) -- (2.75, -0.25) -- (0,-3) -- (-1.75, -1.25);\fill[pattern=crosshatch,pattern color=black, opacity=0.175] (-1.75, -1.25) -- (1, 1.5) -- (2.75, -0.25) -- (0,-3) -- (-1.75, -1.25);
	
	\fill[color=purple!20!white!80] (-2.25, -0.75) -- (-0.5, 1) -- (0, 0.5) -- (0.5, 1) -- (2.25, -0.75) -- (0, -3);
	
	\draw[-] (0, -3) -- (-3, 0) -- (0, 3) -- (3, 0) -- (0, -3);

	\draw[-] (-2.75, -0.25) -- (-1, 1.5) -- (1.75, -1.25);
	\draw[-] (-2.25, -0.75) -- (0, 1.5) -- (2.25, -0.75);
	\draw[-,line width=0.3mm] (-1.75, -1.25) -- (1, 1.5) -- (2.75, -0.25) -- (0,-3) -- (-1.75, -1.25);
		
	\node (0) at (0, 2){$0$};
	\node (1) at (0, -1){$0$};
	\node (talagrand-cube) at (1.15, -1.25){$\bT_{\ell}$};

	
	\fill[color=purple!20!white!80](9, 1) -- (7.5, 2.5) -- (9, 4) -- (10.5, 2.5);
	\node (heads) at (5.5, 2.5){\includegraphics[width=1.1cm]{images/quarter-front}};
	\node (bh) at (5.5, 1.5){If $\bb_{\ell} = 1$:};
	\draw[] (10, 2) -- (8, 2);
	\draw[] (10, 3) -- (8, 3);
	
	\node() at (9, 3.4){$0$};
	\node() at (9, 2.5){$0$};
	\node() at (9, 1.6){$0$};
	
	\draw[-] (9, 1) -- (7.5, 2.5) -- (9, 4) -- (10.5, 2.5) -- (9, 1);
	
	
	\fill[color=purple!20!white!80] (9, -3.5) -- (7.5, -2) -- (9, -0.5) -- (10.5, -2) -- (9, -3.5);
	\node (tails) at (5.5, -2){\includegraphics[width=1.1cm]{images/jersey-quarter-back}};
	\node (bt) at (5.5, -3){If $\bb_{\ell} = 0$:};
	
	\fill[color=purple!20!blue!40!white!70] (10, -2.5) -- (9, -3.5) -- (8, -2.5);
	\fill[color=purple!20!blue!40!white!70] (10, -1.5) -- (9, -0.5) -- (8, -1.5);

	\draw[] (10, -2.5) -- (8, -2.5);
	\draw[] (10, -1.5) -- (8, -1.5);
	
		\draw[-] (9, -3.5) -- (7.5, -2) -- (9, -0.5) -- (10.5, -2) -- (9, -3.5);
	
	\node() at (9, -1.1){$1$};
	\node() at (9, -2){$0$};
	\node() at (9, -2.9){$1$};

	
	\draw[-latex,dashed,color=gray] (2.3, 0.1) -- (4.5, 2.25);
	\draw[-latex,dashed,color=gray] (2.3, 0.1) -- (4.5, -2.25);
	\node () at (2.26, 0.1) [circle,fill,inner sep=1pt]{};
	\node () at (1.9, 0.1) {$\overline{x}_{\bC}$};
 	
	\node (control) at (0, -4.75){$\zo^{\bC} \equiv \zo^m$};
	\node (action) at (9, -4.75){$\zo^{\bA} \equiv \zo^{a}$};

\end{tikzpicture}};

\draw[-latex,dashed,color=gray] (1,0) to[out=90,in=-90] (-3.5,3.25);
\draw[-latex,dashed,color=gray] (-1,0) to[out=-180,in=90] (-3.5,-2.75);

\node (A) at (-1,0) [circle,fill,inner sep=1.5pt,label=left:{\small $10$}]{};
\node (B) at (1,0) [circle,fill,inner sep=1.5pt,label=right:{\small $01$}]{};
\node (C) at (0,1) [circle,draw=black,fill=purple!20!white!80,inner sep=1.5pt,label=above:{\small $11$}]{};
\node (D) at (0,-1) [circle,draw=black,fill=purple!20!white!80,inner sep=1.5pt,label=below:{\small $00$}]{};

\node () at (3.5, 0) {$(y_1,y_2)\equiv\zo^2$};

\draw (A) -- (C) -- (B) -- (D) -- (A);

\end{tikzpicture}

\
	
\caption{A draw of $\fyes\sim\Dyes$. All our hypercubes adopt the convention that the bottom-most point is $(0,\ldots,0)$ and the topmost point is $(1,\ldots,1)$, and horizontal lines denote Hamming levels. Given an input $(x, y_1, y_2) \in \zo^{n}\times\zo^2$ we follow the arrows starting with $\zo^2$ in the center. The cross-hatched region in the control cube $\zo^{\bC}$ corresponds to inputs satisfying a unique Talagrand DNF term $\bT_\ell$. The pink regions correspond to $0$ assignments and blue regions to $1$ assignments.} 
\label{fig:int-yes}

\end{figure}


\begin{figure} \label{fig:no-intersecting}
\centering

\begin{tikzpicture}

\node (potato1) at (0,5.5) {\begin{tikzpicture}[scale=0.9]

	
	\fill[color=purple!20!white!80] (0, -3) -- (-3, 0) -- (0, 3) -- (3, 0);
	\fill[color=white] (-2.75, 0.25) -- (-1, -1.5) -- (1.75, 1.25) -- (0, 3);\fill[pattern=crosshatch,pattern color=black, opacity=0.175] (-2.75, 0.25) -- (-1, -1.5) -- (1.75, 1.25) -- (0, 3);
	\fill[color=white] (-2.25, 0.75) -- (0, -1.5) -- (2.25, 0.75) -- (0, 3);\fill[pattern=crosshatch,pattern color=black, opacity=0.175] (-2.25, 0.75) -- (0, -1.5) -- (2.25, 0.75) -- (0, 3);
	\fill[color=white] (-1.75, 1.25) -- (1, -1.5) -- (2.75, 0.25) -- (0,3) -- (-1.75, 1.25);\fill[pattern=crosshatch,pattern color=black, opacity=0.175] (-1.75, 1.25) -- (1, -1.5) -- (2.75, 0.25) -- (0,3) -- (-1.75, 1.25);
	
	\fill[color=purple!20!white!80] (-2.25, 0.75) -- (-0.5, -1) -- (0, -0.5) -- (0.5, -1) -- (2.25, 0.75) -- (0, 3);
	
	\draw[-] (0, -3) -- (-3, 0) -- (0, 3) -- (3, 0) -- (0, -3);

	\draw[-] (-2.75, 0.25) -- (-1, -1.5) -- (1.75, 1.25);
	\draw[-] (-2.25, 0.75) -- (0, -1.5) -- (2.25, 0.75);
	\draw[-,line width=0.3mm] (-1.75, 1.25) -- (1, -1.5) -- (2.75, 0.25) -- (0,3) -- (-1.75, 1.25);
		
	\node (0) at (0, -2){$0$};
	\node (1) at (0, 1){$0$};
	\node (talagrand-cube) at (1.15, -0.85){$\bT_{\ell}$};

	
	\fill[color=purple!20!white!80] (9, 1) -- (7.5, 2.5) -- (9, 4) -- (10.5, 2.5);
	\node (heads) at (5.5, 2.5){\includegraphics[width=1.1cm]{images/quarter-front}};
	\node (bh) at (5.5, 1.5){If $\bb_{\ell} = 1$:};
	\fill[color=purple!20!blue!40!white!70] (10, 2) -- (9, 1) -- (8, 2);
	\draw[] (10, 2) -- (8, 2);
	\draw[] (10, 3) -- (8, 3);
	
	\node() at (9, 3.4){$0$};
	\node() at (9, 2.5){$0$};
	\node() at (9, 1.6){$1$};
	
	\draw[-] (9, 1) -- (7.5, 2.5) -- (9, 4) -- (10.5, 2.5) -- (9, 1);
	
	\fill[color=purple!20!white!80] (9, -3.5) -- (7.5, -2) -- (9, -0.5) -- (10.5, -2);
	\node (tails) at (5.5, -2){\includegraphics[width=1.1cm]{images/jersey-quarter-back}};
	\node (bt) at (5.5, -3){If $\bb_{\ell} = 0$:};
	
	\fill[color=purple!20!blue!40!white!70] (10,-1.5) -- (9, -0.5) -- (8, -1.5);
	\draw[-] (9, -3.5) -- (7.5, -2) -- (9, -0.5) -- (10.5, -2) -- (9, -3.5);	
	\draw[] (10, -2.5) -- (8, -2.5);
	\draw[] (10, -1.5) -- (8, -1.5);
	
	\node() at (9, -1.1){$1$};
	\node() at (9, -2){$0$};
	\node() at (9, -2.9){$0$};

	
	\draw[-latex,dashed,color=gray] (2.35, 0.1) -- (4.5, 2.25);
	\draw[-latex,dashed,color=gray] (2.35, 0.1) -- (4.5, -2.25);
	\node () at (2.35, 0.1) [circle,fill,inner sep=1pt]{};
		\node () at (2, 0.1) {${x}_{\bC}$};
	
	\node (control) at (0, -4.75){\phantom{$\zo^{\bC} \equiv \zo^m$}};
	\node (action) at (9, -4.75){\phantom{$\zo^{\bA} \equiv \zo^{a}$}};

\end{tikzpicture}};

\node (potato2) at (0,-6) {\begin{tikzpicture}[scale=0.9]

	
	\fill[color=purple!20!white!80] (0,-3) -- (-3,0) -- (0,3) -- (3,0);
	\fill[color=white] (-2.75, -0.25) -- (-1, 1.5) -- (1.75, -1.25) -- (0, -3);\fill[pattern=crosshatch,pattern color=black, opacity=0.175] (-2.75, -0.25) -- (-1, 1.5) -- (1.75, -1.25) -- (0, -3);
	\fill[color=white] (-2.25, -0.75) -- (0, 1.5) -- (2.25, -0.75) -- (0, -3);\fill[pattern=crosshatch,pattern color=black, opacity=0.175] (-2.25, -0.75) -- (0, 1.5) -- (2.25, -0.75) -- (0, -3);
	\fill[color=white] (-1.75, -1.25) -- (1, 1.5) -- (2.75, -0.25) -- (0,-3) -- (-1.75, -1.25);\fill[pattern=crosshatch,pattern color=black, opacity=0.175] (-1.75, -1.25) -- (1, 1.5) -- (2.75, -0.25) -- (0,-3) -- (-1.75, -1.25);
	
	\fill[color=purple!20!white!80] (-2.25, -0.75) -- (-0.5, 1) -- (0, 0.5) -- (0.5, 1) -- (2.25, -0.75) -- (0, -3);
	
	\draw[-] (0, -3) -- (-3, 0) -- (0, 3) -- (3, 0) -- (0, -3);

	\draw[-] (-2.75, -0.25) -- (-1, 1.5) -- (1.75, -1.25);
	\draw[-] (-2.25, -0.75) -- (0, 1.5) -- (2.25, -0.75);
	\draw[-,line width=0.3mm] (-1.75, -1.25) -- (1, 1.5) -- (2.75, -0.25) -- (0,-3) -- (-1.75, -1.25);
		
	\node (0) at (0, 2){$0$};
	\node (1) at (0, -1){$0$};
	\node (talagrand-cube) at (1.15, -1.25){$\bT_{\ell}$};

	
	\fill[color=purple!20!white!80] (9, 1) -- (7.5, 2.5) -- (9, 4) -- (10.5, 2.5);
	\node (heads) at (5.5, 2.5){\includegraphics[width=1.1cm]{images/quarter-front}};
	\node (bh) at (5.5, 1.5){If $\bb_{\ell} = 1$:};
	\fill[color=purple!20!blue!40!white!70] (10, 2) -- (9, 1) -- (8, 2);
	\draw[] (10, 2) -- (8, 2);
	\draw[] (10, 3) -- (8, 3);
	
	\node() at (9, 3.4){$0$};
	\node() at (9, 2.5){$0$};
	\node() at (9, 1.6){$1$};
	
	\draw[-] (9, 1) -- (7.5, 2.5) -- (9, 4) -- (10.5, 2.5) -- (9, 1);
	
	\fill[color=purple!20!white!80] (9, -3.5) -- (7.5, -2) -- (9, -0.5) -- (10.5, -2);
	\node (tails) at (5.5, -2){\includegraphics[width=1.1cm]{images/jersey-quarter-back}};
	\node (bt) at (5.5, -3){If $\bb_{\ell} = 0$:};
	
	\fill[color=purple!20!blue!40!white!70] (10,-1.5) -- (9, -0.5) -- (8, -1.5);
	\draw[-] (9, -3.5) -- (7.5, -2) -- (9, -0.5) -- (10.5, -2) -- (9, -3.5);	
	\draw[] (10, -2.5) -- (8, -2.5);
	\draw[] (10, -1.5) -- (8, -1.5);
	
	\node() at (9, -1.1){$1$};
	\node() at (9, -2){$0$};
	\node() at (9, -2.9){$0$};

	
	\draw[-latex,dashed,color=gray] (2.3, 0.1) -- (4.5, 2.25);
	\draw[-latex,dashed,color=gray] (2.3, 0.1) -- (4.5, -2.25);
	\node () at (2.26, 0.1) [circle,fill,inner sep=1pt]{};
	\node () at (1.9, 0.1) {$\overline{x}_{\bC}$};
 	
	\node (control) at (0, -4.75){$\zo^{\bC} \equiv \zo^m$};
	\node (action) at (9, -4.75){$\zo^{\bA} \equiv \zo^{a}$};

\end{tikzpicture}};

\draw[-latex,dashed,color=gray] (B) to[out=90,in=-90] (-3.5,3.25);
\draw[-latex,dashed,color=gray] (A) to[out=-180,in=90] (-3.5,-2.75);

\node (A) at (-1,0) [circle,fill,inner sep=1.5pt,label=left:{\small $10$}]{};
\node (B) at (1,0) [circle,fill,inner sep=1.5pt,label=right:{\small $01$}]{};
\node (C) at (0,1) [circle,draw=black,fill=purple!20!white!80,inner sep=1.5pt,label=above:{\small $11$}]{};
\node (D) at (0,-1) [circle,draw=black,fill=purple!20!white!80,inner sep=1.5pt,label=below:{\small $00$}]{};

\node () at (3.5, 0) {$(y_1,y_2)\equiv\zo^2$};

\draw (A) -- (C) -- (B) -- (D) -- (A);

\end{tikzpicture}

\
	
\caption{A draw of $\fno\sim\Dno$. Our conventions are as in \Cref{fig:int-yes}.}
\label{fig:int-no}

\end{figure}

\subsection{Distance to Intersectingness}
In this section, we will analyze the distance to intersectingness of functions in the support of $\Dy$ and $\Dn$. For conciseness, we write $m$ to denote $n-a$ throughout this subsection.

\begin{lemma}\label{lemma: Dyes is intersecting}
	Every function $f_{\mathrm{yes}}$ in the support of $\Dy$ is intersecting.
\end{lemma}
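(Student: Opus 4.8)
I want to show that every $f_{\mathrm{yes}}$ in the support of $\Dy$ is intersecting, i.e. that there is no pair of inputs $(x,y_1,y_2), (x',y_1',y_2') \in \zo^{n+2}$ with $f_{\mathrm{yes}}(x,y_1,y_2) = f_{\mathrm{yes}}(x',y_1',y_2') = 1$ whose coordinatewise AND is the all-zeros string. I would fix a draw of $\bA, \bT, \bb$ and argue the claim for the resulting deterministic function $f = f_{\mathrm{yes}}$. The first observation is that $f(x,0,0) = f(x,1,1) = 0$ for all $x$, so any satisfying assignment has its last two coordinates equal to either $(0,1)$ or $(1,0)$. This means that for any two satisfying assignments, the last two coordinates always intersect \emph{unless} one ends in $(0,1)$ and the other ends in $(1,0)$. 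So the only potentially dangerous pairs are $(x,0,1)$ and $(x',1,0)$ with $f(x,0,1) = f(x',1,0) = 1$, and the task reduces to showing such a pair cannot have $x \cap x' = 0^n$ (equivalently $x \le \overline{x'}$).

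**Key steps.** Suppose $f(x,0,1) = 1$ and $f(x',1,0) = 1$; I want to derive $x \cap x' \ne 0^n$. From the definition of $f_{\mathrm{yes}}$: $f(x,0,1) = 1$ forces $|S_{\bT}(x_{\bC})| = 1$, say $S_{\bT}(x_{\bC}) = \{\ell\}$, and moreover the action part satisfies $g^{(+,\bb_\ell)}(x_{\bA}) = 1$, which (looking at $g^{(+,0)}$ and $g^{(+,1)}$) forces $|x_{\bA}| \notin [\tfrac a2 - \sqrt a, \tfrac a2 + \sqrt a]$ — and in fact, since $g^{(+,0)} \equiv 0$, we must have $\bb_\ell = 1$ and hence $g^{(+,1)}(x_{\bA}) = 1$. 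Similarly $f(x',1,0) = 1$ forces $|S_{\bT}(\overline{x'}_{\bC})| = 1$, say $= \{\ell'\}$, with the action part giving $g^{(+,\cdot)}(x'_{\bA}) = 1$; examining the $\fyes(x,1,0)$ branch, the relevant function is $g^{(+,1)}$ when $\bb_{\ell'} = 0$ and $g^{(+,0)} \equiv 0$ when $\bb_{\ell'} = 1$, so we must have $\bb_{\ell'} = 0$ and $g^{(+,1)}(x'_{\bA}) = 1$. Now I separately exhibit a common $1$-coordinate in the control part \emph{or} the action part. On the control side: $T_\ell(x_{\bC}) = 1$ means every variable in the term $T_\ell \subseteq \bC$ is set to $1$ in $x$, and $T_{\ell'}(\overline{x'}_{\bC}) = 1$ means every variable in $T_{\ell'}$ is set to $0$ in $x'$. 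If $\ell = \ell'$, then the (nonempty) set of coordinates in $T_\ell$ is all-$1$ in $x$ and all-$0$ in $x'$ — that does \emph{not} immediately give intersection in the control part. So I should look at the action part instead: $g^{(+,1)}(x_{\bA}) = 1$ means $|x_{\bA}| > \tfrac a2 + \sqrt a$ or $|x_{\bA}| < \tfrac a2 - \sqrt a$, and likewise for $x'_{\bA}$. The crucial point is a simple counting/pigeonhole fact: if $|x_{\bA}|$ and $|x'_{\bA}|$ both lie outside $[\tfrac a2 - \sqrt a, \tfrac a2 + \sqrt a]$, then $x_{\bA}$ and $x'_{\bA}$ must share a $1$-coordinate — because if they were disjoint we'd need $|x_{\bA}| + |x'_{\bA}| \le a$, which is impossible if both are $> \tfrac a2 + \sqrt a$; and if one is $> \tfrac a2 + \sqrt a$ while the other is $< \tfrac a2 - \sqrt a$, I need a slightly different argument. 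Let me reconsider: the cleanest route is to observe that in $g^{(+,1)}$, \emph{both} the "top" region $|x_A| > \tfrac a2 + \sqrt a$ and the "bottom" region $|x_A| < \tfrac a2 - \sqrt a$ output $1$; but actually for the intersection argument I want to know that whenever $g^{(+,1)}(x_{\bA}) = g^{(+,1)}(x'_{\bA}) = 1$, the strings $x_{\bA}, x'_{\bA}$ intersect. This fails if $x_{\bA}$ is in the bottom region and $x'_{\bA}$ in the bottom region (both very sparse) — so the action part alone is not enough either, and I will genuinely need to combine control and action reasoning, using that $a$ is large and the Hamming weight windows are set up so that at least one of the two subcubes contributes a shared coordinate.

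**Main obstacle.** The heart of the matter — and the step I expect to be most delicate — is verifying that the thresholds $\tfrac a2 \pm \sqrt a$ in the $g$-functions, together with the structure of which $g^{(+,b)}$ appears in which branch of $\fyes(x,0,1)$ versus $\fyes(x,1,0)$, are arranged precisely so that a pair of satisfying assignments ending in $(0,1)$ and $(1,0)$ can \emph{never} be "antipodal-compatible." I expect the real argument is: the asymmetry between the $(x,0,1)$ branch (which uses $g^{(+,\bb_\ell)}$) and the $(x,1,0)$ branch (which uses $g^{(+,1-\bb_\ell)}$, i.e. the roles of $\bb_\ell = 0,1$ are swapped) means that for a \emph{fixed} term index $\ell$, a $(0,1)$-satisfying assignment through term $\ell$ and a $(1,0)$-satisfying assignment through term $\ell$ cannot both exist unless... — and here one traces through that $f(x,0,1)=1$ via term $\ell$ requires $\bb_\ell = 1$ while $f(x',1,0)=1$ via term $\ell$ requires $\bb_\ell = 0$, a contradiction. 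Hence $\ell \ne \ell'$, meaning $x_{\bC}$ uniquely satisfies $T_\ell$ while $\overline{x'}_{\bC}$ uniquely satisfies $T_{\ell'}$ with $\ell \ne \ell'$; I then need to argue this plus the weight constraints on the action parts forces $x \cap x' \ne 0^n$. I would push the argument through on the action coordinates: both $x_{\bA}$ and $x'_{\bA}$ have weight in the "heavy" region $> \tfrac a2 + \sqrt a$ (ruling out the light region by the same $\bb$-parity bookkeeping, or handling it as a separate case), and two subsets of $[a]$ each of size $> a/2$ must intersect. Writing this all up carefully, and in particular pinning down exactly which weight regions are actually reachable given the $\bb_\ell$ constraints, is the bulk of the work; the rest is routine.
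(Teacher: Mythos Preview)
Your reduction is exactly right: any two satisfying assignments with matching last-two-bits already intersect, so the only case to handle is $u=(x,0,1)$, $v=(x',1,0)$. You also correctly extract that $S_{\bT}(x_{\bC})=\{\ell\}$ with $\bb_\ell=1$, that $S_{\bT}(\overline{x'}_{\bC})=\{\ell'\}$ with $\bb_{\ell'}=0$, and hence that $\ell\ne\ell'$. Up to here you match the paper's proof exactly.

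The gap is in the final step. You propose to find the common $1$-coordinate in the \emph{action} cube, hoping that both $x_{\bA}$ and $x'_{\bA}$ are forced into the heavy region $|x_{\bA}|>\tfrac a2+\sqrt a$ by ``$\bb$-parity bookkeeping.'' This does not work: once $\bb_\ell=1$ (resp.\ $\bb_{\ell'}=0$), the relevant action function is $g^{(+,1)}$ in \emph{both} branches, and $g^{(+,1)}$ outputs $1$ on the light region $|x_{\bA}|<\tfrac a2-\sqrt a$ as well as the heavy one. Nothing further constrains which region $x_{\bA}$ or $x'_{\bA}$ lies in, so both could have weight $<\tfrac a2-\sqrt a$ and be disjoint (even both equal to $0^a$). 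The action coordinates alone cannot close the argument.

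The paper instead finishes on the \emph{control} side, using only the fact $\ell\ne\ell'$ that you already established. Since $S_{\bT}(\overline{x'}_{\bC})=\{\ell'\}$ with $\ell'\ne\ell$, the term $T_\ell$ is \emph{not} satisfied by $\overline{x'}_{\bC}$; so some coordinate $i\in T_\ell$ has $\overline{x'}_i=0$, i.e.\ $x'_i=1$. But $T_\ell(x_{\bC})=1$ gives $x_i=1$ as well, and you are done. You actually wrote down both of the needed facts (``every variable in $T_\ell$ is $1$ in $x$'' and the unique-term property of $\overline{x'}_{\bC}$) but did not combine them; the action-cube thresholds play no role in this lemma.
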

\begin{proof}
	Consider any function $f=f_{\mathrm{yes}}$ in the support of $\Dy$ and any two points $u,v\in\{0,1\}^{n+2}$ such that $f(u)=f(v)=1$. We will show that there exists $i\in[n+2]$ such that $u_i=v_i=1$.
	
	Since $f(u)=f(v)=1$, we have $u_{n+1}\neq u_{n+2}$ and $v_{n+1}\neq v_{n+2}$. If the last two bits of $u$ are the same as the last two bits of $v$ then we are done. So without loss of generality assume that $u=(x,0,1)$ and $v=(y,1,0)$ where $x,y\in\{0,1\}^n$. Then by definition of $f_{\mathrm{yes}}$ we must have that $|S_T(x_C)|=1$ and $|S_T(\overline{y}_C)|=1$.
	
	Let $S_T(x_C)=\{\ell\}$. We will show that $S_T(\overline{y}_C)\neq\{\ell\}$, which implies that $y_i=1$ for some $i\in T_{\ell}$, and hence  $x_i=y_i=1$ for that value of $i$. We show the contrapositive. If  $S_T(\overline{y}_C)=\{\ell\}$ as well, then we have $f(x,0,1)=g^{(+,b_{\ell})}(x_A)$ and $f(y,1,0)=g^{(+,\overline{b_{\ell}})}(y_A)$. But since $f(x,0,1)=1$, we have $b_{\ell}=1$. Since $g^{(+,0)}$ is constantly zero, we have $f(y,1,0)=g^{(+,\overline{b_{\ell}})}(y_A)=0$, leading to a contradiction.
	
	This finishes the proof.
\end{proof}

We next show that $\fno\sim \Dn$ is $\Omega(\eps)$-far from intersecting with high probability.

\begin{lemma}\label{lem:hehe2}
	With probability at least 0.01, 
	$\fno\sim\Dno$ is $\Omega(\eps)$-far from intersecting.
\end{lemma}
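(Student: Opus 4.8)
\emph{Overview of the plan.} The plan is to show that, with probability at least $0.01$ over $\fno\sim\Dno$, the \emph{violation graph} of $\fno$ --- the graph whose vertices are the inputs in $\fno^{-1}(1)$ and whose edges join pairs $u,v$ with $u\cap v=\emptyset$ (as subsets of $[n+2]$) --- contains a matching of size $\Omega(\eps)\cdot 2^{n+2}$. This suffices: for any intersecting $g$ and any matched pair $\{u,v\}$, $g$ cannot have $g(u)=g(v)=1$, so $g$ disagrees with $\fno$ on at least one of $u,v$; since the matched pairs are vertex-disjoint, $g$ disagrees with $\fno$ on at least (matching size) many inputs, and hence $\fno$ is $\Omega(\eps)$-far from intersecting.

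\emph{Step 1 (structure of violations).} Since $\fno(x,0,0)=\fno(x,1,1)=0$, every input in $\fno^{-1}(1)$ has last two bits $(0,1)$ or $(1,0)$, and two such inputs with equal last two bits share a $1$-coordinate (coordinate $n+2$, resp.\ $n+1$); so every edge of the violation graph joins some $u=(x,0,1)$ and $v=(y,1,0)$. For $\fno(u)=\fno(v)=1$ we need $S_{\bm T}(x_{\bm C})=\{\ell\}$ and $S_{\bm T}(\overline y_{\bm C})=\{\ell'\}$, and $u\cap v=\emptyset$ forces $x_{\bm C}\sse\overline y_{\bm C}$; since $\bm T_\ell$ is monotone and $x_{\bm C}\supseteq\bm T_\ell$, we get $\bm T_\ell(\overline y_{\bm C})=1$, hence $\ell'=\ell$. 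Then $\fno(u)=g^{(-,\bm b_\ell)}(x_{\bm A})$ and $\fno(v)=g^{(-,\bm b_\ell)}(y_{\bm A})$ must both equal $1$. Since $g^{(-,0)}(w)=1$ forces $|w|>a/2+\sqrt a$ and two disjoint strings in $\zo^{\bm A}$ cannot both have weight exceeding $a/2+\sqrt a$, we conclude $\bm b_\ell=1$ and $x_{\bm A},y_{\bm A}\in B:=\{w\in\zo^{\bm A}:|w|<a/2-\sqrt a\}$ with $x_{\bm A}\cap y_{\bm A}=\emptyset$.

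\emph{Step 2 (assembling the matching).} Call $z\in\zo^{\bm C}$ \emph{good} if $|S_{\bm T}(z)|=1$, say $S_{\bm T}(z)=\{\ell(z)\}$, and $\bm b_{\ell(z)}=1$. For a good $z$ and any $w\in B$ one checks directly that $(z,w,0,1)$ and $(\overline z,w,1,0)$ both lie in $\fno^{-1}(1)$, that $(z,w,0,1)$ and $(\overline z,w',1,0)$ are joined in the violation graph iff $w\cap w'=\emptyset$, and that the point-sets attached to distinct good $z$'s are disjoint (distinguished by the control coordinates or by the last two bits). Hence the violation graph has a matching of size at least $(\#\text{good }z)\cdot\nu(H_B)$, where $H_B$ is the bipartite disjointness graph on $B\sqcup B$ with edges $\{w,w'\}$ whenever $w\cap w'=\emptyset$. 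By K\"onig's theorem $\nu(H_B)=2|B|-\alpha(H_B)$, and an independent set of $H_B$ is precisely a pair of cross-intersecting families contained in $B$; a standard extremal fact (provable by compression/shifting) gives $|\mathcal F|+|\mathcal G|\le|B|$ for cross-intersecting $\mathcal F,\mathcal G\sse B$ (only a bound of the form $(2-\Omega(1))|B|$ is actually needed), so $\nu(H_B)\ge\Omega(|B|)$. Finally $a=\sqrt n/\eps\to\infty$ with $\sqrt a=o(a)$, so a Chernoff bound together with ${a\choose a/2}/2^a=\Theta(1/\sqrt a)$ give $|B|=\Omega(2^a)$.

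\emph{Step 3 (many good control strings) and conclusion.} By the argument in the proof of \Cref{prop:talagrand-unique-property}, applied with $m=n-a$ in place of $n$ (the Talagrand DNF $\bm T$ lives on the $m$ control variables), every $z$ in the band $\{z\in\zo^{\bm C}:|z|\in[m/2,\,m/2+0.05\eps\sqrt m]\}$ satisfies $\Prx_{\bm T}[|S_{\bm T}(z)|=1]=\Omega(1)$, and this band contains $\Omega(\max\{\eps,1/\sqrt m\})\cdot 2^m=\Omega(\eps)\cdot 2^m$ points. Since $\bm b$ is independent of $\bm T$, $\Ex[\#\text{good }z]=\Omega(\eps)\cdot 2^m$; as $\#\text{good }z\le 2^m$ always, a (two-stage) reverse-Markov argument --- or, more cleanly, concentration of $|\{z:|S_{\bm T}(z)|=1\}|$ about its mean, which holds because with high probability no term of $\bm T$ is satisfied by more than a $2^{-\Omega(\sqrt m/\eps)}$-fraction of $\zo^{\bm C}$ --- shows $\#\text{good }z\ge\Omega(\eps)\cdot 2^m$ with probability at least $0.01$ (indeed $1-o(1)$ via the second route). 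On that event the violation graph of $\fno$ has a matching of size $\ge\Omega(\eps)2^m\cdot\Omega(2^a)=\Omega(\eps)\cdot 2^n=\Omega(\eps)\cdot 2^{n+2}$, which finishes the proof.

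\emph{Where the difficulty lies.} The crux is the bound $\nu(H_B)=\Omega(|B|)$ in Step 2: the disjointness graph $H_B$ has very uneven degrees --- a low-weight set in $B$ has degree $\approx|B|$, but the bulk of $B$ sits near weight $a/2-\sqrt a$, and there every vertex has degree only $2^{(1/2+o(1))a}\ll|B|$ --- so a minimum-degree estimate or a random-permutation matching yields only a matching of size $2^{(1/2+o(1))a}$, which would make the per-block contribution a vanishing fraction of $2^a$ and break the argument. Getting a matching that covers a constant fraction of $B$ genuinely requires the cross-intersecting extremal inequality (or, equivalently, an explicit near-perfect matching of the bottom band under disjointness). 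The remaining delicate point --- promoting the success probability in Step 3 to an honest universal constant --- is more routine: a single reverse-Markov step gives only probability $\Omega(\eps)$, so one either iterates it or invokes the concentration of the count above.
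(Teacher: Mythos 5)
Your plan mirrors the paper's proof almost exactly: the same three-step structure (pin down the structure of $\mathsf{I}$-violating pairs in $\fno$; build, for each ``good'' control string $z$ with $b_{\ell(z)}=1$, a large collection of vertex-disjoint violating pairs inside the block indexed by $z$; then argue that with constant probability there are $\Omega(\eps)\cdot 2^{m}$ good $z$'s). Step 1 and Step 3 are fine and correspond to the paper's $\GoodTalagrand$/$\ManyOne$ bookkeeping (the paper also gets only an $\Omega(1)$ success probability, not $1-o(1)$, but a constant is all that is claimed or needed).

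The gap is in Step 2, and you have correctly flagged it yourself: you need $\nu(H_B)=\Omega(|B|)$ for the disjointness bipartite graph on $B=\{w\in\zo^a:|w|<a/2-\sqrt a\}$, and you discharge this via K\"onig's theorem to the claim that cross-intersecting $\mathcal F,\mathcal G\sse B$ satisfy $|\mathcal F|+|\mathcal G|\le|B|$, which you call ``a standard extremal fact (provable by compression/shifting).'' But by K\"onig/Gallai, $\nu(H_B)\ge|B|$ and $\alpha(H_B)\le|B|$ are the \emph{same} statement, so citing the cross-intersecting bound is circular unless you actually supply a proof of it; and it is not obvious that shifting alone gives it. The paper instead constructs the matching directly (its Claim 4.3): for each $w<a/2-\sqrt a$ the bipartite poset graph between $P_w$ and $P_{a-w}$ is $\binom{a-w}{w}$-regular, hence has a perfect matching by Hall's theorem; composing that matching with complementation $T\mapsto\overline T$ sends $P_{a-w}$ back to $P_w$, giving a perfect matching of $P_w$ with itself under set-disjointness. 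Taking the union over all $w<a/2-\sqrt a$ yields a perfect matching of $B$ with $B$ under disjointness, i.e.\ $\nu(H_B)=|B|$, and simultaneously proves the cross-intersecting inequality you wanted. So your outline is the right one, but the one genuinely nontrivial combinatorial ingredient is left unproved; replacing ``standard extremal fact via compression'' with this Hall's-theorem/complementation construction closes the gap and recovers the paper's proof.
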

\begin{proof}	

We first introduce the event $\GoodTalagrand(T)$, which states that there exists an $\Omega(\eps)$-fraction of points $x \in \{0,1\}^{m}$ such that $|S_T(x)|=1$ (recall that $m=n-a$). 
Formally, let $t\cdot \eps$ be the number of $x$ in $\{0,1\}^m$ such that $|x| \in [m/2, m/2 + 0.05\eps\sqrt{m}]$ and let $\GoodTalagrand(T)$ be the event that $\mathbb{E}_{|\bm{x}|\in[m/2, m/2 + 0.05\eps\sqrt{m}]}\left[\textbf{1}_{\{|S_T(x)|=1\}}\right]\geq t\cdot \eps/100$. We will show that $$\prob_{\bm{T}\sim\Tal(m,\eps)}[\GoodTalagrand(\bm{T})]>0.02.$$

Let $p$ denote $\prob_{\bm{T}\sim\Tal(m,\eps)}[\GoodTalagrand(\bm{T})]$. 
Recall that in \Cref{prop:talagrand-unique-property}, we have shown that $\Prx_{\bm{T}}\sbra{|S_{\bm{T}}(x)| = 1} \geq 0.03$  for any $x\in\{0,1\}^{m}$ with $|x|\in [m/2, m/2 + 0.05\eps\sqrt{m}]$. So we have $$p\cdot t\eps+(1-p)\cdot t\eps/100\geq t\eps\Prx_{\bm{T}}\sbra{|S_{\bm{T}}(x)| = 1} \geq 0.03t\eps,$$ 
which implies that $p\geq 2/99> 0.02.$

Fix an arbitrary $T$ in the support of $\Tal(m,\eps)$ such that $\GoodTalagrand(T)$ happens. Note that for each $x$ such that $S_T(x)=\{\ell\}$ for some $\ell\in[L]$, $\mathbb{E}[\bm{b}_{\ell}]=1/2.$ So by  linearity of expectation and Markov's inequality, we know that with probability at least $99\%$, there is an $\Omega(\eps)$-fraction of points $x$ such that $S_T(x)=\{\ell\}$ and $b_{\ell}=1$. We write $\ManyOne$ to denote this event.

Assuming that both $\ManyOne$ and $\GoodTalagrand$ happen (whose probability is at least $1\%$), below we will show that $f_{\mathrm{no}}$ is $\Omega(\eps)$-far from intersecting, which will finish the proof.

For each $x_C\in\{0,1\}^m=\{0,1\}^{n-a}$ with $S_T(x_C)=\{\ell\}$ and $b_{\ell}=1$, we show that there is a set of violating pairs of size at least $\Omega(2^a)$. In particular, we show that there are at least $\Omega(2^a)$ pairs $(x^i_A,y^i_A)$ such that $f(x_C,x^i_A,0,1)=f(\overline{x}_C,y^i_A,1,0)=1$ but $x^i_A\cap y^i_A=\varnothing$. Then this lemma will follow from \Cref{lem:dist-to-intersecting}.

To this end, we include a folklore claim and its proof for completeness.
\begin{claim}\label{claim: k regular graph bipartite matching}
	For integer $0\leq w\leq a$, let $P_w$ denote the set of points in $\{0,1\}^a$ with Hamming weight $w$, i.e.~$P_w = \{x\in\{0,1\}^a \mid |x|=w\}$. Then for any $0\leq w< a/2$, the bipartite graph $(P_w,P_{a-w})$ with the poset relations as edges has a perfect matching.
\end{claim}
\begin{proof}

	The key point is that $(P_w,P_{a-w})$ is a $k$-regular bipartite graph, where $k={a-w \choose w}$.
	
	 We apply Hall's theorem to show this claim. Consider any subset $S\subseteq P_{a-w}$. The number of edges associated with $S$ is exactly $k|S|$. Let $\calN(S)$ be the neighborhood of $S$. Then we know the number of edges associated with $\calN(S)$ is exactly $k|\calN(S)|$, and these edges include the $k|S|$ edges above. So we have $k|S|\leq k|\calN(S)|$, which means $|S|\leq |\calN(S)|$.
\end{proof}

Fix any point $x_C\in\{0,1\}^m=\{0,1\}^{n-a}$ such that $S_T(x_C)=\{\ell\}$ and $b_{\ell}=1$ for some $\ell\in[L]$. Consider the action cube $\{0,1\}^a$. For any Hamming weight $w<\frac{a}{2}-\sqrt{a}$, fix a perfect matching $\{(x^i_A,\overline{y}^i_A)\}$ of $(P_w,P_{a-w})$. Then we note that $f(x_C,x^i_A,0,1)=1$, whereas $f(\overline{x}_C,y^i_A,1,0)=1$. The number of points with Hamming weight less than $\frac{a}{2}-\sqrt{a}$ is $\Omega(2^a)$, so there are $\Omega(2^a)$ violating pairs.

Then the lemma follows from the fact there are $\Omega(\eps\cdot 2^{n-a})$ points $x_C\in\{0,1\}^{n-a}$ such that $S_T(x_C)=\{\ell\}$ and $b_{\ell}=1$ for some $\ell\in[L]$ (since we assumed that both $\ManyOne$ and $\GoodTalagrand$ happen). So overall we have $\Omega(\eps)\cdot 2^n$ violating pairs, which means the function $f_{\mathrm{no}}$ is $\Omega(\eps)$-far from intersecting.
\end{proof}

\subsection{Indistinguishability of $\Dy$ and $\Dn$}\label{sec:hehe1}

In this section we establish the indistinguishability of the distributions $\Dy$ and $\Dn$. Specifically, for any nonadaptive deterministic algorithm $\calA$ with query complexity $q=2^{0.1n^{1/4}/\sqrt{\eps}}$, we show that
\begin{equation} \label{eq:goal}
\Prx_{\fyes\sim \Dy}[\calA \text{ accepts }\fyes]\leq \Prx_{\fno\sim \Dn}[\calA \text{ accepts }\fno]+ o_n(1).
\end{equation}
Our arguments closely follow the approach for proving indistinguishability that was used in \cite{CDLNS23}.

We begin with some simplifying assumptions: for any point $u\in\{0,1\}^{n+2}$ that is queried by the algorithm $\calA$ we assume that $u_{n+1}\neq u_{n+2}$ (since otherwise the answer to the query must be 0), and we assume that for each point $u \in \zo^{n+2}$ that is queried by ${\cal A}$ the point $\overline{u}$ is also queried as well (since this only affects the query complexity by at most a factor of two). So the set of $q$ query points of $\calA$ can be  characterized by a set $Q_{\calA}\coloneqq \{x^1,\cdots,x^q\}\subseteq \{0,1\}^n$, where both $(x^i,0,1)$ and $(\overline{x}^i,1,0)$ are queried for each $i\in[q]$.

A crucial step of the argument is that the only way for $\calA$ to distinguish $\Dy$ and $\Dn$ is to query two points $x^i,x^j$ with $S_T(x^i_C)=S_T(x^j_C)=\{\ell\}$ for some $\ell \in [L]$ such that one is in the top region and the other is in the bottom region of the action cube, namely $|x^i_A|>\frac{a}{2}+\sqrt{a}$ and $|x^j_A|<\frac{a}{2}-\sqrt{a}$. We let $\Bad$ denote this event (that $Q_{\calA}$ contains two points $x^i,x^j$ satisfying the above conditions).

Formally, let us write $\calA(f)$ to denote the sequence of $q$ answers to the queries made by $\calA$ to $f$.  We write $\mathrm{view}_{\calA}(\Dy)$ (respectively $\mathrm{view}_{\calA}(\Dn)$) to be the distribution of $\calA(f)$ for $\bm{f}\sim \Dy$ (respectively $\bm{f} \sim \Dn)$. The following claim asserts that conditioned on $\Bad$ not happening, the distributions $\mathrm{view}_{\calA}(\Dy|_{\overline{\Bad}})$ and $\mathrm{view}_{\calA}(\Dn|_{\overline{\Bad}})$ are identical.

\begin{lemma}\label{lemma: Bad is the only different place between Dy and Dn}
	$\mathrm{view}_{\calA}(\Dy|_{\overline{\Bad}})=\mathrm{view}_{\calA}(\Dn|_{\overline{\Bad}}).$
\end{lemma}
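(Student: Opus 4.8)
The plan is to show that, conditioned on $\overline{\Bad}$, the answer vector $\calA(\fyes)$ and $\calA(\fno)$ have the same distribution. Recall that the queries are captured by $Q_{\calA} = \{x^1,\dots,x^q\} \subseteq \zo^n$, where for each $i$ both $(x^i,0,1)$ and $(\overline x^i,1,0)$ are queried; the answers on inputs with equal last two bits are identically $0$ in both worlds, so those contribute nothing. Both $\Dy$ and $\Dn$ sample $(\bA,\bT,\bb)$ from exactly the same distribution, so it suffices to couple the two constructions on the same $(\bA,\bT,\bb)$ and argue that, on the event $\overline{\Bad}$, the two functions agree on every queried point. Equivalently, I would fix any outcome $(A,T,b)$ for which $\overline{\Bad}$ holds (i.e.\ $Q_\calA$ contains no pair $x^i,x^j$ with $S_T(x^i_C)=S_T(x^j_C)=\{\ell\}$ and $|x^i_A|>\tfrac a2+\sqrt a$, $|x^j_A|<\tfrac a2-\sqrt a$) and show $\fyes(u)=\fno(u)$ for all $u \in \{(x^i,0,1),(\overline x^i,1,0) : i\in[q]\}$.

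The key step is a pointwise case analysis. Consider a query $(x^i,0,1)$. If $|S_T(x^i_C)|\neq 1$ then both $\fyes$ and $\fno$ output $0$. Otherwise $S_T(x^i_C)=\{\ell\}$ for a unique $\ell$, and the value is $g^{(+,b_\ell)}(x^i_A)$ for $\fyes$ and $g^{(-,b_\ell)}(x^i_A)$ for $\fno$. Comparing the definitions of the $g$-functions, $g^{(+,c)}$ and $g^{(-,c)}$ agree on every $x_A$ with $|x_A|\in[\tfrac a2-\sqrt a,\tfrac a2+\sqrt a]$ (both are $0$ there), and also agree at every weight level except that they can differ on the ``top'' region $|x_A|>\tfrac a2+\sqrt a$ and the ``bottom'' region $|x_A|<\tfrac a2-\sqrt a$ — and even there, $g^{(+,c)}$ and $g^{(-,c)}$ disagree only when one region is ``active.'' The point is that, for a fixed $\ell$, $g^{(+,\cdot)}$ and $g^{(-,\cdot)}$ together can only be distinguished by a pair of queries landing in the two opposite extreme regions of the action cube while hitting the same term $\ell$; a single query, or queries hitting distinct terms, or queries both in the same extreme region, all give identical $\fyes$ and $\fno$ values. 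The analogous statement holds for queries of the form $(\overline x^i,1,0)$, using $S_T(\overline x^i_C)$ and the fact that there the $\Dy$ construction swaps $b_\ell \leftrightarrow \overline{b_\ell}$ in the $g^{(+,\cdot)}$ assignment while $\Dn$ does not — but a quick check shows the resulting value still lies in the set $\{g^{(\pm,0)}(\cdot), g^{(\pm,1)}(\cdot)\}$ and is governed by the same ``which extreme region, which term'' bookkeeping. I would organize this as: for each term index $\ell\in[L]$, let $Q_\ell := \{x^i \in Q_\calA : S_T(x^i_C)=\{\ell\}\}$ together with the $\overline x$ side, and observe that $\overline{\Bad}$ forces all of $Q_\ell$ (on each side) to avoid having both a top point and a bottom point; then a direct comparison of the four $g$-functions shows $\fyes=\fno$ on $Q_\ell$.

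I expect the main obstacle to be bookkeeping the $\fno(x,1,0)$ versus $\fyes(x,1,0)$ branch correctly: in $\Dy$ the assignment on the $(\cdot,1,0)$ block uses $g^{(+,\overline{b_\ell})}$ (the complemented bit), whereas in $\Dn$ it uses $g^{(-,\cdot)}$ with the un-complemented bit. One must check that this complementation is exactly compensated by the structure of the $g$-functions so that $\fyes(x,1,0)$ and $\fno(x,1,0)$ still coincide outside the $\Bad$ configuration — in particular, one should double check that $g^{(+,0)}\equiv 0 \equiv g^{(-,1)}$ on the bottom region is \emph{not} true (indeed $g^{(-,1)}=1$ on the bottom) and trace through which combinations actually matter. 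Once the four-way comparison of $g^{(+,0)},g^{(+,1)},g^{(-,0)},g^{(-,1)}$ is tabulated by Hamming-weight region, the rest is immediate. Having shown $\fyes=\fno$ on all query points for every $(A,T,b)$ in $\overline{\Bad}$, the coupling gives $\mathrm{view}_\calA(\Dy|_{\overline{\Bad}}) = \mathrm{view}_\calA(\Dn|_{\overline{\Bad}})$, as claimed.
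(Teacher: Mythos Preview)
Your proposed coupling on the same $(A,T,b)$ and pointwise equality \emph{does not work}. Here is a concrete failure: take any query $x^i$ with $S_T(x^i_C)=\{\ell\}$ and $|x^i_A|>\tfrac{a}{2}+\sqrt{a}$ (top region). Then
\[
\fyes(x^i,0,1)=g^{(+,b_\ell)}(x^i_A)=b_\ell,
\qquad
\fno(x^i,0,1)=g^{(-,b_\ell)}(x^i_A)=1-b_\ell,
\]
since on the top region $g^{(+,0)}=0,\ g^{(+,1)}=1,\ g^{(-,0)}=1,\ g^{(-,1)}=0$. These disagree for every fixed $b_\ell$. The same happens on the $(1,0)$ side: the answer pair is $(b_\ell,1-b_\ell)$ for $\fyes$ and $(1-b_\ell,b_\ell)$ for $\fno$. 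So your claim that ``a single query, or queries both in the same extreme region, all give identical $\fyes$ and $\fno$ values'' is false, and the complementation you flagged on the $(1,0)$ block is not ``compensated'' --- it is essential.

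The correct argument is \emph{distributional over $\bb$}, not pointwise. Fix $(A,T)$ and a term index $\ell$. Under $\overline{\Bad}$, all non-middle action-cube queries in $Q_\calA(\ell)$ lie on the \emph{same} side (all top or all bottom); middle points are $0$ in both worlds. On that common side, every $(0,1)$-answer in $Q_\calA(\ell)$ equals a single bit $c$, and the paired $(1,0)$-answer equals $1-c$, in both $\Dy$ and $\Dn$. The only difference is which function of $b_\ell$ the bit $c$ is: for $\Dy$ it is $b_\ell$; for $\Dn$ it is $b_\ell$ (bottom) or $1-b_\ell$ (top). Since $b_\ell$ is uniform and independent across $\ell$, the distribution of $c$ --- and hence of the entire answer block for term $\ell$ --- is the same in both worlds. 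This is exactly what the paper does; your write-up needs to drop the identical-coupling claim and use the randomness of $\bb$ in this way.
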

\begin{proof}
	The distributions of the partition of $[n]$ into control variables $\bm{C}$ and action variables $\bm{A}$ are identical for $\Dy$ and $\Dn$. So fix an arbitrary partition $C$ and $A$. As the distribution of the Talagrand DNF $\bm{T}\sim\Tal(m,\eps)$ is also identical, we fix an arbitrary $T$.
	
	We divide the points $Q_{\calA}$ into disjoint groups according to $x_C$. More precisely, for every $\ell\in [L]$, let $Q_{\calA}(\ell)=\{x^i\mid S_T(x^i_C)=\{\ell\}\}$. The points outside $\bigcup_{\ell\in[L]} Q_{\calA}(\ell)$ are not important as $\bm{f}$ will be identically 0 for both $\Dy$ and $\Dn$.

	Let $\bm{f}_{\ell}(x)$ denote the function $\bm{f}(x,0,1)$ restricted to points in $Q_{\calA}(\ell)$, and let $\bm{f}'_{\ell}(x)$ similarly denote the function $\bm{f}(\overline{x},1,0)$ restricted to inputs $x\in Q_{\calA}(\ell)$. Note that for a fixed $\ell\in[L]$, the functions $\bm{f}_{\ell}(x)$ and $\bm{f}'_{\ell}(x)$ only depend on the random bit $\bm{b}_{\ell}$. As a result, the distributions of functions $\bm{f}_{\ell}(x)$ and $\bm{f}'_{\ell}(x)$ for different $\ell$ are independent.
	
	So fix an arbitrary $\ell\in[L]$. The condition that $\Bad$ does not happen implies that \textit{either} $|x_A|>a/2+\sqrt{a}$ for all $x\in Q_{\calA}(\ell)$ \textit{or} $|x_A|<a/2-\sqrt{a}$ for all $x\in Q_{\calA}(\ell)$, which holds for both $\Dy$ and $\Dn$. So we have $\bm{f}'_{\ell}(x)=1-\bm{f}_{\ell}(x)$ for all $x\in Q_{\calA}(\ell)$, which also holds for both $\Dy$ and $\Dn$.
	
	Finally, noticing that the distribution of $\bm{f}_{\ell}(x)$ is simply a uniform random bit $\bm{b}_{\ell}$ for both $\Dy$ and $\Dn$, this finishes the proof.
\end{proof}

Next, we show that the probability that $\Bad$ happens is small
(recall that $q=2^{0.1n^{1/4}/\sqrt{\eps}}$):

\begin{lemma}\label{lemma: Bad is unlikely}
	For any set of points $Q_{\calA}=\{x^1,\cdots,x^q\}\subseteq \{0,1\}^n$, $\prob[\Bad]=o_n(1)$.
\end{lemma}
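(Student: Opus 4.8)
The plan is to apply a union bound over the at most $q^2$ ordered pairs $(x,x')\in Q_{\calA}\times Q_{\calA}$ with $x\neq x'$, and to show that for each such fixed pair the probability (over the random partition $\bm{A}$ and the random DNF $\bm{T}$) of the ``bad event for this pair''---namely that $S_{\bm{T}}(x_{\bm{C}})=S_{\bm{T}}(x'_{\bm{C}})=\{\ell\}$ for some $\ell\in[L]$ while $|x_{\bm{A}}|>a/2+\sqrt{a}$ and $|x'_{\bm{A}}|<a/2-\sqrt{a}$---is at most $2^{-0.22\,n^{1/4}/\sqrt{\eps}}$. Since $q^2=2^{0.2\,n^{1/4}/\sqrt{\eps}}$ and $n^{1/4}/\sqrt{\eps}\to\infty$ in the range of parameters under consideration, this yields $\Pr[\Bad]=o_n(1)$. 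Throughout I write $d=d(x,x')$ for the Hamming distance of $x$ and $x'$ and, for a fixed partition $C\sqcup A=[n]$, write $d_C$ (resp.\ $d_A$) for the number of coordinates in $C$ (resp.\ $A$) on which $x$ and $x'$ disagree; recall $a=\sqrt{n}/\eps$, $m=|C|=n-a$, and $L=0.1\cdot 2^{\sqrt{m}/\eps}$.

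The first and most important step will be the following \emph{control-side estimate}: for every fixed partition $C\sqcup A=[n]$ and every fixed $x,x'\in\zo^n$,
\[
\Prx_{\bm{T}\sim\Tal(m,\eps)}\big[\exists\,\ell\in[L]:\ S_{\bm{T}}(x_C)=S_{\bm{T}}(x'_C)=\{\ell\}\big]\ \leq\ \exp\!\Big(-\frac{d_C}{2\eps\sqrt{m}}\Big).
\]
To prove this I would use that the $L$ terms of $\bm{T}$ are drawn independently. Fixing $\ell$, the event $\{S_{\bm{T}}(x_C)=S_{\bm{T}}(x'_C)=\{\ell\}\}$ forces $\bm{T}_\ell$ to be satisfied by both $x_C$ and $x'_C$---an event of probability $(c_{11}/m)^{\sqrt{m}/\eps}$, where $c_{11}$ is the number of coordinates of $C$ equal to $1$ in both $x$ and $x'$---and forces every other term to be falsified by $x_C$, of probability $(1-(|x_C|/m)^{\sqrt{m}/\eps})^{L-1}$. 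Summing over $\ell$ and applying the elementary bound $t e^{-t}\leq 1/e$ (together with $L\geq 2$) collapses the resulting expression to at most $(c_{11}/|x_C|)^{\sqrt{m}/\eps}=\big(1-(|x_C|-c_{11})/|x_C|\big)^{\sqrt{m}/\eps}\leq\exp\!\big(-(|x_C|-c_{11})/(\eps\sqrt{m})\big)$. Running the same argument with the roles of $x$ and $x'$ exchanged bounds the probability by $\exp(-(|x'_C|-c_{11})/(\eps\sqrt{m}))$ as well, and since the two numerators $|x_C|-c_{11}$ and $|x'_C|-c_{11}$ (the numbers of $C$-coordinates that are $1$ in exactly one of $x,x'$) sum to $d_C$, their maximum is at least $d_C/2$, giving the displayed bound. (When $|x_C|=0$ or $c_{11}=0$ the probability is trivially $0$.)

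Given this estimate, the remaining steps are routine. If $\Bad$ occurs via $(x,x')$ then subtracting the two weight inequalities shows $d_{\bm{A}}>2\sqrt{a}$; conditioning on $\bm{A}$ and then invoking the control-side estimate (with $d_C=d-d_{\bm{A}}$) gives
\[
\Pr[\Bad\text{ via }(x,x')]\ \leq\ \Ex_{\bm{A}}\Big[\mathbf{1}\{d_{\bm{A}}>2\sqrt{a}\}\cdot\exp\!\Big(-\frac{d-d_{\bm{A}}}{2\eps\sqrt{m}}\Big)\Big].
\]
Here $d_{\bm{A}}$ is hypergeometric with mean $da/n$, so $\Pr[d_{\bm{A}}\geq t]\leq{d\choose t}(a/n)^t\leq (eda/(tn))^t$. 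The last step is a short case analysis on $d$ with a threshold of the form $n/(C_1\sqrt{a})$ for an appropriate constant $C_1$ (e.g.\ $C_1=3/2$): if $d\leq n/(C_1\sqrt{a})$ then $\mathbb{E}[d_{\bm{A}}]\leq\sqrt{a}/C_1$ and $\{d_{\bm{A}}>2\sqrt{a}\}$ is a large-deviation event with probability at most $(e/(2C_1))^{2\sqrt{a}}=2^{-\Omega(\sqrt{a})}$; if $d>n/(C_1\sqrt{a})$, split the expectation according to whether $d_{\bm{A}}\geq d/2$ (another large deviation, since $\mathbb{E}[d_{\bm{A}}]=da/n$ is tiny compared to $d/2$, giving probability at most $(2ea/n)^{d/2}\leq 3^{-n/(2C_1\sqrt{a})}$ once $c_0$ is large enough that $2ea/n\leq 1/3$) or $d_{\bm{A}}<d/2$ (so $d-d_{\bm{A}}\geq d/2>n/(2C_1\sqrt{a})$, making the control-side exponential at most $\exp(-n/(4C_1\sqrt{a}\eps\sqrt{m}))=\exp(-\Omega(n^{1/4}/\sqrt{\eps}))$). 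Using $\sqrt{a}=n^{1/4}/\sqrt{\eps}$, $\sqrt{m}\leq\sqrt{n}$, and $\eps\sqrt{n}\geq c_0$ (which gives $n^{3/4}\sqrt{\eps}\geq c_0\,n^{1/4}/\sqrt{\eps}$), all of these bounds are $2^{-\Omega(n^{1/4}/\sqrt{\eps})}$ with an exponent constant exceeding $0.2$ when $c_0$ is a sufficiently large absolute constant.

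The main obstacle is the control-side estimate. For adversarially chosen query points, both $x_C$ and $x'_C$ may have weight close to $m/2$, so they satisfy many common terms and the bare event ``$\exists\,\ell$ with $\bm{T}_\ell$ satisfied by both'' already has probability $\Omega(1)$; the point is that one must exploit the \emph{uniquely}-satisfied requirement, and the inequality $t e^{-t}\leq 1/e$ is exactly what trades off the two competing effects (``hard for $x_C,x'_C$ to satisfy a common term'' versus ``hard for $x_C$ to satisfy exactly one term'') to produce the single clean exponential $\exp(-d_C/(2\eps\sqrt{m}))$. The only other delicate point is choosing the threshold $n/(C_1\sqrt{a})$ in the final case analysis so that the action-side large-deviation bound and the control-side exponential are simultaneously small enough to beat the $q^2$ union bound.
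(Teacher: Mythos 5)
Your proposal is correct and follows the same overall strategy as the paper's proof: a union bound over the $\leq q^2$ pairs of query points, a decoupling of each pair's bad event into an action-side weight-gap condition and a control-side ``common uniquely-satisfied term'' condition, and a threshold case analysis on the distance between the two points. The execution differs in two places. First, you obtain the control-side estimate $\Pr_{\bm T}\big[\exists\,\ell: S_{\bm T}(x_C)=S_{\bm T}(x'_C)=\{\ell\}\big]\le\exp(-d_C/(2\eps\sqrt m))$ by writing out the exact per-term product and collapsing the resulting $L\cdot(\cdot)\cdot(1-p)^{L-1}$ expression via the elementary inequality $se^{-s}\le 1/e$; the paper instead conditions on the index of the uniquely-satisfied term and bounds only $\Pr[T_\ell(x)=1\mid T_\ell(y)=1]$, arriving at a comparable exponential by a shorter argument. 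Your version is more explicit about how the factor $(1-p)^{L-1}$ cancels the union-bound factor $L$, which is what makes the estimate genuinely about \emph{unique} satisfaction. Second, you fix the partition $\bm A$ first and then integrate, which forces the secondary split on whether $d_{\bm A}\ge d/2$ in the large-distance case; the paper's $\min(\Pr[\diamond],\Pr[\star])$ decomposition, with both events taken over the joint randomness of $(\bm A,\bm T)$, decouples the two requirements and avoids this extra split. A small bonus of your write-up: your threshold scaling $d\approx n^{3/4}\sqrt\eps$ is the right one, whereas the paper's displayed threshold $\tfrac14 n^{3/4}/\sqrt\eps$ for $t=|I_{01}|$ appears to misplace the $\sqrt\eps$---with it as written, the claimed bound $\Pr[\diamond]\le 2^{-n^{1/4}/\sqrt\eps}$ does not follow from the paper's own inequality once $\eps\ll 1$, whereas the corrected threshold (with $\sqrt\eps$ in the numerator, as you have it) makes both the $\diamond$ and $\star$ estimates go through.
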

\begin{proof}
	Fix any two points $x,y\in\{0,1\}^n$. We will upper bound the probability that $S_{\bT}(x_{\bm{C}})=S_{\bT}(y_{\bm{C}})=\{\ell\}$ for some $\ell\in[L]$ and $|x_{\bm{A}}|<\frac{a}{2}-\sqrt{a}$ and $|y_{\bm{A}}|>\frac{a}{2}+\sqrt{a}$. Call this specific event $\Bad_{xy}$.
	
	Let $I_{01}$ be the set of indices $i$ such that $x_i=0$ and $y_i=1$. On the one hand, to have $\Bad_{xy}$ happen, we must have that
	\begin{equation} \label{eq:diamond}
		|I_{01}\cap \bm{A}|\geq 2\sqrt{a}. \tag{$\diamond$}
	\end{equation}
	
	On the other hand, to have  $S_{\bT}(x_{\bm{C}})=S_{\bT}(y_{\bm{C}})=\{\ell\}$, we must have that 
	\begin{equation}
		\text{There exists an } \ell \in [L]\text{ such that }S_{\bT}(x)=S_{\bT}(y)=\{\ell\}. \tag{$\star$}
	\end{equation}

So we have $\prob[\Bad_{xy}]\leq \min(\prob[\diamond],\prob[\star])$; we will show that
\[
\min(\prob[\diamond],\prob[\star])
\leq 2^{-0.05n^{1/4}/\sqrt{\eps}}.
\]
	
	Let $t=|I_{01}|$. Then by the random choice of the coordinates defining the action cube $\bA$, we have
\begin{align*}
    \prob[\diamond] & \leq \prob\left [\mathrm{Bin}\left (a,\frac{t}{n-a}\right )\geq 2\sqrt{a}\right]\leq {a\choose 2\sqrt{a}}\cdot \left(\frac{t}{n-a}\right)^{2\sqrt{a}}\\
    & \leq \left(\frac{ea}{2\sqrt{a}}\right)^{2\sqrt{a}}\cdot \left(\frac{t}{n-a}\right)^{2\sqrt{a}}\leq \left(\frac{et\sqrt{a}}{2(n-a)}\right)^{2\sqrt{a}}\leq \left(\frac{et\sqrt{a}}{2(1-\frac{1}{c_0})n}\right)^{2\sqrt{a}}.
\end{align*}
To bound $\prob[\star]$, we use
\begin{align*}
\prob[\star]&=\prob[S_{\bT}(x)=S_{\bT}(y) \ \& \ \exists \ell \in [L] \text{~such that~}S_{\bT}(y)=\{\ell\}]\\
&\leq \prob[S_{\bT}(x)=S_{\bT}(y) \ \mid \ \exists \ell \in [L] \text{~such that~}S_{\bT}(y)=\{\ell\}]\\
&\leq \max_{\ell \in [L]} \prob[S_{\bT}(x)=S_{\bT}(y) \ \mid \ S_{\bT}(y)=\{\ell\}]\\
&\leq \left(1-\frac{t}{n-a}\right)^{\sqrt{n-a}/\eps}\leq e^{-t/(\eps\sqrt{n-a})}\leq e^{-t/(\eps\sqrt{n})},
\end{align*}
where the last line above is by the definition of the random process $\bT \sim \Tal(n-a,\eps)$.

When $t\leq \frac{1}{4}n^{3/4}/\sqrt{\eps}$, we have $\prob[\diamond]\leq 2^{-n^{1/4}/\sqrt{\eps}}.$ When $t\geq \frac{1}{4}n^{3/4}/\sqrt{\eps}$, we have $\prob[\star]\leq 2^{-0.25n^{1/4}/\sqrt{\eps}}.$

	So overall we have $$\prob[\Bad_{xy}]\leq \min(\prob[\diamond],\prob[\star])\leq 2^{-0.25n^{1/4}/\sqrt{\eps}}.$$
	
	By a union bound for all pairs of points of $Q_{\calA}$, we know that $$\prob[\Bad]\leq 2^{-0.25n^{1/4}/\sqrt{\eps}}\cdot \left(2^{0.1n^{1/4}/\sqrt{\eps}}\right)^2=o_n(1),$$
	and the lemma is proved.
\end{proof}

Now we are ready to prove \Cref{thm:main-lb-intersecting}.

\begin{proof}[Proof of \Cref{thm:main-lb-intersecting}]
Let $\calD=\frac{1}{2}\{\Dy+\Dn\}$. Then we have
\begin{align}
    \Prx_{\bm{f}\sim \calD}[\calA \text{ is correct on }\bm{f}]&= \frac{1}{2}\left(\Prx_{\fyes\sim \Dy}[\calA \text{ is correct on }\fyes]+\Prx_{\fno\sim \Dn}[\calA \text{ is correct on }\fno]\right)\nonumber\\
    &= \frac{1}{2}\left(\Prx_{\fyes\sim \Dy}[\calA \text{ accepts }\fyes]+\Prx_{\fno\sim \Dn}[\calA \text{ is correct on }\fno]\right)\label{line: 2}\\
    &\leq \frac{1}{2}\left(\Prx_{\fyes\sim \Dy}[\calA \text{ accepts }\fyes]+0.99+0.01\Prx_{\fno\sim \Dn}[\calA \text{ rejects }\fno]\right)\label{line: 3}\\
    &= \frac{1}{2}\left(\Prx_{\fyes\sim \Dy}[\calA \text{ accepts }\fyes]+1-0.01\Prx_{\fno\sim \Dn}[\calA \text{ accepts }\fno]\right)\nonumber\\
    &\leq \frac{199}{200}+\frac{1}{200}\left(\Prx_{\fyes\sim \Dy}[\calA \text{ accepts }\fyes]-\Prx_{\fno\sim \Dn}[\calA \text{ accepts }\fno]\right)\nonumber\\
    &= \frac{199}{200}+\frac{\Pr[\Bad]}{200}\left(\Prx_{\fyes\sim \Dy|_{\Bad}}[\calA \text{ accepts }\fyes]-\Prx_{\fno\sim \Dn|_{\Bad}}[\calA \text{ accepts }\fno]\right)\label{line: 6}\\
    &\leq \frac{199}{200}+\frac{\Pr[\Bad]}{200}\nonumber\\
    & \leq \frac{199}{200}+ o_n(1)\label{line: 8},
\end{align}
where \Cref{line: 2} is because of \Cref{lemma: Dyes is intersecting}, \Cref{line: 3} is because $\fno$ is not $\eps$-far from intersecting with probability at most 0.99 thanks to \Cref{lem:hehe2}, \Cref{line: 6} is from \Cref{lemma: Bad is the only different place between Dy and Dn}, and \Cref{line: 8} follows from \Cref{lemma: Bad is unlikely}. \Cref{thm:main-lb-intersecting} now follows from Yao's minimax principle (\Cref{thm:yao-minimax}).
\end{proof}

\subsection{A $2^{\Omega(\sqrt{n \log(1/\eps)})}$ Lower Bound for One-Sided Non-adaptive Testers of Intersectingness}
\label{sec:intersecting-one-sided}

In this section we prove~\Cref{thm:one-sided-lb-intersecting}, by giving a $2^{\Omega(\sqrt{n \log(1/\eps)})}$-query complexity lower bound against any \emph{non-adaptive} and \emph{one-sided} algorithm testing $\eps$-intersectingness. 
This almost matches the query complexity of our $n^{O(\sqrt{n\log(1/\eps)})}/\eps$-query one-sided non-adaptive algorithm even for constant $\eps$.

Since we are working against one-sided algorithms, it suffices for us to describe a distribution $\Dn$ over $f:\{0,1\}^{n+2}\rightarrow\{0,1\}$ of ``no''-functions (functions that are far from intersecting). Let $K=\sqrt{n\ln(1/\eps)}$. A draw from our $\Dn$ distribution is obtained as follows: first, we sample a subset $\bm{A}\subseteq [n]$ of size $a=n/100$ uniformly at random (looking ahead, 100 will be an important constant later in the proof). Then $\fno\sim\Dn$ is defined by letting $\fno(x,0,0)=\fno(x,1,1)=0$ for all $x\in\{0,1\}^n$, and

\begin{equation*}
\fno(x,0,1) = \fno(x,1,0) =
	\begin{cases}
       \ 0 & |x|\not\in [n/2-10K,n/2+10K]; \\[0.5ex]
       \ 0 & |x_{\bm{A}}|>n/200+K;\\[0.5ex]
       \ 0 & |x_{\bm{A}}|\in[n/200-K,n/200+K];\\[0.5ex]
       \ 1 & |x_{\bm{A}}|<n/200-K.
    \end{cases}
\end{equation*}
The constant ``10'' above will also be important vis-a-vis the ``100'' in the definition of the size of $\bm{A}$.

We first show that every $\fno\sim \Dn$ is $\eps^{O(1)}$-far from intersecting (observe that this suffices for our claimed lower bound, since the difference between $\eps$ and $\eps^{O(1)}$ is swallowed up by the log and the big-Omega):

\begin{lemma}
	Every $\fno$ in the support of 	$\Dn$ is $\eps^{O(1)}$-far from intersecting.
\end{lemma}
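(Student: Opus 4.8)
The plan is to exhibit, for an arbitrary $\fno$ in the support of $\Dn$, a large collection of pairwise disjoint ``intersecting-violating pairs'' of satisfying assignments, and then invoke the same kind of counting lemma (the analog of \Cref{lem:dist-to-intersecting} used in the proof of \Cref{lem:hehe2}) to conclude that $\fno$ is $\eps^{O(1)}$-far from intersecting. Concretely, fix the random set $A$ of size $a=n/100$; all that remains is deterministic. The satisfying assignments of $\fno$ are precisely the points $(x,0,1)$ and $(x,1,0)$ with $|x|\in[n/2-10K,n/2+10K]$ and $|x_A|<n/200-K$. An intersecting violation is a pair of satisfying assignments $u,v$ with $u_i v_i = 0$ for every coordinate $i$, i.e.\ $u$ and $v$ have disjoint supports. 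The natural choice is to pair $u=(x,0,1)$ with $v=(\overline{x},1,0)$: the last two coordinates are disjoint by construction (the pair $01$ versus $10$), and on the first $n$ coordinates $x$ and $\overline{x}$ are automatically disjoint. So it suffices to find many $x\in\{0,1\}^n$ such that \emph{both} $x$ and $\overline{x}$ are in the ``$\fno$-satisfying'' region: that is, $|x|\in[n/2-10K,n/2+10K]$, $|x_A|<n/200-K$, and simultaneously $|\overline{x}|\in[n/2-10K,n/2+10K]$, $|\overline{x}_A|<n/200-K$. Note $|\overline{x}|=n-|x|$, so the weight constraint on $\overline{x}$ is the same band $[n/2-10K,n/2+10K]$; and $|\overline{x}_A| = a - |x_A| = n/100 - |x_A|$, so we need $|x_A|<n/200-K$ \emph{and} $|x_A| > n/100 - (n/200-K) = n/200 + K$ — but these two conditions are contradictory, so the naive antipodal pairing gives nothing.

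This is the main obstacle, and the fix is that the disjoint pair need not be an antipodal pair. The right move is: take a point $u=(x,0,1)$ with $x$ supported on a $w$-element set $S\subseteq[n]$, and pair it with $v=(y,1,0)$ where $y$ is supported on \emph{some} set disjoint from $S$ (not necessarily the full complement). We need $|y|\in[n/2-10K,n/2+10K]$ and $|y_A|<n/200-K$. So I would proceed as follows. First pick the ``action part'': choose $x_A$ to have weight $w_1 < n/200 - K$ (there are $\Omega(1)$-fraction-many such choices among all weight patterns in $A$ once $w_1$ is $\Theta(n)$ below $a/2=n/200$ — here we use $K=\sqrt{n\ln(1/\eps)} = o(n)$, so the band $\{|x_A|<n/200-K\}$ still captures a $2^{-\Theta(\ln(1/\eps))}=\eps^{\Theta(1)}$ fraction of $\{0,1\}^A$ by a standard binomial tail/anticoncentration estimate, and crucially $a - (n/200-K) = n/200 + K$ is also $\Theta(n)$ below $a$, so there's room for $y_A$ to be disjoint from $x_A$ inside $A$ with $|y_A|$ also $<n/200-K$). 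Then on the control coordinates $C=[n]\setminus A$ (of size $\Theta(n)$), pick $x_C$ and $y_C$ to be disjoint subsets of $C$, each of weight roughly $n/4$, so that $|x|=w_1+|x_C|$ and $|y|=|y_A|+|y_C|$ both land in $[n/2-10K,n/2+10K]$. Here the tension is mild: we need $|x_C|\approx n/2 - w_1 = n/2 - \Theta(n) \approx$ some value bounded away from $0$ and $|C|$, and likewise for $|y_C|$, and we need $x_C, y_C$ disjoint inside $C$; since each has weight a constant fraction of $|C|$ with the two constant fractions summing to less than $1$, Stirling/counting shows there are $2^{(1-o(1))|C|} = 2^{\Theta(n)}$ such ordered pairs, and one checks the target weight window of width $20K=\Theta(\sqrt{n\log(1/\eps)})$ around $n/2$ still captures at least a $\eps^{\Theta(1)}$ fraction of these (again a binomial-tail computation, using $K\gg\sqrt n$).

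Having produced a family of violating pairs $\{(u^j, v^j)\}$ of size $\eps^{\Theta(1)}\cdot 2^n$, I would then extract from it a sub-collection that is \emph{disjoint in the sense required by the counting lemma} — concretely, since each violating pair certifies that we must flip $\fno$ to $0$ on at least one of its two endpoints, and each point of $\{0,1\}^{n+2}$ participates in boundedly many of our constructed pairs (or, more cleanly, greedily select pairs with all $2|\text{family}|$ endpoints distinct, losing only a constant factor), the number of coordinates that must be corrected to make $\fno$ intersecting is $\Omega(\eps^{\Theta(1)} 2^n)$, i.e.\ $\dist(\fno,\text{intersecting}) \geq \eps^{O(1)}$. (This last step mirrors exactly the end of the proof of \Cref{lem:hehe2}, appealing to \Cref{lem:dist-to-intersecting}, and is where I'd defer to that lemma rather than re-prove it.) The genuinely delicate part is the middle paragraph: verifying that simultaneously forcing $x_A$, $y_A$ into the low-weight action band \emph{and} $x$, $y$ into the narrow weight window $[n/2\pm 10K]$ still leaves $\eps^{\Theta(1)}\cdot 2^n$ many disjoint pairs — this is precisely where the constants $100$ (size of $A$) and $10$ (width of the global band relative to $K$) are chosen to make all the binomial-tail bounds compatible, so I would carry out those estimates carefully, using Chernoff/Stirling with the fact that $\sqrt n \ll K \ll n$.
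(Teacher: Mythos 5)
The high-level plan --- find $\eps^{\Theta(1)}\cdot 2^n$ pairwise vertex-disjoint $\mathsf{I}$-violating pairs and invoke \Cref{lem:dist-to-intersecting} --- is exactly the paper's, and your observation that the naive antipodal pairing $(x,0,1)\leftrightarrow(\overline{x},1,0)$ fails because the two action-weight constraints are contradictory is correct and important. However, the construction you sketch in the middle paragraph has two genuine problems.

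First, the counting is off by more than a constant. You say $x_C$ and $y_C$ ``each of weight roughly $n/4$'' with ``the two constant fractions summing to less than $1$,'' concluding there are $2^{(1-o(1))|C|}$ disjoint ordered pairs $(x_C,y_C)$. In fact $|x_C| = |x|-|x_A| \approx n/2 - n/200 = 99n/200 \approx |C|/2$, and likewise for $y_C$, so $|x_C|+|y_C|$ is forced to lie in $[\,|C|-O(K),\,|C|\,]$. The disjointness constraint is therefore nearly exact: once $x_C$ is fixed, $y_C$ must occupy all but $O(K)$ of $\overline{x_C}$, leaving only $n^{O(K)}$ choices rather than an exponential number. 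Relatedly, $|x_A|$ cannot be ``$\Theta(n)$ below $n/200$''; it must lie in a band of width $\Theta(K)=\Theta(\sqrt{n\log(1/\eps)})$ just below $n/200-K$ (otherwise the $\eps^{\Theta(1)}$-fraction estimate for $\zo^A$ fails), and this tighter range is what forces the near-exact split on $C$.

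Second, and more seriously, your vertex-disjointness step --- ``each point participates in boundedly many of our constructed pairs,'' or ``greedily select pairs with all endpoints distinct, losing only a constant factor'' --- is not justified. A given $y$ can be paired with many different $x$'s, and the degree can be as large as $n^{\Theta(K)}$, so greedy extraction could lose much more than a constant. The paper avoids this entirely with a perfect matching on the action cube (\Cref{claim: k regular graph bipartite matching}): one takes $y$ with $y_C=x_C$ and $y_A$ the poset-matched partner of $x_A$ in the bipartite matching between levels $w$ and $a-w$ of $\zo^A$, so that $x\le y$, and then pairs $(x,0,1)$ with $(\overline{y},1,0)$. Disjointness of these two strings follows immediately from $x\le y$, and the pairs are automatically vertex-disjoint because the map $x\mapsto y$ is a bijection determined by $x_C$ and the matching. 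This gives $\Omega(\eps^{\Theta(1)})\cdot 2^n$ disjoint violations with no extraction argument, and is the missing ingredient in your write-up.
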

\begin{proof}
	Fix an arbitrary $A\subseteq [n]$ with size $a=n/100$, which  determines a function $f_{\mathrm{no}}$ in the support of $\Dn$. For the convenience of notations, we use $C\coloneqq [n]\setminus A$.
	
	We first recall \Cref{claim: k regular graph bipartite matching} on the Boolean cube $\{0,1\}^A$. For any $w\in [0,n/100]$, let $\calP_w\coloneqq \{x \in \zo^A]\mid |x|=w\}$. By the same argument as~\Cref{claim: k regular graph bipartite matching}, we know for any $0\leq w< n/200$, the bipartite graph $(P_{w},P_{n/100-w})$ with poset relations as edges has a perfect matching. 
	
	Next, we use the Chernoff bound (which upper bounds the lower tail of the Binomial distribution) and a ``reverse Chernoff bound'' (which \emph{lower} bounds the lower tail  of the Binomial distribution) to show that 
	$$|\{x\in\{0,1\}^A\mid |x|\in[n/200-5K,n/200-K]\}|\geq (\eps^{1800}-\eps^{5000})2^{a}=\Omega(\eps^{1800})\cdot 2^a.$$
	To this end, for $w\in[0,n/200]$, let $\calP_{\leq w}$ to denote $\{x\in\zo^A\mid |x|\leq w\}$. Then it suffices to show that 
	$$|\calP_{\leq n/200-5K}|\leq \eps^{5000}\cdot 2^a,$$ 
	which follows from the standard Chernoff bound, and
	$$|\calP_{\leq n/200-K}|\geq \eps^{1800}\cdot 2^a,$$
	which follows from the following ``reverse Chernoff bound:''
	\begin{lemma}[\cite{KY15}, Lemma 4]
		Let $X$ be the sum of $k$ independent 0/1 random variables. For any $K\in(0,pk/2]$ and $p\in[0,1/2]$ such that $K^2/(pk)\geq 3$, if each random variable is 1 with probability at most $p$, then
		$$\Pr[X\leq pk-K]\geq \exp(-9K^2/(pk)).$$
	\end{lemma}
	
	Next, consider any $x\in\zo^n$ such that $|x|\in[n/2-10K,n/2]$ and $|x_A|\in[n/200-5K,n/200-K]$. Let $y\in\zo^n$ be such that $y_C=x_C$ and $y_A$ is the matched point of $x_C$ in the perfect matching. Then we have $|y|\in[n/2-10K,n/2+10K]$ and $|y_A|\in [n/200+K,n/200+5K]$.
	
	Note that for any such pair $(x,y)$ we have $x\leq y$, $f(x,0,1)=1$ and $f(\overline{y},1,0)=1$, which serves as an $\textsf{I}$-violating pair. Since the edges in a perfect matching are vertex-disjoint, we have the number of $\textsf{I}$-violating pairs is at least the number of $x\in\zo^n$ such that $|x|\in[n/2-10K,n/2]$ and $|x_A|\in[n/200-5K,n/200-K]$.
	
	We have shown that $$|\{x\in\{0,1\}^A\mid |x|\in[n/200-5K,n/200-K]\}|=\Omega(\eps^{1800})\cdot 2^a.$$
	Note also that $$|\{x\in\{0,1\}^C\mid |x|\in[99n/200-5K,99n/200]\}|=\Omega(1)\cdot 2^{n-a}.$$
	
	This finishes the proof.
\end{proof}

Below we show that for any nonadaptive deterministic query algorithm $\calA$ with query complexity $q=2^{0.9\sqrt{n \log(1/\eps)}}$ the probability that $\calA$ succeeds in finding a violation of intersectingness is $o_n(1)$; this proves \Cref{thm:one-sided-lb-intersecting}. 
\begin{proof}[Proof of \Cref{thm:one-sided-lb-intersecting}]
We establish the following lemma, from which the theorem follows by a straightforward union bound:
	\begin{lemma}
		For any two points $x,y\in\{0,1\}^n$ such that $|x|,|y|\in [n/2-10K,n/2+10K]$ and $x\leq y$, $$\Prx_{\bm{A}}[|x\cap \bm{A}|<n/200-K \text{ and } |y\cap \bm{A}|>n/200+K]\leq 2^{-2K}.$$
	\end{lemma}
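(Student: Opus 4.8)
The plan is to reduce the stated event to a bound on how many coordinates of the ``difference set'' of $x$ and $y$ fall into the random subset $\bm{A}$, and then dispatch that with a union bound, in the same spirit as the estimate on $\prob[\diamond]$ in the proof of \Cref{lemma: Bad is unlikely}.

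First I would set $D := \{i\in[n] : x_i = 0,\ y_i = 1\}$. Viewing $x\le y$ as the set containment $x\subseteq y$, there are no coordinates with $x_i=1,y_i=0$, so $x\cap\bm{A}\subseteq y\cap\bm{A}$ and $|y\cap\bm{A}| = |x\cap\bm{A}| + |D\cap\bm{A}|$; moreover $|D| = |y|-|x| \le 20K$ by the weight hypotheses $|x|,|y|\in[n/2-10K,\,n/2+10K]$. Now I would observe that if both $|x\cap\bm{A}| < n/200 - K$ and $|y\cap\bm{A}| > n/200 + K$ hold, then $|D\cap\bm{A}| = |y\cap\bm{A}| - |x\cap\bm{A}| > 2K$. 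Hence it suffices to prove $\Prx_{\bm{A}}\big[|D\cap\bm{A}| > 2K\big] \le 2^{-2K}$.

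Since $\bm{A}$ is uniform among $a$-element subsets of $[n]$ with $a = n/100$, for any fixed $s$-element set $S\subseteq D$ we have $\Prx_{\bm{A}}\big[S\subseteq\bm{A}\big] = \binom{n-s}{a-s}\big/\binom{n}{a} = \prod_{j=0}^{s-1}\frac{a-j}{n-j}\le (a/n)^s = 100^{-s}$. Because $|D\cap\bm{A}|$ is an integer, the event ``$|D\cap\bm A|>2K$'' forces ``$|D\cap\bm A|\ge s$'' for $s := \lceil 2K\rceil$; union-bounding over the $\binom{|D|}{s}$ choices of $S$ and using $\binom{|D|}{s}\le (e|D|/s)^s$, I would conclude that
\[
\Prx_{\bm{A}}\big[|D\cap\bm{A}| > 2K\big] \;\le\; \binom{|D|}{s}\cdot 100^{-s} \;\le\; \left(\frac{e|D|}{100\,s}\right)^{s} \;\le\; \left(\frac{20e}{200}\right)^{s} \;=\; \left(\frac{e}{10}\right)^{s} \;\le\; 2^{-2K},
\]
where the last inequality uses $s\ge 2K$ and $e/10 < 1/2$. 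I do not anticipate any genuine obstacle: the only care needed is bookkeeping --- the integrality reduction above, and checking that the constants line up. The ratio $a/n = 1/100$ against the bound $|D|\le 20K$ is exactly what makes the per-element inclusion probability small enough that raising it to the power $\approx 2K$ still beats $2^{-2K}$; this is precisely why the constants ``$100$'' in $|\bm{A}| = n/100$ and ``$10$'' in the weight window were chosen as they were.
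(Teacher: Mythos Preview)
Your proposal is correct and follows essentially the same route as the paper: define the difference set $D=\{i:x_i=0,\,y_i=1\}$, note $|D|\le 20K$, observe the event forces $|D\cap\bm A|\ge 2K$, and bound that hypergeometric tail. The only cosmetic difference is that the paper passes through a binomial $\mathrm{Bin}(n/100,\,20K/(0.99n))$ dominating the hypergeometric, whereas you bound $\Pr[S\subseteq\bm A]\le(a/n)^s$ directly and union-bound over $\binom{|D|}{s}$ subsets; both land on the same $(e/10)^{2K}\le 2^{-2K}$ estimate.
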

	\begin{proof}
		Let $I$ be the indices $i$ such that $x_i=0$ and $y_i=1$ and let $t=|I|$. Then we know $0\leq t\leq 20K$. On the other hand, in order for the event $|x\cap \bm{A}|<n/200-K \text{ and } |y\cap \bm{A}|>n/200+K$ to happen, the set $\bm{A}$ has to hit at least $2K$ many indices in $I$.
		
		So \begin{align*}
			&\quad\Prx_{\bm{A}}[|x\cap \bm{A}|<n/200-K \text{ and } |y\cap \bm{A}|>n/200+K] \\
			& \leq \Prx\left[\text{Bin}\left(n/100,\frac{20K}{0.99n}\right)\geq 2K\right]\leq {n/100\choose 2K}\cdot\left(\frac{20K}{0.99n}\right)^{2K}\\
			&\leq \left(\frac{en/100}{2K}\right)^{2K}\cdot\left(\frac{20K}{0.99n}\right)^{2K}\\
			&\leq \left(\frac{10e}{99}\right)^{2K}\leq 
			2^{-2K},
		\end{align*}
		completing the proof.
	\end{proof}
	By a union bound over all pairs of query strings where
	$q=2^{0.9K}=2^{0.9\sqrt{n \log(1/\eps)}}$,
	it follows that the probability that $\calA$ succeeds in finding a violation of intersectingness is $o_n(1)$. Since a one-sided tester must find such a violation in order to reject, this finishes the proof.
\end{proof}


\section{Lower bounds for Testing Union-Closed Families}
\label{sec:union}

In this section, we prove a $n^{\Omega(\log(1/\epsilon))}$-query lower bound against non-adaptive algorithms for testing union-closedness (with either one-sided or two-sided error). We describe the hard distributions in \Cref{subsec:uc-yes-no-dists} and then prove \Cref{thm:two-sided-lb-UC} in \Cref{subsec:uc-indistinguish}.

\subsection{The $\Dyes$ and $\Dno$ Distributions}
\label{subsec:uc-yes-no-dists}

Our construction of the hard distributions $\Dyes$ and $\Dno$ are inspired by the constructions for the lower bound against intersectingness testing in \Cref{sec:intersecting}; in particular, our hard functions will also comprise of a truncated Talagrand random DNF on a set of ``control bits'' $\bC$, and then a function tailored to the union-closedness property on a set of ``action bits'' $\bA$. We describe illustrate both $\Dyes$ and $\Dno$ in \Cref{fig:uc-lb}, and start by describing the $\Dyes$ distribution:


\begin{figure}
\centering

\begin{tikzpicture}[scale=0.9]

	
	\fill[color=purple!20!white!80] (0, -3) -- (-3, 0) -- (0, 3) -- (3, 0);
	\fill[color=white] (-2.75, 0.25) -- (-1, -1.5) -- (1.75, 1.25) -- (0, 3);\fill[pattern=crosshatch,pattern color=black, opacity=0.175] (-2.75, 0.25) -- (-1, -1.5) -- (1.75, 1.25) -- (0, 3);
	\fill[color=white] (-2.25, 0.75) -- (0, -1.5) -- (2.25, 0.75) -- (0, 3);\fill[pattern=crosshatch,pattern color=black, opacity=0.175] (-2.25, 0.75) -- (0, -1.5) -- (2.25, 0.75) -- (0, 3);
	\fill[color=white] (-1.75, 1.25) -- (1, -1.5) -- (2.75, 0.25) -- (0,3) -- (-1.75, 1.25);\fill[pattern=crosshatch,pattern color=black, opacity=0.175] (-1.75, 1.25) -- (1, -1.5) -- (2.75, 0.25) -- (0,3) -- (-1.75, 1.25);
	
	\fill[color=purple!20!blue!40!white!70] (-2.25, 0.75) -- (-0.5, -1) -- (0, -0.5) -- (0.5, -1) -- (2.25, 0.75) -- (0, 3);
	
	\draw[-] (0, -3) -- (-3, 0) -- (0, 3) -- (3, 0) -- (0, -3);

	\draw[-] (-2.75, 0.25) -- (-1, -1.5) -- (1.75, 1.25);
	\draw[-] (-2.25, 0.75) -- (0, -1.5) -- (2.25, 0.75);
	\draw[-,line width=0.3mm] (-1.75, 1.25) -- (1, -1.5) -- (2.75, 0.25) -- (0,3) -- (-1.75, 1.25);
		
	\node (0) at (0, -2){$0$};
	\node (1) at (0, 1){$1$};
	\node (talagrand-cube) at (1.15, -0.85){$\bT_{\ell}$};

	
	\fill[color=purple!20!white!80] (9, -1.5) -- (7.5, 0) -- (9, 1.5) -- (10.5, 0);
	\draw[-] (9, -1.5) -- (7.5, 0) -- (9, 1.5) -- (10.5, 0) -- (9, -1.5);
	
	\node[circle,fill,inner sep=1pt] (arr) at (9.25, .15){};
	\node (aaa) at (9.25, -0.1) {\small$\bs_\ell$};

	
	\draw[-latex,dashed,color=gray] (2.34, 0) -- (7, 0);
	\node () at (2.34, 0) [circle,fill,inner sep=1pt]{};
	\node (ex) at (2, 0.1) {$x_{\bC}$};
	\node (control) at (0, -4.75){$\zo^{\bC} \equiv \zo^m$};
	\node (action) at (9, -4.75){$\zo^{\bA} \equiv \zo^{a}$};

\end{tikzpicture}

\caption*{(a) A draw of $\fyes\sim\Dyes$}

\

\
	
\begin{tikzpicture}[scale=0.9]

	
	\fill[color=purple!20!white!80] (0, -3) -- (-3, 0) -- (0, 3) -- (3, 0);
	\fill[color=white] (-2.75, 0.25) -- (-1, -1.5) -- (1.75, 1.25) -- (0, 3);\fill[pattern=crosshatch,pattern color=black, opacity=0.175] (-2.75, 0.25) -- (-1, -1.5) -- (1.75, 1.25) -- (0, 3);
	\fill[color=white] (-2.25, 0.75) -- (0, -1.5) -- (2.25, 0.75) -- (0, 3);\fill[pattern=crosshatch,pattern color=black, opacity=0.175] (-2.25, 0.75) -- (0, -1.5) -- (2.25, 0.75) -- (0, 3);
	\fill[color=white] (-1.75, 1.25) -- (1, -1.5) -- (2.75, 0.25) -- (0,3) -- (-1.75, 1.25);\fill[pattern=crosshatch,pattern color=black, opacity=0.175] (-1.75, 1.25) -- (1, -1.5) -- (2.75, 0.25) -- (0,3) -- (-1.75, 1.25);

	\fill[color=purple!20!blue!40!white!70] (-2.25, 0.75) -- (-0.5, -1) -- (0, -0.5) -- (0.5, -1) -- (2.25, 0.75) -- (0, 3);
	
	\draw[-] (0, -3) -- (-3, 0) -- (0, 3) -- (3, 0) -- (0, -3);

	\draw[-] (-2.75, 0.25) -- (-1, -1.5) -- (1.75, 1.25);
	\draw[-] (-2.25, 0.75) -- (0, -1.5) -- (2.25, 0.75);
	\draw[-,line width=0.3mm] (-1.75, 1.25) -- (1, -1.5) -- (2.75, 0.25) -- (0,3) -- (-1.75, 1.25);
		
	\node (0) at (0, -2){$0$};
	\node (1) at (0, 1){$1$};
	\node (talagrand-cube) at (1.15, -0.85){$\bT_{\ell}$};

	
	
	\node (heads) at (5.5, 2.5){\includegraphics[width=1.1cm]{images/quarter-front}};
	\node (bh) at (5.5, 1.5){If $\bb_{\ell} = 1$:};

	\fill[color=purple!20!white!80] (9, 1) -- (7.5, 2.5) -- (9, 4) -- (10.5, 2.5);
	\draw[-] (9, 1) -- (7.5, 2.5) -- (9, 4) -- (10.5, 2.5) -- (9, 1);
	
	\node[circle,fill,inner sep=1pt] (arr) at (8.5, 2.75){};
	\node (ol-arr)[circle,fill,inner sep=1pt] at (9.5, 2.25){};
	
	\draw[] (arr) -- (9, 4) -- (ol-arr);
	
	\node (arrlab) at (8.5,2.5){\small$\br_{\ell}$};
	\node (ol-arrlab) at (9.5,2){\small$\overline{\br}_{\ell}$};
	

	
	\node (tails) at (5.5, -2){\includegraphics[width=1.1cm]{images/jersey-quarter-back}};
	\node (bt) at (5.5, -3){If $\bb_{\ell} = 0$:};
	
	\fill[color=purple!20!white!80] (9, -3.5) -- (7.5, -2) -- (9, -0.5) -- (10.5, -2);
	\node (o) at (9, -2) {$0$};
	\draw[-] (9, -3.5) -- (7.5, -2) -- (9, -0.5) -- (10.5, -2) -- (9, -3.5);

	
	\draw[-latex,dashed,color=gray] (2.35, 0.1) -- (4.5, 2.25);
	\draw[-latex,dashed,color=gray] (2.35, 0.1) -- (4.5, -2.25);

	\node (control) at (0, -4.75){$\zo^{\bC} \equiv \zo^m$};
	\node (action) at (9, -4.75){$\zo^{\bA} \equiv \zo^{a}$};

	\node () at (2.35, 0.1) [circle,fill,inner sep=1pt]{};
	\node (ex) at (2, 0.1) {$x_{\bC}$};

\end{tikzpicture}

\

\caption*{(b) A draw of $\fno\sim\Dno$}

\
	
\caption{An illustration of the yes- and no-distributions for the union-closedness lower bound. Our conventions are as in \Cref{fig:int-yes}. In (b), if $\bb_{\ell} = 1$ then as long as $\br_\ell \notin \{0^a,1^a\}$ the action cube $\zo^{\bA}$ will contain a single violation of union-closedness.}
	\label{fig:uc-lb}
\end{figure}

\begin{definition} \label{def:uc-yes}
	Given $\eps > 0$, a draw of a Boolean function $\fyes \isazofunc$ from the distribution $\Dyes := \Dyes(n,\eps)$ is obtained as follows:
	\begin{enumerate}
		\item Draw a random set of $a := \log(1/\eps)$ coordinates $\bA \sse[n]$, i.e. 
		\[\bA \sim {[n]\choose a}, \qquad\text{and set}\qquad \bC := [n]\setminus \bA.\]
		Let $c := |\bC| = n-a$.
		\item Let $L :=0.1\cdot2^{\sqrt{c}}$ and draw an $L$-term monotone Talagrand DNF $\bT\sim \Tal(c, 1)$ as defined in \Cref{def:talagrand} on $\zo^{\bC}$. 
		\item For each $\ell\in[L]$, independently draw a uniformly random $a$-bit string $\bs_\ell \in \zo^{\bA}$. 
		\item Output the function 
		\begin{equation*} \label{eq:uc-yes}
			\fyes(x_{\bC}, x_{\bA}) := 
			\begin{cases}
				1 & |S_{\bT}(x_{\bC})| \geq 2 \\
				\Indicator\cbra{x_{\bA} = \bs_\ell} & S_{\bT}(x_{\bC}) = \{\ell\}\\
				0 & |S_{\bT}(x_{\bC})| = 0
			\end{cases}
		\end{equation*}
		where $S_{\bT}$ is as defined in \Cref{notation:term-count}.
	\end{enumerate}
\end{definition}

We verify that functions drawn from $\Dyes$ are indeed union-closed:

\begin{claim} \label{claim:uc-yes-indeed-yes}
	Every function $f_\yes$ in the support of $\Dyes$ is union-closed.
\end{claim}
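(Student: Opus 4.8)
The plan is to verify the union-closed ``triple condition'' \Cref{eq:implication-UC} directly from the structure of $\fyes$. Fix a function $f = \fyes$ in the support of $\Dyes$, determined by a partition $(\bC,\bA) = (C,A)$, a Talagrand DNF $T = (T_1,\dots,T_L)$ on $\zo^C$, and strings $s_1,\dots,s_L \in \zo^A$. Suppose $x,y \in \zo^n$ satisfy $f(x) = f(y) = 1$, and let $z = x \cup y$; we must show $f(z) = 1$. The key observation is that since each term $T_\ell$ is monotone and $x_C, y_C \le z_C$, we have $S_T(x_C) \cup S_T(y_C) \subseteq S_T(z_C)$; in particular $|S_T(z_C)| \ge \max\{|S_T(x_C)|, |S_T(y_C)|\}$.

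Now I would do a short case analysis on $|S_T(z_C)|$. If $|S_T(z_C)| \ge 2$, then $f(z) = 1$ by the first case of the definition, and we are done. If $|S_T(z_C)| \le 1$, then by the containment above both $|S_T(x_C)|, |S_T(y_C)| \le 1$; and since $f(x) = f(y) = 1$ forces (by the definition of $\fyes$, whose value is $0$ when $|S_T(\cdot)| = 0$) that $|S_T(x_C)| = |S_T(y_C)| = 1$, we get $S_T(x_C) = S_T(y_C) = \{\ell\}$ for a common index $\ell$ — here I use that $S_T(x_C), S_T(y_C)$ are both nonempty subsets of the at-most-singleton set $S_T(z_C)$, hence equal to it. Then $f(x) = 1$ and $f(y) = 1$ mean $x_A = s_\ell$ and $y_A = s_\ell$, so $x_A = y_A = s_\ell$, whence $z_A = x_A \cup y_A = s_\ell$. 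Also $S_T(z_C) = \{\ell\}$ (it contains $\{\ell\}$ and has size at most one), so $f(z) = \Indicator\{z_A = s_\ell\} = 1$, as desired.

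The argument is entirely routine once one notes the monotonicity of the terms; there is no real obstacle. The only mild subtlety to state carefully is the step extracting the common index $\ell$: it relies on the fact that any two nonempty subsets of a set of size $\le 1$ must coincide. I would also remark that the case $|S_T(z_C)| = 0$ cannot actually arise when $f(x) = f(y) = 1$ (it would force $S_T(x_C) = \varnothing$ and hence $f(x) = 0$), so the two cases above are exhaustive. This completes the proof that $\fyes$ is union-closed.
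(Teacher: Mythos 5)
Your proof is correct and follows essentially the same approach as the paper: both rest on the observation that monotonicity of the Talagrand terms gives $S_T(x_C) \cup S_T(y_C) \subseteq S_T(z_C)$, then do a short case analysis to conclude $f(z) = 1$. The only cosmetic difference is that you organize the cases by $|S_T(z_C)|$, whereas the paper organizes them by which clause of the definition applies to $x$ and $y$; the underlying reasoning is the same.
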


\begin{proof}
	Let $f_\yes$ be as in \Cref{def:uc-yes}. We will show the following:
	\[\text{If}~ f_\yes(x) = f_\yes(y) = 1,~\text{then}~f_\yes(x\cup y) = 1.\]
	For brevity we denote $z:= x \cup y$. Now, note that if $f_\yes(x) = 1$, then either 
	\begin{enumerate}
		\item[(a)] $|S_{\bT}(x_{\bC})| \geq 2$; or 
		\item[(b)] $S_{\bT}(x_{\bC}) = \{\ell\}$ for some $\ell\in[L]$ and $x_{\bA} = \bs_\ell$,
	\end{enumerate}
	with analogous statements holding for $y$. Note that if (a) holds for either $x$ or $y$, then $|S_{\bT}(z_{\bC})| \geq 2$ and so $f_\yes(z) = 1$. We therefore only have to consider the case when (b) holds for both $x$ and $y$. Indeed, suppose that 
	\[S_{\bT}(x_{\bC}) = \{\ell_1\} \qquad\text{and}\qquad S_{\bT}(y_{\bC}) = \{\ell_2\}.\]
	If $\ell_1\neq\ell_2$, then $\{\ell_1, \ell_2\} \sse S_{\bT}(z_{\bC})$ and so $f_\yes(z) = 1$. On the other hand, if $\ell_1 = \ell_2$ then either (i) $S_{\bT}(z_{\bC}) = \{\ell_1\}$, in which case as $\bs_{\ell_1} = x_{\bA} = y_{\bA} = z_{\bA}$ we will have $f_\yes(z) = 1$; or (ii) $|S_{\bT}(z_{\bC})|\geq 2$ and so $f_\yes(z) = 1$. It follows that $f_\yes$ is union closed.
\end{proof}

We now turn to a description of the $\Dno$ distribution. 

\begin{definition} \label{def:uc-no}
	Given $\eps > 0$, a draw of a Boolean function $\fno \isazofunc$ from the distribution $\Dno := \Dno(n,\eps)$ is obtained as follows:
	\begin{enumerate}
		\item Draw a random set of $a := \log(1/\eps)$ coordinates $\bA \sse[n]$, i.e. 
		\[\bA \sim {[n]\choose a}, \qquad\text{and set}\qquad \bC := [n]\setminus \bA.\]
		Let $c := |\bC| = n-a$.
		\item Let $L :=0.1\cdot2^{\sqrt{c}}$ and draw an $L$-term monotone Talagrand DNF $\bT\sim \Tal(c, 1)$ as defined in \Cref{def:talagrand} on $\zo^{\bC}$. 
		\item For each $\ell\in[L]$, independently draw a uniformly random $a$-bit string $\br_\ell \in \zo^{\bA}$ as well as a uniformly random bit $\bb_\ell \in \zo$. 
		\item Output the function 
		\begin{equation*} \label{eq:uc-yes}
			\fyes(x_{\bC}, x_{\bA}) := 
			\begin{cases}
				1 & |S_{\bT}(x_{\bC})| \geq 2 \\
				\bb_\ell \cdot\Indicator\cbra{x_{\bA} \in \{\br_\ell, \overline{\br}_\ell\}} & S_{\bT}(x_{\bC}) = \{\ell\}\\
				0 & |S_{\bT}(x_{\bC})| = 0
			\end{cases}
		\end{equation*}
		where $\overline{\br}_\ell := 1^a - \br_\ell$ is the antipode of $\br_\ell$. 
	\end{enumerate}
\end{definition}

As illustrated by \Cref{fig:uc-lb}, we associated each Talagrand term $\bT_i$ with a uniformly random bit $\bb_\ell$. If $\bb_\ell = 1$ then the action cube comprises a single union-closedness violation,\footnote{This is with the exception of $\br_\ell = 0^a$ or $1^a$; in this case $\br_\ell \cup \overline{\br}_\ell = 1^a$ and so the function on the action bits will indeed be union-closed. Note, however, that this only happens with probability $1/2^a$.} and if $\bb_\ell = 0$ then the action cube has zero satisfying assignments. This ensures that in expectation, the measure of a function drawn from $\Dno$ is indistinguishable from that of a function drawn from $\Dyes.$ 

\begin{claim} \label{claim:uc-no-far-from-uc}
	With probability at least 0.001, a function $\fno\sim\Dno := \Dno(n,\eps)$ satisfies $\dist(\fno, g) \geq \Omega(\eps)$ for every union-closed function $g\isazofunc$.
\end{claim}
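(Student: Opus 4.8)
The plan is to lower bound $\dist(\fno, g)$ for \emph{every} union-closed $g$ by counting ``action subcubes'' on which $\fno$ already fails to be union-closed. For $s\in\zo^{\bC}$, write $\fno^{(s)}\colon\zo^{\bA}\to\zo$ for the restriction $u\mapsto\fno(s,u)$, and likewise $g^{(s)}$. Since $(s,u)\cup(s,v)=(s,u\cup v)$, the restriction $g^{(s)}$ of any union-closed $g$ is itself union-closed; hence whenever $\fno^{(s)}$ is \emph{not} union-closed, $\fno$ and $g$ must disagree on at least one point of $\{s\}\times\zo^{\bA}$. As these subcubes are disjoint over distinct $s$, writing $N$ for the number of $s\in\zo^{\bC}$ with $\fno^{(s)}$ not union-closed, we get $\dist(\fno,g)\ge N/2^n$ simultaneously for all union-closed $g$. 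Since $2^n=2^{c+a}=2^c/\eps$, it suffices to show that $N\ge\Omega(2^c)$ with probability at least an absolute constant.

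The next step is to identify subcubes that contribute to $N$. By \Cref{def:uc-no}, if $S_{\bT}(s)=\{\ell\}$, $\bb_\ell=1$, and $\br_\ell\notin\{0^a,1^a\}$, then $\fno^{(s)}(u)=\mathbf{1}\{u\in\{\br_\ell,\overline{\br}_\ell\}\}$; here $\br_\ell\neq\overline{\br}_\ell$ and $\br_\ell\cup\overline{\br}_\ell=1^a\notin\{\br_\ell,\overline{\br}_\ell\}$, so $\fno^{(s)}$ is not union-closed. (This is exactly where the exceptional case $\br_\ell\in\{0^a,1^a\}$ noted in the footnote of \Cref{def:uc-no} must be excluded.) Call such an $s$ \emph{bad}; then $N$ is at least the number of bad $s$. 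Because $\bb$ and $\br$ are drawn independently of $\bT$ and $2^a=1/\eps$, for every fixed $s$ we have $\Prx[s\text{ bad}]=\Prx_{\bT}[|S_{\bT}(s)|=1]\cdot\frac12(1-2\eps)$, which is at least $\frac14\Prx_{\bT}[|S_{\bT}(s)|=1]$ since $\eps$ lies below a small absolute constant (the regime of interest, as in \Cref{thm:two-sided-lb-UC}).

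Now I would take expectations and apply a reverse Markov inequality, exactly as in the proof of \Cref{lem:hehe2}. Applying the estimate inside the proof of \Cref{prop:talagrand-unique-property} with its parameter equal to $1$ and ambient dimension $c$: for every $s\in\zo^{\bC}$ with $|s|\in[c/2,\,c/2+0.05\sqrt c]$ we have $\Prx_{\bT}[|S_{\bT}(s)|=1]>0.03$, and a routine binomial estimate (as in that proof, using ${c\choose c/2}/2^c=\Theta(1/\sqrt c)$) shows there are $\Omega(2^c)$ such $s$. Hence $\mathbb{E}[N]\ge\sum_s\Prx[s\text{ bad}]\ge\frac{0.03}{4}\cdot\Omega(2^c)=\Omega(2^c)$. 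Since $N\le 2^c$ deterministically, the reverse Markov inequality gives $\Prx[N\ge\frac12\mathbb{E}[N]]\ge\mathbb{E}[N]/(2\cdot 2^c)\ge\Omega(1)$, and on this event $\dist(\fno,g)\ge N/2^n\ge\frac12\mathbb{E}[N]/(2^c/\eps)=\Omega(\eps)$ for every union-closed $g$, which is the desired statement.

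The only mildly delicate point is quantitative: the argument as written yields ``with probability $\ge\Omega(1)$,'' and one obtains the stated constant $0.001$ either by a mild widening of the middle-weight band used in \Cref{prop:talagrand-unique-property} (which only improves the absolute constants), or simply by observing that any fixed positive constant suffices, since this claim feeds into Yao's principle (\Cref{thm:yao-minimax}), which permits replacing $99.9\%$ by any universal constant in $[0,1)$. The structural content --- that a bad action subcube genuinely hosts a union-closedness violation, and that union-closedness is inherited by action-subcube restrictions --- is immediate from the definitions, so I do not anticipate any real obstacle beyond this bookkeeping.
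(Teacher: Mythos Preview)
Your proposal is correct and follows essentially the same approach as the paper: both arguments count the action subcubes $\{s\}\times\zo^{\bA}$ on which $\fno$ has a union-closedness violation (namely those with $S_{\bT}(s)=\{\ell\}$, $\bb_\ell=1$, $\br_\ell\notin\{0^a,1^a\}$) and observe that each forces at least one disagreement with any union-closed $g$. Your packaging via ``restrictions inherit union-closedness'' plus a single expectation-and-reverse-Markov step is slightly more streamlined than the paper's sequential conditioning on the events \textsf{GoodTalagrand}, \textsf{ManyOne}, and ``most $\br_\ell$ are good,'' but the content is the same.
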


\begin{proof}
Let \GoodTalagrand~be as in the proof of \Cref{lem:hehe2}, i.e. \GoodTalagrand$(T)$~is the event that an $\Omega(1)$-fraction of points $x\in\zo^c$ have $|S_{T}(x)| = 1$; following the argument of \Cref{lem:hehe2}, we have that  
\[\Prx_{\bT\sim\Tal(c,1)}\sbra{\GoodTalagrand(\bT)} > 0.01.\]
Fix an arbitrary $T$ in the support of $\Tal(c,1)$ such that $\GoodTalagrand(T)$ holds. Note that for each $x\in\zo^c$ with $S_{T}(x) = \{\ell\}$ for some $\ell\in[L]$, we have $\E[\bb_{\ell}] = 1/2$. By linearity of expectation and Markov's inequality, we thus know that with probability at least 99/100 there is an $\Omega(1)$-fraction of points such that $S_{T}(x) = \{\ell\}$ and $\bb_\ell=1$; as before, we write \ManyOne~to denote this event. Finally, we may assume that both \GoodTalagrand~and \ManyOne~ happen which is the case with probability at least 0.01. 

Next, say that a $\br(\ell)$ is \emph{bad} if it is equal to either $0^a$ or $1^a$ and \emph{good} otherwise; this happens with probability $2^{1-a}$. Yet another Markov argument gives that with probability at least $0.001$, we may assume that at most $L/2$ of the $r(\ell)$ draws good \emph{and} that both \GoodTalagrand~and \ManyOne~happen with probability $0.001$. Assuming all these three events happen, we will establish that $f_\no$ is far from being union closed.

Indeed, fix a setting of control bits $x_{C}$ that uniquely satisfy as Talagrand term $T_\ell$. As $b_{\ell} = 1$, the triple of points
	\[(x_{\bC}, r_{\ell}), (x_{\bC}, \overline{r}_{\ell}),~\text{and}~(1^a, r_{\ell})\]
	is a violation of union-closedness as long as $r(\ell)$ is good. In particular, as long as $\br_{\ell} \neq 0^a$ or $1^a$ we get a violation; note also that all these triples are disjoint. As we have to change at least one of these three points to make the function union closed, it follows that with probability at least $0.001$, the distance to being union-closed is $\Theta(1/2^a)$.  Recalling that $a = \log(1/\eps)$ completes the proof.
\end{proof}

\subsection{Indistinguishability of the Hard Distributions}
\label{subsec:uc-indistinguish}

In this section, we establish the indistinguishability of the distributions $\Dyes$ and $\Dno$ and prove \Cref{thm:two-sided-lb-UC}. Our proof will closely follow the approach used in \Cref{sec:hehe1} to prove a lower bound against intersectingness testers.

As before, we will write $Q_\calA := \{x^1, \ldots, x^q\} \sse\zo^n$ for the set of points queried by the algorithm. The argument will crucially rely on the fact that the only way for $\calA$ to distinguish $\Dyes$ and $\Dno$ is to draw two antipodal points from the same action cube, i.e. if there exist $x^i$ and $x^j$ such that $S_T(x^i_C) = S_T(x^j_C) = \{\ell\}$ for some $\ell\in[L]$ and $x^i_A$ and $x^j_A$ are antipodes; as before, we write \Bad~to denote this event. With $\view_{\calA}$ defined as in \Cref{sec:hehe1}, we have the following:

\begin{lemma} \label{lemma:uc-views}
	We have $\view_\calA(\Dyes|_{\overline{\Bad}}) = \view_\calA(\Dno|_{\overline{\Bad}}).$
\end{lemma}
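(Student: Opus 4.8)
The plan is to mimic closely the proof of \Cref{lemma: Bad is the only different place between Dy and Dn} from the intersectingness section, adapting it to the union-closedness construction. First I would note that the distribution over the partition $(\bC,\bA)$ and over the Talagrand DNF $\bT \sim \Tal(c,1)$ is identical in $\Dyes$ and $\Dno$, so it suffices to fix an arbitrary partition $C, A$ and an arbitrary DNF $T$ in the support of $\Tal(c,1)$ and show that, conditioned on $\overline{\Bad}$, the distribution of the vector of answers $(\bm f(x^1), \dots, \bm f(x^q))$ is the same for the two conditional distributions.

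Next I would partition the query set $Q_\calA$ according to which (if any) unique term of $T$ the control bits satisfy: for $\ell \in [L]$, let $Q_\calA(\ell) = \{x^i \in Q_\calA : S_T(x^i_C) = \{\ell\}\}$. On any query point $x^i$ with $|S_T(x^i_C)| \geq 2$ both $\fyes$ and $\fno$ output $1$, and on any query point with $|S_T(x^i_C)| = 0$ both output $0$; so these points contribute identically and can be ignored. For a fixed $\ell$, the restriction of $\bm f$ to $Q_\calA(\ell)$ depends only on the independent randomness associated to term $\ell$ — namely $\bs_\ell$ in the yes-case, and $(\br_\ell, \bb_\ell)$ in the no-case — and these are independent across $\ell$, so it suffices to match the marginal distribution of the answer-vector on each $Q_\calA(\ell)$ separately.

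Now fix $\ell$ and condition on $\overline{\Bad}$. The key observation is that $\overline{\Bad}$ forces $Q_\calA(\ell)$ to contain no two points whose action-parts are antipodal; equivalently, the set $W_\ell := \{x^i_A : x^i \in Q_\calA(\ell)\} \subseteq \zo^A$ contains no antipodal pair. In the yes-case, the answer on each $x^i \in Q_\calA(\ell)$ is $\Indicator\{x^i_A = \bs_\ell\}$ with $\bs_\ell$ uniform on $\zo^a$; since $|\bA| = a$ there are $2^a$ possible values. In the no-case, the answer on $x^i$ is $\bb_\ell \cdot \Indicator\{x^i_A \in \{\br_\ell, \overline{\br}_\ell\}\}$. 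I would argue that because $W_\ell$ has no antipodal pair, for any fixed value of $\br_\ell$ at most one element of $W_\ell$ can lie in $\{\br_\ell, \overline{\br}_\ell\}$; hence the no-case answer-vector is: all zeros (when $\bb_\ell = 0$, or when $\bb_\ell = 1$ but $\{\br_\ell, \overline{\br}_\ell\} \cap W_\ell = \emptyset$), or the indicator of a single coordinate $x^i_A$. One then checks the induced distribution over answer-vectors coincides with the yes-case: in the yes-case the all-zero vector occurs with probability $1 - |W_\ell|/2^a$ and the ``indicator of $x^i_A$'' vector occurs with probability $1/2^a$ for each $x^i \in Q_\calA(\ell)$; in the no-case, $\Pr[\bb_\ell = 0] = 1/2$ gives all zeros, and given $\bb_\ell = 1$, the pair $\{\br_\ell, \overline{\br}_\ell\}$ is a uniformly random antipodal pair, hitting a given $x^i_A \in W_\ell$ with probability $2/2^a$, so each single-indicator vector again gets total probability $\tfrac12 \cdot \tfrac{2}{2^a} = \tfrac{1}{2^a}$ and the all-zero vector gets $1 - |W_\ell|/2^a$. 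These match.

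The main obstacle I anticipate is the bookkeeping in the last step: one must be careful that the map from the no-case randomness to the answer-vector is well-defined (i.e. that no value of $\br_\ell$ produces two ones), which is exactly where the $\overline{\Bad}$ conditioning is used, and one must also make sure the conditioning on $\overline{\Bad}$ does not reintroduce dependence across different $\ell$ — but since $\Bad$ is a union of events each involving only the randomness of a single term (together with the fixed $T$ and partition), conditioning on $\overline{\Bad}$ factorizes over $\ell$ and the per-term argument goes through. After establishing \Cref{lemma:uc-views}, the theorem will follow by the same Yao's-minimax computation as in the proof of \Cref{thm:main-lb-intersecting}, together with a $\Pr[\Bad] = o_n(1)$ bound analogous to \Cref{lemma: Bad is unlikely} (using that antipodal action-parts require $\bA$ to capture many of the $\Theta(a) = \Theta(\log(1/\eps))$ coordinates where $x^i$ and $x^j$ differ, which is unlikely given the small size of $\bA$).
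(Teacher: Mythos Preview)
Your proposal is correct and follows essentially the same approach as the paper: fix the partition and the Talagrand DNF, observe that queries with $|S_T(x^i_C)|\neq 1$ are answered identically in both distributions, and then match the per-term answer distributions on each $Q_\calA(\ell)$ using the fact that $\overline{\Bad}$ forbids antipodal action-parts within any $Q_\calA(\ell)$. The paper phrases the per-term matching as a coupling (draw a uniform antipodal pair $(\by,\overline{\by})$ and a bit $\bb_\ell$, then describe both $\Dyes$ and $\Dno$ in those common terms), while you compute the probabilities of each possible answer-vector directly; these are equivalent presentations of the same argument.

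One small clarification on your final paragraph: your concern about conditioning on $\overline{\Bad}$ reintroducing dependence across $\ell$ is a non-issue for a simpler reason than you give. Once $C$, $A$, and $T$ are fixed, $\Bad$ is a \emph{deterministic} property of the query set $Q_\calA$ --- it does not involve $\bs_\ell$, $\br_\ell$, or $\bb_\ell$ at all --- so the conditioning on $\overline{\Bad}$ simply restricts which $(C,A,T)$ you need to consider, and the per-term randomness remains fully independent with no further conditioning required.
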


\begin{proof}
	As before the distributions of the partition of $[n]$ into $\bC\sqcup\bA$ are identical for both $\Dyes$ and $\Dno$, so we may fix an arbitrary partition. As the distribution of the Talagrand DNF $\bT\sim\Tal(c,1)$ is also identical, we can fix an arbitrary $T$.
	We define 
	\[Q_{\calA}(\ell) := \cbra{x^i : S_T(x_C^i) = \{\ell\}}.\]
	Note that the points outside $\bigcup_{\ell\in[L]}Q_{\calA}(\ell)$ do not matter as the the function is identically $0$ or $1$ for both $\Dyes$ and $\Dno$. We will abuse notation and view $Q_{\calA}(\ell)$ as a subset of the action cube $\zo^{a}$ corresponding to the Talagrand term $T_{\ell}$. 
	
	We will write $\boldf_\ell$ for the function restricted to inputs in $Q_{\calA}(\ell)$, and will write $\calA(\boldf_{\ell})$ for the sequence of answers to the queries made by $\calA$ to $\boldf_{\ell}$ (i.e. the sequence of answers to queries by $\calA$ on inputs in $Q_{\calA}(\ell)$). We will write $\view_{\calA,\ell}(\Dyes)$ (respectively $\view_{\calA,\ell}(\Dno)$) to be the distribution of $\calA(\boldf_{\ell})$ for $\boldf_{\ell}\sim\Dyes$ (respectively $\boldf_{\ell}\sim\Dno$). Since $Q_{\calA}$ is partitioned as 
	\[Q_{\calA} = \bigsqcup_{\ell\in[L]} Q_{\calA}(\ell), \]
	note that in order to show that $\view_{\calA}(\Dyes|_{\overline{\Bad}}) = \view_{\calA}(\Dno|_{\overline{\Bad}})$, it suffices to show that $\view_{\calA,\ell}(\Dyes|_{\overline{\Bad}}) = \view_{\calA,\ell}(\Dno|_{\overline{\Bad}})$; this is what we will establish below.
	
	Fixing an action cube $\zo^a$ (which is indexed by $\ell\in[L]$), note that the actions cubes in the yes- and no-distributions can be equivalently described as follows: 
	\begin{enumerate}
		\item Draw a uniformly random pair of points $(\by,\overline{\by})$ from the $2^{a-1}$ pairs $(x, \overline{x})$ for $x\in\zo^a$, and draw a uniformly random bit $\bb_{\ell}$. 
		\item We consider the ``yes'' and ``no'' cases separately:
		\begin{enumerate}
				\item In the ``yes'' case, if $\bb_{\ell} = 1$, then set $\bs_{\ell} = \by$; otherwise set $\bs_{\ell} = \overline{\by}$. 
				\item In the ``no'' case, set $(\br_{\ell},\overline{\br}_{\ell}) = (\by,\overline{\by})$; and if $\bb_{\ell} = 0$, then the function $\boldf|_{\ell}$ is defined to be identically zero on the action cube (cf.~\Cref{def:uc-no} and \Cref{fig:uc-lb}). 
		\end{enumerate}
	\end{enumerate}
Note that conditioned on $\Bad$ not happening, we have that none of the query points in $Q_{\calA}(\ell)$ are antipodes of each other. We now split into two cases depending on whether either $\by$ or $\overline{\by}$ is in the query set $Q_{\calA}(\ell)$:
\begin{enumerate}
	\item If $\by,\overline{\by}\notin Q_{\calA}(\ell)$, then note that $\view_{\calA,\ell}(\Dyes|_{\overline{\Bad}}) = \view_{\calA,\ell}(\Dno|_{\overline{\Bad}})$ since $\boldf_\ell$ is identically $0$ on $Q_{\calA}(\ell)$ in both the ``yes'' and the ``no'' cases. 
	\item Otherwise, since we conditioned on $\overline{\Bad}$, only one of $\by,\overline{\by}$ can be in $Q_{\calA}(\ell)$; without loss of generality, suppose that it is $\by$. In both the ``yes'' and the ``no'' cases, $\by$ is a $1$-input if and only if $\bb_{\ell}=1$, and the function is identically $0$ on all other points. (Recall that we view points of $Q_{\calA}(\ell)$ as a subset of the action cube $\zo^a$ corresponding to the Talagrand DNF term $T_{\ell}$.)
\end{enumerate}
	It follows that $\view_{\calA,\ell}(\Dyes|_{\overline{\Bad}}) = \view_{\calA,\ell}(\Dno|_{\overline{\Bad}})$, and since $Q_{\calA}$ is partitioned by the indices $\ell\in[L]$, we have 
	\[\view_{\calA}(\Dyes|_{\overline{\Bad}}) = \view_{\calA}(\Dno|_{\overline{\Bad}}),\]
	completing the proof.
\end{proof}

Next, we will show that \Bad~happens with $o_n(1)$ probability:

\begin{lemma} \label{lemma:uc-bad}
	For any set of points $Q_{\calA} = \{x^1,\ldots,x^q\} \sse\zo^n$ where $q := n^{{0.001\log(1/\eps)}}$, we have 
	\[\Prx[\Bad] = o_n(1).\]
\end{lemma}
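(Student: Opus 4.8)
The plan is to follow the template of the proof of \Cref{lemma: Bad is unlikely}. Since $\Bad=\bigcup_{i<j}\Bad_{x^i x^j}$, where $\Bad_{xy}$ denotes the event that $x_{\bA}$ and $y_{\bA}$ are antipodal \emph{and} $S_{\bT}(x_{\bC})=S_{\bT}(y_{\bC})=\{\ell\}$ for some $\ell\in[L]$, it suffices to bound $\Pr[\Bad_{xy}]$ for an arbitrary fixed pair $x,y\in\zo^n$ and then union-bound over the at most $q^2$ pairs. For such a pair, write $D:=\{i:x_i\neq y_i\}$, $d:=|D|$, $I_{01}:=\{i:x_i=0,\,y_i=1\}$ and $I_{10}:=\{i:x_i=1,\,y_i=0\}$, so $D=I_{01}\sqcup I_{10}$; if $d<a=\log(1/\eps)$ then no size-$a$ subset can lie inside $D$ and $\Pr[\Bad_{xy}]=0$, so assume $d\geq a$. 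I would observe that $\Bad_{xy}$ forces both the event $\diamond$ that $\bA\sse D$ (equivalently that $x_{\bA},y_{\bA}$ are antipodal) and the event $\star$ that some Talagrand term is uniquely satisfied by \emph{both} $x_{\bC}$ and $y_{\bC}$. Conditioning on the partition $\bA=A$ (after which $\bT$ is a Talagrand DNF on the fixed control set $[n]\setminus A$), this gives $\Pr[\Bad_{xy}]\leq\E_{\bA}\big[\Indicator\cbra{\bA\sse D}\cdot\Pr[\star\mid\bA]\big]$.

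The first factor is $\Pr_{\bA}[\bA\sse D]=\binom{d}{a}/\binom{n}{a}\leq(d/n)^a$. For the second, with $A\sse D$ fixed (so $\bC=[n]\setminus A$ is now a fixed set of size $n-a$), I would argue exactly as in \Cref{lemma: Bad is unlikely}: since distinct Talagrand terms are independent, $\Pr[\star\mid\bA]\leq\max_\ell\Pr[\bT_\ell(x_{\bC})=1\mid \bT_\ell(y_{\bC})=1]$, and since each of the $\sqrt{n-a}$ variables of $\bT_\ell$, conditioned on landing in a $1$-coordinate of $y_{\bC}$, misses the $1$-coordinates of $x_{\bC}$ with probability at least $|I_{01}\cap\bC|/(n-a)$, this is at most $\big(1-|I_{01}\cap\bC|/(n-a)\big)^{\sqrt{n-a}}\leq\exp\!\big(-|I_{01}\cap\bC|/\sqrt{n-a}\big)$ (with $\star$ simply impossible, hence probability $0$, in the degenerate case $x_{\bC}=0^{n-a}$ or $y_{\bC}=0^{n-a}$). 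Running the same argument with the roles of $x$ and $y$ swapped yields $\exp\!\big(-|I_{10}\cap\bC|/\sqrt{n-a}\big)$, so
\[\Pr[\star\mid\bA]\ \leq\ \exp\!\left(-\frac{|I_{01}\cap\bC|+|I_{10}\cap\bC|}{2\sqrt{n-a}}\right)\ =\ \exp\!\left(-\frac{d-a}{2\sqrt{n-a}}\right),\]
using $|D\cap\bC|=d-a$ because $A\sse D$. As this bound is uniform over $A\sse D$, I would conclude $\Pr[\Bad_{xy}]\leq(d/n)^a\cdot\exp\!\big(-(d-a)/(2\sqrt{n-a})\big)$.

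To finish, plug in $q=n^{0.001\log(1/\eps)}$, so $q^2=n^{0.002a}=\exp(\Theta(a\log n))$, and split on $d$ against a threshold $d_0$ that is $n^{1-o(1)}$, e.g.\ $d_0=n^{0.995}$. If $d\leq d_0$ then $\Pr[\Bad_{xy}]\leq(d_0/n)^a=n^{-0.005a}$, which dominates $q^2$, so $q^2\cdot\Pr[\Bad_{xy}]=n^{-\Omega(a)}=o_n(1)$; if $d>d_0$ then, using the hypothesis $\eps\geq 2^{-n^{0.49}}$ of \Cref{thm:two-sided-lb-UC} (hence $a\leq n^{0.49}\ll d_0$) together with $\sqrt{n-a}\leq\sqrt n$, one gets $\Pr[\Bad_{xy}]\leq\exp(-\Omega(n^{0.495}))$, which beats $q^2=\exp(O(n^{0.49}\log n))$. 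In either case $\Pr[\Bad]\leq q^2\max_{x,y}\Pr[\Bad_{xy}]=o_n(1)$. The delicate point, which I expect to be the main obstacle, is calibrating this case split: a naive threshold such as $d_0=\sqrt n\cdot\mathrm{polylog}(n)$ does \emph{not} work, since $q^2$ can be as large as $\exp(\Theta(n^{0.49}\log n))$ while $\exp(-(d-a)/\sqrt{n-a})$ only becomes that small once $d$ is essentially linear in $n$; making the exponents line up is precisely where the constraint $\eps\geq 2^{-n^{0.49}}$ and the Talagrand term size $\sqrt{n-a}$ enter. The symmetric treatment of $I_{01}$ and $I_{10}$ is likewise essential, as otherwise a one-sided disagreement (say $I_{01}=\varnothing$, so $y_{\bC}\leq x_{\bC}$) with $d$ linear in $n$ would be controlled only by the too-weak $(d/n)^a$ bound.
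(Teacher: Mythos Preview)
Your argument is correct and follows the paper's template: the same two events $\diamond$ (that $\bA\subseteq D$) and $\star$ (that some term is uniquely satisfied by both $x_{\bC}$ and $y_{\bC}$), the same conditional bounds, and a case split on $|D|$. Your execution is in fact tighter than the paper's printed proof in two places. First, the paper takes the threshold $|J|=n^{0.5}$, which---exactly as you warn---fails at the boundary: $\exp\!\big(-n^{0.5}/\sqrt{n-a}\big)\approx e^{-1}$ is nowhere near small enough to absorb $q^2$; your threshold $n^{0.995}$ (or any $n^{1-o(1)}$) is what is actually needed. Second, you make explicit the symmetrization over $I_{01}$ and $I_{10}$ to obtain $\exp\!\big(-(d-a)/(2\sqrt{n-a})\big)$, whereas the paper writes $\exp(-|J|/\sqrt{c})$ without justifying the appearance of the full $|J|$ rather than just $|I_{01}\cap\bC|$ (its ``proceeding as in \Cref{lemma: Bad is unlikely}'' only yields the one-sided bound). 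Both refinements are necessary for the argument to go through; the overall strategy is the same.
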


\begin{proof}
	For $x,y\in\zo^n$, let $\Bad_{xy}$ be the event that $S_{\bT}(x_{\bC}) = S_{\bT}(y_{\bC}) = \{\ell\}$ for some $\ell\in[L]$ and $x_{\bA} = \overline{y}_{\bA}$. We will upper bound the probability of $\Bad_{xy}$ in what follows.

	Let $J \sse[n]$ be the coordinates in which $x$ and $y$ differ, i.e. $J := \{i \in [n] : x_i \neq y_i\}$. Define the event $\diamond$ as:
	\begin{equation} \label{eq:diamond}
		\bA \sse J. \tag{$\diamond$}
	\end{equation}
	We also define the event $\star$ as before as 
	\begin{equation}
		\text{There exists an } \ell \in [L]\text{ such that }S_{\bT}(x)=S_{\bT}(y)=\{\ell\}. \tag{$\star$}
	\end{equation}
	By definition of $\Bad_{xy}$, we have that 
	$\Pr[\Bad_{xy}]\leq\min\cbra{\Pr[\star],\Pr[\diamond]}$. In the rest of the proof, we will establish that 
	\begin{equation}
		\min\cbra{\Pr[\star],\Pr[\diamond]} \leq \Theta\pbra{\frac{1}{n}}^{0.01 a},\label{eq:uc-indistinguish-goal}
	\end{equation}
	from which the lemma follows immediately by taking a union bound over all $q^2$ pairs $(x,y) \in Q_\calA\times Q_\calA $.
	
	Note that 
	\begin{align*}
		\Prx[\diamond] = \Prx\sbra{\bA\sse J} \leq \pbra{\frac{e|J|}{n}}^a
	\end{align*}
	via standard bounds on binomial coefficients. On the other hand, proceeding as in the proof of~\Cref{lemma: Bad is unlikely}, we have 
	\begin{align*}
		\Prx[\star] &\leq \max_{\ell\in[L]}\Prx\sbra{S_{\bT}(x) = S_{\bT}(y) \mid S_{\bT}(y) = \{\ell\}} \\
		&\leq \pbra{1 - \frac{1}{\sqrt{c}}}^{|J|} \leq \exp\pbra{\frac{-|J|}{\sqrt{c}}}
	\end{align*}
	where the final line follows from the definition of $\Tal(c,1)$.
	In particular, note that if $|J| \leq n^{0.5}$, then
	\[\Pr[\star] \leq \pbra{\frac{e}{n^{0.5}}}^a,\]
	and if $|J| > n^{0.5}$ then we have 
	\[\Pr[\diamond] \leq \exp\pbra{\frac{-n^{0.5}}{\sqrt{n - \log(1/\epsilon)}}} \ll \pbra{\frac{1}n}^{\Theta(a)}\]
	where the final inequality uses the fact that $\eps \geq \Theta\pbra{\frac{1}{2^{n^{0.49}}}}$. Putting everything together establishes \Cref{eq:uc-indistinguish-goal} which in turn completes the proof.
\end{proof}

\Cref{thm:two-sided-lb-UC} follows from \Cref{lemma:uc-bad,lemma:uc-views} \emph{mutatis mutandis} as \Cref{thm:main-lb-intersecting} follows from \Cref{lemma: Bad is the only different place between Dy and Dn,lemma: Bad is unlikely}.


\section{Upper Bounds}
\label{sec:ubs}

We describe our algorithms for testing union-closedness and intersectingness in \Cref{subsec:alg-uc,subsec:alg-int} respectively.

\subsection{Upper Bound for Testing Union-Closedness} 
\label{subsec:alg-uc}

We will first give some useful definitions in the context of union-closed functions. Subsequently, we will give a tester for union-closedness. 

\begin{definition}~\label{def:UC-violating}
For a function $f: \{0,1\}^n \rightarrow \{0,1\}$, a sequence  $(x_1, \ldots, x_k, x_1 \cup \ldots \cup x_k)$ (where each $x_i \in \{0,1\}^n$) is said to be a \emph{union-closed violating tuple} or \emph{\textsf{UC}-violating tuple} for short if the following condition holds:
\[\text{For each}~1 \le j \le k, f(x_j)=1 ~\text{and}~f(x_1 \cup \ldots \cup x_k)=0.\]
Furthermore, we say that two \textsf{UC}-violating tuples $(x_1, \ldots, x_k, x_1 \cup \ldots \cup x_k)$ and $(y_1, \ldots, y_k, y_1 \cup \ldots \cup y_k)$ are {\em end-distinct} if $x_1 \cup \ldots \cup x_k \not = y_1 \cup \ldots \cup y_k$.  

Finally, we say that a \textsf{UC}-violating tuple $(x_1, \ldots, x_k, x_1 \cup \ldots \cup x_k)$ is \emph{minimal} if for any $1 \le j \le k$, it holds that $\cup_{i \not =j} x_i \subsetneq (x_1 \cup \ldots \cup x_k)$. 
\end{definition} 

Given $f: \zo^n \to \zo$, we write $\duc(f)$ to denote the distance from $f$ to the property which consists of all union-closed functions, i.e.~
\[
\duc(f) = \min_{\substack{g\isazofunc\\g\text{~is~union-closed}}} \dist(f,g).
\]
The following lemma shows that any function which is far from union-closed must have many end-distinct violating tuples:
\begin{lemma}~\label{lem:distance-uc}
Suppose $f: \{0,1\}^n \rightarrow \{0,1\}$ is such that $\duc(f) \geq \epsilon$. Then, $f$ has at least $\epsilon \cdot 2^n$ many  end-distinct \textsf{UC}-violating
tuples.
\end{lemma}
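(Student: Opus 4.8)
The plan is to exhibit a union-closed function $g$ with $\dist(f,g)\le(\text{number of end-distinct UC-violating tuples of }f)/2^n$; since $\duc(f)\ge\eps$ this forces the number of such tuples to be at least $\eps\cdot 2^n$. The key observation is that we only ever need to \emph{flip} $f$ at points that occur as the \emph{last} coordinate $x_1\cup\cdots\cup x_k$ of some UC-violating tuple, and moreover we only need to flip such points from $0$ to $1$ (never from $1$ to $0$). Call a point $z\in\zo^n$ \emph{bad for $f$} if $f(z)=0$ but there exist $x_1,\dots,x_k$ with $f(x_1)=\cdots=f(x_k)=1$ and $x_1\cup\cdots\cup x_k=z$; equivalently, $z$ is the end of some UC-violating tuple. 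Let $B(f)$ be the set of bad points.

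First I would show that the function $g$ obtained from $f$ by setting $g(z)=1$ for every $z\in B(f)$ and $g=f$ elsewhere is union-closed. Suppose $g(x)=g(y)=1$ and let $z=x\cup y$; we must show $g(z)=1$. If $g(z)=1$ we are done, so suppose $g(z)=0$, which in particular means $z\notin B(f)$ and $f(z)=0$. Since $g(x)=1$, either $f(x)=1$ or $x\in B(f)$; in the latter case $x$ is itself a union $x=u_1\cup\cdots\cup u_r$ of points with $f(u_i)=1$. So in all cases $x$ is a union of one or more $f$-satisfying points, and likewise $y$ is a union of $f$-satisfying points. Concatenating these two lists, $z=x\cup y$ is a union of points all of which have $f$-value $1$, so $(u_1,\dots,u_r,v_1,\dots,v_s,z)$ is a UC-violating tuple for $f$ (its end is $z$ and $f(z)=0$), whence $z\in B(f)$ — contradicting $z\notin B(f)$. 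Hence $g$ is union-closed.

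Next, $\dist(f,g)=|B(f)|/2^n$ since $f$ and $g$ differ exactly on $B(f)$. Because $g$ is union-closed, $\duc(f)\le|B(f)|/2^n$, and our hypothesis $\duc(f)\ge\eps$ gives $|B(f)|\ge\eps\cdot 2^n$. Finally, for each $z\in B(f)$ pick (arbitrarily) one UC-violating tuple whose end is $z$; these chosen tuples are pairwise end-distinct by construction, and there are $|B(f)|\ge\eps\cdot 2^n$ of them. This is exactly the claimed lower bound on the number of end-distinct UC-violating tuples.

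The only point requiring care — the ``main obstacle,'' though it is mild — is verifying closure of $g$ without circularity: one must argue that flipping the bad points to $1$ does not \emph{create} new UC-violating tuples whose ends lie outside $B(f)$, which is precisely the contradiction argument above (any new violation's end is again a union of $f$-ones and hence was already bad). Note we never flip a $1$ of $f$ to a $0$, so no satisfying assignment of $f$ is destroyed; the argument is purely monotone in the set of $1$'s, which is what makes it go through cleanly. (One could also phrase this as: $B(f)$ together with $f^{-1}(1)$ is the union-closure of $f^{-1}(1)$ restricted appropriately, but the direct argument above suffices.)
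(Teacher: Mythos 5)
Your proof is correct and follows essentially the same approach as the paper's: both identify the set $B(f)$ of end-points of UC-violating tuples (all of which must have $f$-value $0$), flip $f$ to $1$ on exactly these points to obtain a union-closed function $g$, and deduce $|B(f)|\ge\eps\cdot 2^n$ from $\duc(f)\le\dist(f,g)=|B(f)|/2^n$. The paper phrases $g$ as the ``union-closure of the $f$-ones outside $B(f)$'' (calling it $\tilde f$), but as you note these two constructions coincide, and your closure argument is the same contradiction the paper uses.
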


\begin{proof}
Let ${\cal C}$ be a maximal collection of end-distinct $\mathsf{UC}$-violating tuples for $f$; {note that every string $z \in \{0,1\}^n$ which is the last coordinate of some $\mathsf{UC}$-violating tuple occurs as the last coordinate of some $\mathsf{UC}$-violating tuple in ${\cal C}$}. Let ${\cal B}$ be the set of all the points in $\{0,1\}^n$ which appear as the last coordinate of a tuple in ${\cal C}$. Define ${\cal A} =\{0,1\}^n \setminus {\cal B}$.

We now define $\tilde{f}: \{0,1\}^n \rightarrow \{0,1\}$ as follows: 
\begin{enumerate}
\item $\tilde{f}(x)=1$ if there are $x_1,\ldots, x_k \in {\cal A}$ such that $f(x_1) = \ldots = f(x_k)=1$. 
\item $\tilde{f}(x)=0$ otherwise. 
\end{enumerate}
We now make two claims about our construction $\tilde{f}$. 
\begin{claim}~\label{clm:tildef-1}
For any point $z \in {\cal A}$, $f(z) = \tilde{f}(z)$. 
\end{claim} 
\begin{proof}
Note that for points $z \in {\cal A}$ where $f(z)=1$, $\tilde{f}(z)=1$ by construction. Further, $\tilde{f}(z)=1$ iff there are points $x_1, \ldots, x_k \in {\cal A}$ such that $f(x_1) = \ldots = f(x_k)=1$. Since ${\cal A}$ does not contain the last coordinate of any \textsf{UC}-violating tuple, it follows that if $f(z) =0$, then $\tilde{f}(z) =0$. 
\end{proof}
\begin{claim}~\label{clm:tildef-2}
The function $\tilde{f}$ is union-closed. 
\end{claim}
\begin{proof}
Towards contradiction, let $(y_1, \ldots, y_k, z)$ be a \textsf{UC}-violating tuple for $\tilde{f}$. This means that for each $1 \le j \le k$, $\tilde{f}(y_j)=1$. By construction of $\tilde{f}$, this means that for each $j$, there are points $y_{j,1}, \ldots, y_{j, \ell_j}$ such that (i) $f(y_{j_1}) = \ldots = f(y_{j, \ell_j}) =1$;  (ii)  $y_j = \cup_{1 \le i \leq \ell_j} y_{j,i}$. 

This means that $z = \cup_{j=1}^k y_j =  \cup_{j=1}^{k} \cup_{i=1}^{\ell_j} y_{j,i}$ 
and for all $1 \le j \le k$ and $1 \le i \le \ell_j$, $f(y_{j,i})=1$. By construction, this implies that $\tilde{f}(y)=1$, contradicting the assumption that $(y_1, \ldots, y_k, z)$ is a \textsf{UC}-violating tuple for $\tilde{f}$.
\end{proof}

Consequently, there is a union-closed function $\tilde{f}$ all of whose disagreements with $f$ are on element s of ${\cal B}$. Since 
$\duc(f) \geq \epsilon$, this implies that $|{\cal B}| \geq \epsilon \cdot 2^n$. This completes the proof of \Cref{lem:distance-uc}.  
\end{proof}

We next record the following simple claim, the proof of which is left to the reader. 
\begin{claim}~\label{claim:zeroing}
Let $f: \zo^n \rightarrow \zo$. For $0<\epsilon<1$, let $T:=\sqrt{n \cdot 2 \ln (4/\epsilon)}$ and define the Boolean function $f_{\mathsf{trunc}}: \zo^n \rightarrow \zo$ as 
\[
f_{\mathsf{trunc}}(x) = \begin{cases} 0  \ \ &\textrm{if} \  | x | \le n/2 - T \\
1 \ \ &\textrm{if} \  | x | \ge n/2 + T \\
f(x) \ \ &\textrm{otherwise}
\end{cases}.
\]
Then, the following hold:
\begin{enumerate}
	\item If $f$ is union-closed, then so is $f_{\mathsf{trunc}}$;
	\item If $\duc(f)\ge\epsilon$, then $\duc(f_{\mathsf{trunc}}) \ge\epsilon/2$; and  
	\item If $(y_1, \ldots, y_T,y_{T+1})$ is a $\mathsf{UC}$-violating tuple for $f_{\mathsf{trunc}}$, then for each $i \in [T+1]$, the point $y_i$ lies in ${[n]\choose j}$
for some $j \in [n/2-T, n/2+T]$.
\end{enumerate}
\end{claim}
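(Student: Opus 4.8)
The plan is to dispatch the three items in turn, each via a short direct argument; all the work is elementary set-size bookkeeping together with one standard tail bound. For item (1), suppose $f$ is union-closed and take any $x,y\in\zo^n$ with $f_{\mathsf{trunc}}(x)=f_{\mathsf{trunc}}(y)=1$; write $z:=x\cup y$, so that $|x|\le|z|$ and $|y|\le|z|$. If $|z|\ge n/2+T$ then $f_{\mathsf{trunc}}(z)=1$ by definition and we are done. Otherwise $|z|<n/2+T$, hence also $|x|,|y|<n/2+T$; and since $f_{\mathsf{trunc}}(x)=f_{\mathsf{trunc}}(y)=1$ rules out $|x|\le n/2-T$ and $|y|\le n/2-T$, we conclude $n/2-T<|x|,|y|<n/2+T$, so $f(x)=f_{\mathsf{trunc}}(x)=1$ and $f(y)=f_{\mathsf{trunc}}(y)=1$. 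Union-closedness of $f$ gives $f(z)=1$, and since $n/2-T<|x|\le|z|<n/2+T$ we get $f_{\mathsf{trunc}}(z)=f(z)=1$, as needed.

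For item (2), I would prove the contrapositive. If $g$ is union-closed with $\dist(f_{\mathsf{trunc}},g)<\epsilon/2$, then by the triangle inequality for Hamming distance it suffices to show $\dist(f,f_{\mathsf{trunc}})\le\epsilon/2$, since then $\duc(f)\le\dist(f,g)\le\dist(f,f_{\mathsf{trunc}})+\dist(f_{\mathsf{trunc}},g)<\epsilon$. But $f$ and $f_{\mathsf{trunc}}$ disagree only on inputs $x$ with $|x|\le n/2-T$ or $|x|\ge n/2+T$, so $\dist(f,f_{\mathsf{trunc}})\le\Pr_{\bx\sim\zo^n}\big[\,\big||\bx|-n/2\big|\ge T\,\big]$, and Hoeffding's inequality bounds this by $2\exp(-2T^2/n)$. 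Substituting $T=\sqrt{2n\ln(4/\epsilon)}$ yields $2\exp(-4\ln(4/\epsilon))=2(\epsilon/4)^4\le\epsilon/2$ for $\epsilon\in(0,1)$. The only point requiring care is matching the constant in the definition of $T$ to the exponent in the tail bound.

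For item (3), let $(y_1,\dots,y_T,y_{T+1})$ be a $\mathsf{UC}$-violating tuple for $f_{\mathsf{trunc}}$, where $y_{T+1}=y_1\cup\cdots\cup y_T$, so that $f_{\mathsf{trunc}}(y_j)=1$ for each $j\in[T]$ and $f_{\mathsf{trunc}}(y_{T+1})=0$. Since $f_{\mathsf{trunc}}(y_j)=1$ we must have $|y_j|>n/2-T$ (otherwise the definition of $f_{\mathsf{trunc}}$ forces value $0$); since $f_{\mathsf{trunc}}(y_{T+1})=0$ we must have $|y_{T+1}|<n/2+T$ (otherwise the definition forces value $1$). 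Because $y_j\subseteq y_{T+1}$ for every $j\in[T]$, this gives $n/2-T<|y_j|\le|y_{T+1}|<n/2+T$ for $j\in[T]$, and $n/2-T<|y_1|\le|y_{T+1}|<n/2+T$. Hence $|y_i|\in[n/2-T,n/2+T]$ for every $i\in[T+1]$, i.e. each $y_i$ lies in $\binom{[n]}{j}$ for some $j\in[n/2-T,n/2+T]$.

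The overall claim is routine; the closest thing to an obstacle is simply remembering that \emph{both} truncation thresholds (the ``$1$ above, $0$ below'' structure of $f_{\mathsf{trunc}}$) have to be invoked in each part, together with getting the constant right in the Hoeffding calculation for item (2).
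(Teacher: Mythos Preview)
Your proof is correct, and since the paper explicitly leaves this claim to the reader (``the proof of which is left to the reader''), there is no alternative approach to compare against; the elementary set-size bookkeeping plus a Hoeffding tail bound that you give is precisely the routine argument being alluded to. One minor note: your Hoeffding computation in item~(2) yields $2(\epsilon/4)^4$, which is considerably stronger than the needed $\epsilon/2$---the paper's choice of $T$ is generous---but this only helps.
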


\begin{algorithm}[t]
\caption{Algorithm to Test Union-Closedness}
\label{alg:union-closed}
\vspace{0.5em}
\textbf{Input:} Query access to $f: \{0,1\}^n \rightarrow \{0,1\}$, parameter $0 < \eps < 1$\\[0.5em]
\textbf{Output:} ``Union-closed'' or  ``$\eps$-far from union-closed''

\ 

\textsf{Union-Closed-Tester}$(f)$:
\begin{enumerate}
	\item Repeat the following $M:= 100/\epsilon$ times:
		\begin{enumerate}
			\item Set $I:= [n/2 - \sqrt{n \cdot 2 \ln (4/\epsilon)}, n/2 + \sqrt{n  \cdot 2 \ln (4/\epsilon)}].$ 
			\item Sample $\bx$ uniformly from the set ${[n]\choose I}$.
			\item Let $\bx_{\downarrow}:= \{y \le \bx: y \in {[n]\choose I}\}$. Query $f(y)$ for every $y$ in $\bx_{\downarrow}$.
			\item Check if there are $y_1, \ldots, y_T \in \bx_{\downarrow}$ such that $(y_1, \ldots, y_T, \bx)$ is a \textsf{UC}-violating tuple. 
			\item If yes, halt and output ``$\eps$-far from union-closed".
		\end{enumerate}
	\item Output ``union-closed''.
\end{enumerate}
\end{algorithm}

\Cref{alg:union-closed} gives our algorithm for testing union-closedness, and the following theorem establishes its performance and correctness:

\begin{theorem}~\label{thm:Union-closed-algorithm}
Given oracle access to $f: \zo^n \rightarrow \zo$ and error parameter $0< \epsilon\leq 1/2$, the algorithm \textsf{Union-Closed-Tester} (cf.~\Cref{alg:union-closed}) makes $n^{O(\sqrt{n \log (1/\epsilon)})}/\eps$ queries to $f$ and has the following performance guarantee: 
\begin{enumerate}
\item If the function $f$ is union-closed, then the algorithm returns ``union-closed" with probability $1$. 
\item If $\duc(f) > \epsilon$, then the algorithm returns ``$\eps$-far from union-closed" with probability at least $9/10$. 
\end{enumerate}
\end{theorem}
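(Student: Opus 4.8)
The plan is to derive \Cref{thm:Union-closed-algorithm} by combining \Cref{lem:distance-uc} and \Cref{claim:zeroing} with a routine counting-and-repetition argument. For the query complexity, I would bound $|\bx_\downarrow|$ as follows: since $\bx \sim \binom{[n]}{I}$ with $I = [n/2 - T, n/2 + T]$ and $T = \sqrt{2n\ln(4/\eps)}$, we have $|\bx| \le n/2 + T$, while every $y \in \bx_\downarrow$ satisfies $y \le \bx$ and $|y| \ge n/2 - T$; hence $y$ is obtained from $\bx$ by zeroing out at most $2T$ of its $1$-coordinates, so $|\bx_\downarrow| \le \sum_{i=0}^{2T}\binom{n}{i} = n^{O(T)} = n^{O(\sqrt{n\log(1/\eps)})}$. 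As the main loop runs $M = 100/\eps$ times, the total query count is $n^{O(\sqrt{n\log(1/\eps)})}/\eps$. (One technical point to record: the arity-$T$ constraint in step (d) is harmless — since the algorithm has queried all of $\bx_\downarrow \cup \{\bx\}$, it may simply check whether $f(\bx) = 0$ and whether the union of the queried $1$-inputs below $\bx$ equals $\bx$; and a minimal such sub-tuple has arity $O(T)$, because each of its $k$ members has a coordinate covered by no other member, and since each member has weight $\ge n/2 - T$ inside a set $\bx$ of weight $\le n/2 + T$, one gets $k \le 2T+1$.) Completeness is immediate: if $f$ is union-closed it has no \textsf{UC}-violating tuple whatsoever, so step (d) never triggers and the algorithm outputs ``union-closed'' with probability $1$.

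For soundness, suppose $\duc(f) > \eps$. The key structural observation is that every point the algorithm ever queries (the sample $\bx$ and all of $\bx_\downarrow$) lies in $\binom{[n]}{I}$, and on this set $f$ and $f_{\mathsf{trunc}}$ agree by definition; hence the algorithm's behavior on $f$ is identical to its behavior on $f_{\mathsf{trunc}}$, and it suffices to analyze it on $f_{\mathsf{trunc}}$. By \Cref{claim:zeroing}(2) we have $\duc(f_{\mathsf{trunc}}) \ge \eps/2$, so \Cref{lem:distance-uc} gives at least $(\eps/2)\cdot 2^n$ end-distinct \textsf{UC}-violating tuples for $f_{\mathsf{trunc}}$. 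For any such tuple $(x_1,\dots,x_k,z)$ we have $f_{\mathsf{trunc}}(z) = 0$ (forcing $|z| < n/2 + T$) and $f_{\mathsf{trunc}}(x_i) = 1$ with $x_i \le z$ (forcing $n/2 - T \le |x_i| \le |z|$), so $z$ and all the $x_i$ lie in $\binom{[n]}{I}$ (this is exactly \Cref{claim:zeroing}(3)). Call such an endpoint $z$ a \emph{witness}; there are at least $(\eps/2)\cdot 2^n$ distinct witnesses, all in $\binom{[n]}{I}$, and since $\big|\binom{[n]}{I}\big| \le 2^n$ a uniform $\bx \sim \binom{[n]}{I}$ is a witness with probability at least $\eps/2$. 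Whenever $\bx$ is a witness, the corresponding inputs $x_1,\dots,x_k$ all lie in $\bx_\downarrow$ and are therefore queried, $f(x_i) = 1$ and $f(\bx) = 0$ and $\bigcup_i x_i = \bx$, so step (d) detects a \textsf{UC}-violating tuple ending at $\bx$ and the algorithm correctly outputs ``$\eps$-far from union-closed''. Over the $M = 100/\eps$ independent iterations, the probability that none samples a witness is at most $(1 - \eps/2)^{100/\eps} \le e^{-50} < 1/10$, giving success probability at least $9/10$.

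I expect the only real subtlety — everything else being bookkeeping on top of \Cref{lem:distance-uc} and \Cref{claim:zeroing} — to be the reduction to $f_{\mathsf{trunc}}$: one must verify that the algorithm never queries outside $\binom{[n]}{I}$ (so it genuinely cannot distinguish $f$ from $f_{\mathsf{trunc}}$), and that for a \textsf{UC}-violating tuple of $f_{\mathsf{trunc}}$ both the endpoint $z$ \emph{and} all the covering inputs $x_i$ land inside $\binom{[n]}{I}$, hence inside $\bx_\downarrow$ when $\bx = z$. Once these points are pinned down there is no deeper obstacle.
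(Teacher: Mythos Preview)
Your proof is correct and follows essentially the same approach as the paper: bound $|\bx_\downarrow|$ for the query count, observe completeness is trivial, and for soundness reduce to $f_{\mathsf{trunc}}$ via \Cref{claim:zeroing}, apply \Cref{lem:distance-uc} to get $\ge (\eps/2)\cdot 2^n$ witnesses in $\binom{[n]}{I}$, and amplify over $100/\eps$ iterations. You are in fact slightly more careful than the paper on two points the paper leaves implicit: that the algorithm only queries inside $\binom{[n]}{I}$ (so it cannot distinguish $f$ from $f_{\mathsf{trunc}}$), and that the arity-$T$ constraint in step~(d) is harmless because a minimal covering sub-tuple inside $\bx_\downarrow$ has bounded arity.
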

\begin{proof}
Note that for any $x \in {[n]\choose I}$, $|x_{\downarrow}| = n^{O(\sqrt{n \log (1/\epsilon)})}$, from which the query complexity of \textsf{Union-Closed-Tester} is immediate. 

Note that the first item is also immediate because if $f$ is union-closed, then in Step~1(d), the algorithm must fail to find a $\mathsf{UC}$-violating tuple and hence it will always output ``union-closed" in Step~2. 

Finally, to prove the second item, note that if $\duc(f) > \epsilon$,  then by \Cref{claim:zeroing} we have $\duc(f_{\mathsf{trunc}}) > \epsilon/2$. Let us call a point $x \in \zo^n$ a ``witness" if there exists $\{y_i\}_{1 \le i \le T}$ such that $(y_1, \ldots, y_T, x)$ is a $\mathsf{UC}$-violating tuple for $f_{\mathsf{trunc}}$. By Claim~\ref{claim:zeroing}, note that any such $x$ and the corresponding $y_1, \ldots, y_T$ must lie in $ {[n]\choose I}$.

Applying~\Cref{lem:distance-uc}, the probability that $\bx$ sampled in Step~1(b) is a witness is at least $\epsilon/2$. Further, if $x$ is a witness, then in Step~1(d), the algorithm will find a $\mathsf{UC}$-violating tuple $(y_1, \ldots, y_T, x)$. Thus each iteration of lines (a)--(e) in the algorithm will output ``far-from union-closed" with probability at least $\epsilon/2$. This means that repeating the procedure $M=100/\epsilon$ times, we get the correct answer with probability at least $9/10$. 
\end{proof}

\subsubsection{Extension to a Triple Tester} \label{sec:extension}

The algorithm in \Cref{thm:Union-closed-algorithm} (cf.~\Cref{alg:union-closed})
checks for \textsf{UC}-violating tuples. As stated in the introduction, our algorithm can be modified into a so-called ``triple tester", i.e. an algorithm checks for \textsf{UC}-violating triples in the function $f$. In other words, it looks for triples $(y_1, y_2,x)$ such that $f(x)=0$, $f(y_1) = f(y_2) =1$ and $y_1 \cup y_2 =x$. In this section, we briefly sketch how this modification can be done.

Given any function $f: \zo^n \rightarrow \zo$ and a corresponding \textsf{UC}-violating tuple $\tau$ defined as $\tau=(x_1, \ldots, x_k, x_1 \cup \ldots \cup x_k)$, we define its augmentation $\widetilde{\tau}$ as
\[
\widetilde{\tau} := (x_1, \ldots, x_k, x_1 \cup x_2, x_1 \cup x_2 \cup x_3, \ldots, x_1 \cup \ldots  \cup x_k). 
\]
Note that whenever $\tau$ is  \textsf{UC}-violating tuple, there is some choice of $1 \le j <k$, such that $(x_1\cup \ldots \cup x_j, x_{j+1}, x_1\cup \ldots \cup x_{j+1})$ is a \textsf{UC}-violating triple. 
 We now have the following easy but crucial observation:
 
\begin{fact}~\label{fact:augmented}
If $f$ has $M$ \textsf{UC}-violating tuples $\tau_1, \ldots, \tau_M$ such that the augmentations $\widetilde{\tau_1}, \ldots, \widetilde{\tau_M}$ are pairwise-disjoint, then $f$ has $M$ \textsf{UC}-violating triples $\sigma_1, \ldots, \sigma_M$. Further, if all the coordinates of the tuples 
$\tau_1, \ldots, \tau_M$ lie in $\binom{[n]}{I}$ for an interval $I$, then so do the coordinates of $\sigma_1, \ldots, \sigma_M$. 
\end{fact}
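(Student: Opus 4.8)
The plan is to pull out of each tuple $\tau_i$ a single \textsf{UC}-violating triple $\sigma_i$ whose three coordinates are all coordinates of the augmented tuple $\widetilde{\tau_i}$; the pairwise-disjointness hypothesis on the $\widetilde{\tau_i}$ then immediately makes the $\sigma_i$ pairwise disjoint, and in particular distinct, which is more than enough.

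Fix $i$ and write $\tau_i = (x_1, \ldots, x_k, x_1 \cup \cdots \cup x_k)$; since $\tau_i$ is genuinely violating we have $k \ge 2$. For $1 \le j \le k$ let $u_j := x_1 \cup \cdots \cup x_j$, so $u_1 = x_1$ and $u_k = x_1 \cup \cdots \cup x_k$. As $\tau_i$ is \textsf{UC}-violating, $f(u_1) = f(x_1) = 1$ while $f(u_k) = 0$, so there is a least index $j \in \{1, \ldots, k-1\}$ with $f(u_{j+1}) = 0$; for this $j$ we have $f(u_j) = 1$, $f(x_{j+1}) = 1$, $u_j \cup x_{j+1} = u_{j+1}$, and $f(u_{j+1}) = 0$, so $\sigma_i := (u_j, x_{j+1}, u_{j+1})$ is a \textsf{UC}-violating triple (this is exactly the ``easy observation'' preceding the statement). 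Crucially, each of $u_j$, $x_{j+1}$, $u_{j+1}$ appears as a coordinate of $\widetilde{\tau_i}$: the $x$'s appear by definition, $u_j$ is either $x_1$ (when $j = 1$) or one of the listed prefix unions $x_1 \cup \cdots \cup x_j$ (when $j \ge 2$), and likewise for $u_{j+1}$.

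Given this, the first assertion is immediate: the underlying point sets of $\widetilde{\tau_1}, \ldots, \widetilde{\tau_M}$ are pairwise disjoint by hypothesis, and the three coordinates of each $\sigma_i$ lie inside the point set of $\widetilde{\tau_i}$, so $\sigma_1, \ldots, \sigma_M$ are pairwise disjoint, hence $M$ distinct \textsf{UC}-violating triples of $f$. For the ``further'' part, suppose every coordinate of every $\tau_i$ lies in ${[n]\choose I}$ for an interval $I = [\alpha,\beta]$ of Hamming weights. The coordinate $x_{j+1}$ of $\sigma_i$ is a coordinate of $\tau_i$, so $|x_{j+1}| \in I$; and for $u_j$, $u_{j+1}$ we use that Hamming weight is monotone under inclusion, $x_1 \subseteq u_j \subseteq u_{j+1} \subseteq u_k$, together with $|x_1|, |u_k| \in [\alpha, \beta]$ and the fact that $I$ is an interval, to conclude $|u_j|, |u_{j+1}| \in [\alpha,\beta] = I$. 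Hence all coordinates of all the $\sigma_i$ lie in ${[n]\choose I}$.

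I do not anticipate a real obstacle here: the argument is essentially bookkeeping built on top of the ``easy observation.'' The only step that requires a moment's thought — and the only place where the hypothesis that $I$ is an \emph{interval} of Hamming levels (rather than an arbitrary union of levels) is used — is verifying that the intermediate prefix-unions $u_j$ inherit the weight constraint.
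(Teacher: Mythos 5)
Your proof is correct and takes essentially the same route as the paper: the paper states this as a ``Fact'' with no separate proof, relying on exactly the preceding observation (that some consecutive prefix pair in $\widetilde{\tau}$ yields a \textsf{UC}-violating triple), and you have simply filled in the two small pieces of bookkeeping the paper leaves implicit --- that the resulting triple's coordinates sit inside the augmented tuple (giving pairwise disjointness), and that the chain $x_1 \subseteq u_j \subseteq u_{j+1} \subseteq u_k$ together with $I$ being an interval keeps the new coordinates' Hamming weights in $I$.
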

We next have the following analogue of \Cref{lem:distance-uc}. 
\begin{lemma}~\label{lem:distance-uc1}
Suppose $f: \{0,1\}^n \rightarrow \{0,1\}$ is such that $\duc(f) > \epsilon$. Then, $f$ has at least $(\epsilon/n)\cdot 2^n$ many  disjoint \textsf{UC}-violating
tuples. 
\end{lemma}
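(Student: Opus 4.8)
I want to show that if $\duc(f) > \epsilon$ then $f$ has at least $(\epsilon/n) \cdot 2^n$ pairwise-disjoint $\mathsf{UC}$-violating tuples. The starting point is \Cref{lem:distance-uc}, which already gives $\epsilon \cdot 2^n$ \emph{end-distinct} $\mathsf{UC}$-violating tuples — call this collection $\mathcal{C}$. The issue is that end-distinctness only guarantees distinct \emph{last} coordinates; the tuples could still share many of their non-final coordinates. So the task is to extract from $\mathcal{C}$ a large sub-collection that is genuinely pairwise-disjoint (no shared coordinates at all), while losing only a factor of $n$ in the count.

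\textbf{Key steps.} First, I would replace each tuple in $\mathcal{C}$ by a \emph{minimal} one with the same last coordinate: given $(x_1,\dots,x_k, z)$ with $z = x_1 \cup \cdots \cup x_k$, repeatedly discard any $x_i$ that is redundant (i.e. $\bigcup_{j \neq i} x_j$ still equals $z$) until the tuple is minimal in the sense of \Cref{def:UC-violating}. This preserves the last coordinate $z$, so the new collection is still end-distinct of size $\geq \epsilon \cdot 2^n$. Second — and this is the crux — I would argue that a minimal $\mathsf{UC}$-violating tuple with last coordinate $z$ cannot be "too wide": in a minimal tuple each $x_i$ must contain some coordinate present in no other $x_j$ (otherwise $x_i$ would be redundant), so $k \leq |z| \leq n$, and moreover every $x_i$ satisfies $x_i \subseteq z$. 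Thus each such tuple uses only coordinates among those of $z$ (as subsets of $[n]$, every coordinate appearing in the tuple is a $1$-coordinate of $z$). Third, I set up a conflict graph $G$ on the collection of minimal tuples, joining two tuples if they share any coordinate (equivalently, share a $1$-coordinate, hence the $z$'s overlap or a non-final coordinate overlaps). I claim each tuple has bounded degree: a tuple with last coordinate $z$ only conflicts with tuples whose last coordinate $z'$ intersects... — actually the cleaner route is to bucket the $\epsilon \cdot 2^n$ tuples by, say, one distinguished coordinate each (e.g. the lexicographically-first coordinate $i$ with $z_i = 1$), and observe that tuples sharing that coordinate... Hmm, this needs care. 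The intended argument is likely simpler: greedily build a maximal disjoint sub-collection $\mathcal{C}'$; each tuple we add "kills" only those tuples sharing a coordinate with it, and since the tuple lives entirely inside the $\leq n$ coordinates of its last element, and distinct tuples in $\mathcal{C}$ have distinct last elements, a counting/averaging argument over the $n$ coordinates shows we can keep a $1/n$ fraction. Concretely: assign to each of the $\epsilon \cdot 2^n$ end-distinct tuples the coordinate on which it is "charged," then some coordinate $i^\star$ is the charge of $\geq \epsilon \cdot 2^n / n$ tuples — no wait, sharing a coordinate is the \emph{bad} event, not the good one.

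\textbf{The real argument (and the obstacle).} The cleanest correct approach: among the $\epsilon \cdot 2^n$ end-distinct minimal tuples, greedily select a maximal pairwise-disjoint subfamily $\mathcal{C}'$. For any tuple $\tau \in \mathcal{C} \setminus \mathcal{C}'$ there is some $\tau' \in \mathcal{C}'$ sharing a coordinate $i$; since $\tau$'s last coordinate $z_\tau$ contains all coordinates of $\tau$, we get $i \in z_\tau$. Fix $\tau' \in \mathcal{C}'$ and a coordinate $i$ used by $\tau'$; the tuples $\tau \in \mathcal{C}$ "blamed" on this pair $(\tau', i)$ all have $z_\tau \ni i$, but they have \emph{distinct} last coordinates, so in principle there could be up to $2^{n-1}$ of them — that's the obstacle, and it shows that raw end-distinctness is \emph{not} enough and we genuinely need \Cref{claim:zeroing}/the truncation to the middle layers to bound things, OR the lemma statement's "disjoint" must be read as "end-distinct after augmentation." Re-reading \Cref{fact:augmented}, what is actually needed downstream is that the \emph{augmentations} $\widetilde{\tau_i}$ are pairwise-disjoint. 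So the honest plan is: start from the $\epsilon \cdot 2^n$ end-distinct tuples of \Cref{lem:distance-uc}, pass to minimal ones (last coordinate preserved, all coordinates $\subseteq z$), and then observe that for \emph{minimal} tuples, two augmentations $\widetilde{\tau}, \widetilde{\tau'}$ overlap only if the tuples share a coordinate, which forces $z_\tau, z_{\tau'}$ to overlap in a controlled way; bucketing by the common coordinate and using that within a bucket the last coordinates are distinct subsets all containing that coordinate, together with a layer-by-layer counting argument (only $n$ layers, or $n$ choices of distinguished coordinate), yields the $1/n$ loss. The main obstacle is making this bucketing argument give exactly a factor $1/n$ rather than something weaker; I expect the resolution is to charge each tuple to a single coordinate of its last element (e.g. the max-index $1$-coordinate) and argue that after this charging, two tuples charged to the same coordinate with disjoint augmentations elsewhere can be separated, so a maximal disjoint family within each of the $n$ charge-classes captures the whole class, giving $\geq \epsilon \cdot 2^n / n$ overall.
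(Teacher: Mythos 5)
Your plan of deriving \Cref{lem:distance-uc1} from \Cref{lem:distance-uc} has a genuine gap, which you yourself half-identify in the middle of the proposal but never actually close. The issue is that end-distinctness of the $\epsilon \cdot 2^n$ tuples from \Cref{lem:distance-uc} gives no control over how many of them share a given \emph{non-final} coordinate. When you greedily build a maximal disjoint subfamily $\mathcal{C}'$, a single chosen minimal tuple $\tau'$ has at most $n+1$ points, but any one of those points --- say some $x_i$ of Hamming weight $\approx n/2$ --- can appear as a non-final coordinate in exponentially many distinct end-distinct tuples. So a single $\tau' \in \mathcal{C}'$ can ``kill'' far more than a $1/n$ fraction of $\mathcal{C}$, and no re-charging or bucketing by a distinguished coordinate of $z$ fixes this: the tuples you want to separate can still overlap on intermediate points. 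Minimality and $x_i \subseteq z$ bound the \emph{size} of a single tuple, not the \emph{degree} of the conflict graph, and it is the latter you would need. The truncation of \Cref{claim:zeroing} is also a red herring here --- it controls Hamming weights, not coordinate-sharing --- and the lemma is in any case stated and proved for arbitrary $f$ with no truncation.

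The paper's proof sidesteps this entirely by \emph{not} routing through \Cref{lem:distance-uc}. It takes $\mathcal{C}$ to be a maximal collection of pairwise-disjoint \emph{minimal} $\mathsf{UC}$-violating tuples from the start, and lets $\mathcal{B}$ be the set of \emph{all} points (not just last coordinates) appearing in any tuple of $\mathcal{C}$. Maximality plus minimalization gives that every violating tuple of $f$ has some coordinate in $\mathcal{B}$. One then defines $\mathcal{A} = \{0,1\}^n \setminus \mathcal{B}$ and $\tilde f$ exactly as in the proof of \Cref{lem:distance-uc}, and checks that \Cref{clm:tildef-1,clm:tildef-2} go through verbatim with this new $\mathcal{B}$: $\tilde f$ is union-closed and agrees with $f$ on $\mathcal{A}$. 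Hence $\duc(f) > \epsilon$ forces $|\mathcal{B}| > \epsilon \cdot 2^n$. Since each minimal tuple has $k \le |z| \le n$ non-final coordinates and one final coordinate, each tuple contributes at most $n+1$ points to $\mathcal{B}$, so $|\mathcal{C}| \ge |\mathcal{B}|/(n+1) > \epsilon \cdot 2^n/(n+1)$. In short, the ``factor $n$'' loss comes from the size of a single tuple (which your minimality observation correctly bounds), not from any averaging over a conflict graph; the disjointness is built into the construction of $\mathcal{C}$ rather than extracted afterwards. Your structural observations about minimal tuples were the right raw material --- you just needed to plug them into the $\tilde f$-surgery argument directly rather than trying to post-process the end-distinct family.
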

\begin{proof}
Let ${\cal C}$ be a maximal collection of disjoint minimal \textsf{UC}-violating
tuples for $f$. Let ${\cal B}$ be the set of all points in $\zo^n$ which appear in any of tuples in ${\cal B}$. Note that any violating tuple for $f$ must have at least one of its coordinates in ${\cal B}$. 

With the definition of ${\cal B}$, if we define ${\cal A}$ and $\tilde{f}$ exactly as in the proof of \Cref{lem:distance-uc}, then the proofs of \Cref{clm:tildef-1} and \Cref{clm:tildef-2} will go through and we will obtain that $\tilde{f}$ is union-closed. Since $\duc(f) > \epsilon$, it follows that $|{\cal B}| > \epsilon \cdot 2^n$ and thus, $|{\cal C}| > \epsilon/n \cdot 2^n$. This finishes the proof. 
\end{proof}

Using \Cref{fact:augmented} and \Cref{lem:distance-uc1},
we get the following claim. 

\begin{claim}~\label{clm:disjoint-triples}
If $\duc(f) \ge \epsilon$, then 
$f$ has at least $\epsilon/ n \cdot 2^n$ disjoint violating triples. 
\end{claim}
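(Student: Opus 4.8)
The plan is to obtain Claim~\ref{clm:disjoint-triples} by coupling Lemma~\ref{lem:distance-uc1} with Fact~\ref{fact:augmented}. The one point that needs attention is that Fact~\ref{fact:augmented} requires a family of \textsf{UC}-violating tuples whose \emph{augmentations} are pairwise disjoint, whereas Lemma~\ref{lem:distance-uc1} as stated only produces pairwise disjoint tuples, and the augmentation of a tuple introduces new ``partial union'' points that could collide across tuples. So the first step is to re-run the proof of Lemma~\ref{lem:distance-uc1} with the stronger requirement: let $\mathcal{C}$ be a \emph{maximal} collection of minimal \textsf{UC}-violating tuples whose augmentations $\widetilde{\tau}$ are pairwise disjoint, and set $\mathcal{B} := \bigcup_{\tau\in\mathcal{C}} \{\text{coordinates of } \widetilde{\tau}\}$ and $\mathcal{A} := \zo^n\setminus\mathcal{B}$. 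Since a minimal $k$-term \textsf{UC}-violating tuple forces $k \le n$ (each source contributes a ``private'' coordinate missing from the union of the others), and its augmentation has at most $2k-1 < 2n$ coordinates, we get $|\mathcal{C}| \ge |\mathcal{B}|/(2n)$.

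The second step is to recover the distance bound $|\mathcal{B}| \ge \duc(f)\cdot 2^n > \epsilon\cdot 2^n$ via the same surrogate construction used in Lemma~\ref{lem:distance-uc}: define $\tilde f$ from the restriction of $f$ to $\mathcal{A}$ exactly as there, so that $\tilde f^{-1}(1)$ is the union-closure of $\mathcal{A}\cap f^{-1}(1)$. The argument of Claim~\ref{clm:tildef-2} is purely structural and applies verbatim, so $\tilde f$ is union-closed; hence it suffices to argue that $\tilde f$ and $f$ agree on $\mathcal{A}$, since then every disagreement between $\tilde f$ and $f$ lies in $\mathcal{B}$, forcing $|\mathcal{B}| \ge \duc(f)\cdot 2^n$. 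Plugging this into the first step yields $|\mathcal{C}| \ge \Omega(\epsilon/n)\cdot 2^n$ tuples $\tau_1,\ldots,\tau_M$ with pairwise disjoint augmentations; Fact~\ref{fact:augmented} then produces $M$ \textsf{UC}-violating triples $\sigma_1,\ldots,\sigma_M$, and because each $\sigma_i$ is assembled from three coordinates of $\widetilde{\tau_i}$ and the $\widetilde{\tau_i}$ are pairwise disjoint, the $\sigma_i$ are pairwise disjoint, which is exactly the claim (up to the harmless slack between $\epsilon/(2n)$ and $\epsilon/n$).

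The main obstacle is verifying $\tilde f = f$ on $\mathcal{A}$ in the second step. In the original disjoint-\emph{tuples} setting of Lemma~\ref{lem:distance-uc1} this is immediate: if $z\in\mathcal{A}$ with $f(z)=0$ is a union of points of $\mathcal{A}\cap f^{-1}(1)$, a minimal sub-tuple witnessing this has all its coordinates in $\mathcal{A}$ and could therefore be added to the collection, contradicting maximality. With the augmentation-disjointness requirement, maximality only forbids adding a minimal sub-tuple whose \emph{entire augmentation} misses $\mathcal{B}$, so the new possibility is that some intermediate partial-union coordinate $q$ of that augmentation lies in $\mathcal{B}$ (its sources and endpoint still lie in $\mathcal{A}$). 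Ruling this out calls for an extremal argument — e.g.\ taking $z$ minimal in the subset order among all such ``bad'' points, equipped with a minimum-length witness — together with a short case analysis on $f(q)$: intuitively, $f(q)=1$ lets one shorten the witness and expose a \textsf{UC}-violating triple ending at $z$, placing $z\in\mathcal{B}$ and contradicting $z\in\mathcal{A}$, while $f(q)=0$ produces a smaller bad point. Making this fully rigorous — in particular handling the interaction between $q\in\mathcal{B}$ and the recursion cleanly — is the delicate part of the proof; an acceptable fallback, if this turns out to be awkward, is to enlarge $\mathcal{B}$ slightly (at the cost of another constant factor in the final bound) so that the surrogate argument goes through as in Lemma~\ref{lem:distance-uc}.
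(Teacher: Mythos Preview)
Your high-level route—combine \Cref{lem:distance-uc1} with \Cref{fact:augmented}—is exactly what the paper does; the paper's entire proof is the single sentence ``Using \Cref{fact:augmented} and \Cref{lem:distance-uc1}, we get the following claim.'' You are in fact more careful than the paper: you correctly observe that \Cref{lem:distance-uc1} produces pairwise-disjoint \emph{tuples}, whereas \Cref{fact:augmented} (and the conclusion of \Cref{clm:disjoint-triples}, which asks for disjoint \emph{triples}) needs pairwise-disjoint \emph{augmentations}, since the violating triple extracted from $\tau$ uses partial-union points of $\widetilde{\tau}$. The paper simply glosses over this mismatch.

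That said, your attempt to close the gap does not go through as written. In your case analysis on the partial union $q\in\mathcal{B}$: when $f(q)=0$, you say this ``produces a smaller bad point,'' but your notion of ``bad'' requires membership in $\mathcal{A}$, and $q\in\mathcal{B}$ by hypothesis—so minimality of $z$ yields no contradiction. When $f(q)=1$, you say this ``lets one shorten the witness,'' but a witness for $\tilde f(z)=1$ must have all sources in $\mathcal{A}$, and $q\in\mathcal{B}$, so the shortened tuple $(q,x_{j+1},\dots,x_k,z)$ is not a valid witness and minimum-length is not violated; nor is it clear why finding a triple ending at $z$ would ``place $z\in\mathcal{B}$,'' since $\mathcal{B}$ is fixed once $\mathcal{C}$ is chosen. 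Both branches are stuck, and your fallback (``enlarge $\mathcal{B}$ slightly'') is too vague to assess. So while you have correctly diagnosed a real lacuna in the paper's one-line derivation, your proposal does not actually fill it; what is written is a proof sketch with the hard step still open.
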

To define the triple tester, we now consider the following ``basic subroutine":
\begin{enumerate}
	\item Set 
$I:= [n/2 - \sqrt{n \cdot 2 \ln (4n/\epsilon)}, n/2 + \sqrt{n  \cdot 2 \ln (4n/\epsilon)}].$ (Note that the ``width" of $I$ is different from the setting in \Cref{alg:union-closed} by a factor of $\sqrt{\ln n}$.)
	\item Sample $\bx$ 
 uniformly from the set $\binom{[n]}{I}$ and sample $\by_1$ and $\by_2$ uniformly and independently at random from $\bx_{\downarrow}$.
 	\item If  $(\by_1, \by_2, \bx)$ form a $\mathsf{UC}$-violating triple, then output ``$\epsilon$-far from union-closed".
\end{enumerate}

Observe that if $f$ is union-closed, then the basic subroutine will not output ``$\epsilon$-far from union-closed". On the other hand, if $f$ is $\epsilon$-far from union-closed, then using \Cref{clm:disjoint-triples} and  the same reasoning as \Cref{thm:Union-closed-algorithm}, it follows that the basic subroutine will find a $\mathsf{UC}$-violating triple with probability at least $\tau:=\epsilon \cdot 2^{-\Theta(\sqrt{n \log(n/\epsilon)} \log n)}
$. 

To amplify the soundness of our algorithm, we simply repeat the basic subroutine $100/\tau$ times. If any of this invocations returns ``$\epsilon$-far from union-closed", then we output ``$\epsilon$-far from Union-closed". Otherwise, we return ``union-closed". This preserves the completeness of the procedure but we also have the guarantee that if the function $f$ is such that $\duc(f) \ge \epsilon$, the algorithm returns ``$\epsilon$-far from union-closed" with probability at least $9/10$. The query complexity of the procedure is now $\Theta(1/\tau) = 2^{\Theta(\sqrt{n \log(n/\epsilon)} \log n)} /\epsilon$.

\subsection{Upper Bound for Testing Intersectingness}
\label{subsec:alg-int}

We start with some useful definitions for intersecting sets, and then give our tester for intersecting functions. 

\begin{definition}\label{def:intersecting-violation}
For a function $f: \zo^n \rightarrow \zo$, a pair $(x,y)$ is said to be an \emph{intersection violation} (which we abbreviate ``$\mathsf{I}$-violation'')  if $f(x) = f(y)=1$ and $x \cap y = \emptyset$. 
\end{definition}

Given $f: \zo^n \to \zo$, we write $\dint(f)$ to denote the distance from $f$ to the property which consists of all intersecting functions, i.e.~
\[
\dint(f) = \min_{\substack{g: \zo^n \to \zo\\g\text{~is~intersecting}}} \dist(f,g).
\]
The following lemma is similar in spirit to \Cref{lem:distance-uc} and shows that any function which is far from intersecting must have many disjoint violating pairs. In fact, it gives a characterization of the distance from a function $f$ to the property of being intersecting up to a factor of two (we note that both inequalities below can be sharp):

\begin{lemma} \label{lem:dist-to-intersecting}
	Given a boolean function $f:\{0,1\}^n\rightarrow\{0,1\}$, we have 
	$$
	|M|/2^n \leq \dint(f) \leq 2|M|/2^n,
	$$
	where $M$ is a maximum sized set of disjoint \textsf{I}-violating pairs of $f$.
\end{lemma}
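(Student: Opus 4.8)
The plan is to prove the two inequalities separately, in both cases exploiting the fact that $M$ is a \emph{matching}: its $\mathsf{I}$-violating pairs are pairwise vertex-disjoint.

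For the upper bound $\dint(f) \leq 2|M|/2^n$, I would fix such a maximum matching $M$ and let $B \subseteq \zo^n$ be the set of all points appearing in some pair of $M$, so that $|B| \leq 2|M|$. Define $g : \zo^n \to \zo$ by setting $g(x) = 0$ for $x \in B$ and $g(x) = f(x)$ otherwise; then $\dist(f,g) \leq |B|/2^n \leq 2|M|/2^n$, so it suffices to check that $g$ is intersecting. Suppose it is not: then there is an $\mathsf{I}$-violation $(x,y)$ for $g$, i.e.\ $g(x) = g(y) = 1$ and $x \cap y = \emptyset$. Since $g$ vanishes on $B$, both $x$ and $y$ lie outside $B$, so $f(x) = g(x) = 1$ and $f(y) = g(y) = 1$; hence $(x,y)$ is also an $\mathsf{I}$-violation for $f$, and it shares no point with any pair of $M$. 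But then $M \cup \{(x,y)\}$ is a strictly larger matching of $\mathsf{I}$-violating pairs, contradicting the maximality of $M$.

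For the lower bound $|M|/2^n \leq \dint(f)$, I would take an arbitrary intersecting function $g$ and argue that it must disagree with $f$ on at least one coordinate of each pair of $M$. Indeed, for $(x,y) \in M$ we have $f(x) = f(y) = 1$ and $x \cap y = \emptyset$; since $g$ is intersecting it cannot have $g(x) = g(y) = 1$, so $g$ differs from $f$ on $x$ or on $y$. Because the pairs of $M$ are pairwise vertex-disjoint, picking one disagreement point per pair yields at least $|M|$ distinct points where $f \neq g$, so $\dist(f,g) \geq |M|/2^n$; minimizing over all intersecting $g$ gives $\dint(f) \geq |M|/2^n$. (The degenerate case where $f$ is already intersecting is handled automatically, since then $M = \emptyset$ and both bounds read $\dint(f) = 0$.)

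Neither direction poses a genuine obstacle; the only point that requires a bit of care is the appeal to the matching structure of $M$ — it is what legitimizes the ``$+\{(x,y)\}$'' step in the upper bound and what ensures the per-pair disagreements are counted without overlap in the lower bound. I would also note in passing, as the statement remarks, that both inequalities can be tight, which is visible from simple small examples, but this is not needed for the proof.
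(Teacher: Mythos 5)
Your proof is correct and follows essentially the same approach as the paper: for the lower bound, you observe that any intersecting $g$ must disagree with $f$ on at least one endpoint of each (vertex-disjoint) pair in $M$; for the upper bound, you zero out $f$ on all matched points and show by maximality of $M$ that the result is intersecting. The only small difference is that you write $\dist(f,g)\leq |B|/2^n \leq 2|M|/2^n$ whereas the paper asserts equality $\dist(f,g)=2|M|/2^n$; your version is slightly more careful, but both yield the claimed bound.
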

\begin{proof}
	Let $M=\{(x^1,y^1),\cdots,(x^{|M|},y^{|M|})\}$ be a maximum sized set of disjoint \textsf{I}-violating pairs of $f$.
	
	Observe that any intersecting function $g$ must satisfy $g(x^i)=0$ or $g(y^i)=0$ for all $1\leq i\leq |M|$. This means the distance between $f$ and $g$ is at least $M/2^n$, so $\dint(f)\geq |M|/2^n$.
	
	For the other direction, we show that the function $g$ defined below satisfies that $g$ is intersecting and $\dist(f,g)=2|M|/2^n$ (so $\dint(f) \leq 2|M|/2^n$):
	
\begin{equation*}
g(x) =
	\begin{cases}
       \ 0 & x=x^i \text{ or }x=y^i\text{ for some }1\leq i\leq |M|; \\[0.5ex]
       \ f(x) & \text{otherwise}.
    \end{cases}
\end{equation*}
It is clear that $\dist(f,g)=2|M|/2^n$. To see that $g$ is intersecting, suppose that there are $x,y$ such that $g(x)=g(y)=1$ and $x\cap y=\varnothing$. Then it must be the case that 
$x,y\not\in\{x^1,\cdots,x^{|M|},y^1,\cdots,y^{|M|}\}$, so $f(x)=g(x)=1$ and $f(y)=g(y)=1$, which violates the assumption that $M$ is maximal.
\end{proof}

Similar to \Cref{claim:zeroing}, the following claim says that for any $f$, we can truncate $f$ to its middle layers and distance to intersectingness is essentially preserved. It admits a simple proof which we leave to the reader. 
\begin{claim}~\label{claim:zeroing-inter}
Let $f: \zo^n \rightarrow \zo$. For $0<\epsilon<1$, define $T:=\sqrt{n \cdot 2\ln (4/\epsilon)}$ and define $f_{\mathsf{trunc}}: \zo^n \rightarrow \zo$ as 
\[
f_{\mathsf{trunc}}(x) = \begin{cases} 0  \ \ &\textrm{if} \  | x | \le n/2 - T; \\
0 \ \ &\textrm{if} \  | x | \ge n/2 + T; \\
f(x) \ \ &\textrm{otherwise}
\end{cases}
\]
Then, the following hold: (i) If $f$ is intersecting, then so is $f_{\mathsf{trunc}}$. (ii)  If $\dint(f)\ge\epsilon$, then $\dint(f_{\mathsf{trunc}}) \ge\epsilon/2$. Finally, (iii) If $(x,y)$ is a $\mathsf{I}$-violating pair for $f_{\mathsf{trunc}}$, then each of $x$ and $y$ lies in ${[n]\choose j}$
for some $j \in [n/2-T, n/2+T]$. 
\end{claim}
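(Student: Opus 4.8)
The plan is to dispose of the three parts in turn; all of them are short, with part (ii) the only one carrying any content. For (i), the key observation is that $f_{\mathsf{trunc}} \le f$ pointwise, since truncation here only ever replaces a value by $0$. Hence the set of satisfying assignments of $f_{\mathsf{trunc}}$ is a \emph{subfamily} of that of $f$, and any subfamily of an intersecting family is again intersecting; so $f_{\mathsf{trunc}}$ is intersecting whenever $f$ is. (Equivalently, one can verify \Cref{eq:intersecting} directly: if $f_{\mathsf{trunc}}(x)=f_{\mathsf{trunc}}(y)=1$ then $f(x)=f(y)=1$, so $x$ and $y$ already share a coordinate set to $1$.)

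For (iii), I would simply unwind the definition of $f_{\mathsf{trunc}}$: it is identically $0$ on every layer ${[n]\choose j}$ with $j\le n/2-T$ or $j\ge n/2+T$. So if $(x,y)$ is an $\mathsf{I}$-violating pair for $f_{\mathsf{trunc}}$, then $f_{\mathsf{trunc}}(x)=f_{\mathsf{trunc}}(y)=1$ forces both $|x|$ and $|y|$ to lie in $[n/2-T,\,n/2+T]$, which is exactly the claimed conclusion.

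For (ii), the argument is a two-line distance estimate. First I would bound the distance between $f$ and its truncation: the two functions agree on every $x$ with $n/2-T<|x|<n/2+T$, so
\[
\dist(f,f_{\mathsf{trunc}}) \;\le\; \Prx_{\bx\sim\zo^n}\sbra{\big||\bx|-n/2\big|\ge T}.
\]
Writing $|\bx|$ as a sum of $n$ i.i.d.\ uniform bits and applying a Hoeffding (Chernoff) bound, this tail is at most $2\exp(-2T^2/n)$; since $T=\sqrt{n\cdot 2\ln(4/\eps)}$, this is $2\exp(-4\ln(4/\eps))\le\eps/2$ (with room to spare). Then I would finish with the triangle inequality for $\dist$: for any intersecting $g$,
\[
\dist(f_{\mathsf{trunc}},g)\;\ge\;\dist(f,g)-\dist(f,f_{\mathsf{trunc}})\;\ge\;\eps-\eps/2\;=\;\eps/2,
\]
and taking the minimum over intersecting $g$ yields $\dint(f_{\mathsf{trunc}})\ge\eps/2$. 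There is no genuine obstacle here; the only point requiring even mild care is the choice of concentration inequality and constants, so that the truncated layers carry at most an $\eps/2$ fraction of the cube's mass — exactly the estimate already invoked for the union-closed truncation in \Cref{claim:zeroing}.
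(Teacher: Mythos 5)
Your proof is correct. The paper explicitly leaves the proof of this claim to the reader (it is the intersecting analogue of \Cref{claim:zeroing}, also left to the reader), so there is no paper proof to compare against; your argument — monotonicity of truncation for (i), the definition of $f_{\mathsf{trunc}}$ for (iii), and a Hoeffding tail bound plus the triangle inequality for (ii) — is the intended one, and the constant check ($T^2 = 2n\ln(4/\eps)$ gives $2e^{-2T^2/n} = 2(\eps/4)^4 \le \eps/2$) goes through with room to spare as you say.
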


\begin{algorithm}[t]
\caption{Algorithm to Test Intersectingness}
\label{alg:intersecting}
\vspace{0.5em}
\textbf{Input:} Query access to $f: \{0,1\}^n \rightarrow \{0,1\}$,
parameter $0 < \eps < 1$\\[0.5em]
\textbf{Output:} ``Intersecting'' or  ``$\eps$-far from intersecting''

\ 

\textsf{Intersecting-Tester}$(f)$:
\begin{enumerate}
	\item Repeat the following $M:= 100/\epsilon$ times:
	\begin{enumerate}
		\item Set $I:= [n/2 - \sqrt{n \cdot 2\ln (4/\epsilon)}, n/2 + \sqrt{n  \cdot 2\ln (4/\epsilon)}].$ 
	    \item Sample $\bx$ uniformly from the set ${[n]\choose I}$. 
	    \item Let $\overline{\bx}$ be obtained by flipping every bit of $\bx$. 
	    \item Let $\overline{\bx_{\downarrow}}:= \{y \le \overline{\bx}: y \in {[n]\choose I}\}$. Query $f(y)$ for every $y$ in $\overline{\bx_{\downarrow}}$. 
	    \item 
	    Check if there is a $y \in \overline{\bx_{\downarrow}}$ such that $f(y) = f(\bx)=1$.  
	    \item  If yes, output ``$\eps$-far from intersecting". 
	\end{enumerate}
	\item Output ``intersecting''.
\end{enumerate}
\end{algorithm}

\begin{theorem}~\label{thm:Intersecting-algorithm}
Given oracle access to $f: \zo^n \rightarrow \zo$ and error parameter $0<\eps \leq \red{1/2}$, the algorithm \textsf{Intersecting-Tester} makes $n^{O(\sqrt{n \log (1/\epsilon)})}/\eps$ queries to $f$ and has the following performance guarantee: 
\begin{enumerate}
\item If the function $f$ is intersecting, then the algorithm returns ``intersecting" with probability $1$. 
\item If $\dint(f) > \epsilon$, then the algorithm returns ``$\eps$-far from intersecting" with probability at least $9/10$. 
\end{enumerate}
\end{theorem}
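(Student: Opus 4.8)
The plan is to mirror, almost verbatim, the proof of \Cref{thm:Union-closed-algorithm}, with $\mathsf{I}$-violating pairs playing the role of $\mathsf{UC}$-violating tuples and with \Cref{lem:dist-to-intersecting} and \Cref{claim:zeroing-inter} substituting for \Cref{lem:distance-uc} and \Cref{claim:zeroing}. I would split the argument into the same three pieces: query complexity, perfect completeness, and soundness.

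First I would bound the query complexity. Fix any $x\in\binom{[n]}{I}$. Every $y\in\overline{x_\downarrow}$ is a subset of $\overline x$ with $|y|\in I$, so $|\overline x|-|y|\le 2T$ where $T=\sqrt{2n\ln(4/\eps)}$; a crude binomial estimate then gives $|\overline{x_\downarrow}|\le (2T+1)\cdot\binom{|\overline x|}{2T}\le n^{O(T)}=n^{O(\sqrt{n\log(1/\eps)})}$. Since each of the $M=100/\eps$ iterations makes at most $|\overline{x_\downarrow}|+1$ queries, the total is $n^{O(\sqrt{n\log(1/\eps)})}/\eps$ as claimed. Completeness is then immediate: because $y\le\overline{\bx}$ forces $y\cap\bx=\varnothing$, any $y$ found in Step~1(e) yields a genuine $\mathsf{I}$-violation $(\bx,y)$ of $f$, which cannot exist when $f$ is intersecting; hence such an $f$ always reaches Step~2 and outputs ``intersecting'' with probability $1$.

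The real content is the soundness claim, and that is where I would spend the effort. Assuming $\dint(f)>\eps$, I would first truncate: \Cref{claim:zeroing-inter}(ii) gives $\dint(f_{\mathsf{trunc}})\ge\eps/2$, and then the bound $\dint(f_{\mathsf{trunc}})\le 2|M|/2^n$ from \Cref{lem:dist-to-intersecting} produces a set $M$ of pairwise-disjoint $\mathsf{I}$-violating pairs of $f_{\mathsf{trunc}}$ with $|M|\ge\eps\cdot 2^n/4$. Call a point $x$ a \emph{witness} if it is an endpoint of some $\mathsf{I}$-violating pair of $f_{\mathsf{trunc}}$. Using \Cref{claim:zeroing-inter}(iii) (both endpoints of such a pair lie in $\binom{[n]}{I}$, where $f$ and $f_{\mathsf{trunc}}$ coincide), I would check that if $\bx$ is a witness with partner $y$, then $y\le\overline{\bx}$ and $y\in\binom{[n]}{I}$, so $y\in\overline{\bx_\downarrow}$ and $f(y)=f(\bx)=1$, whence Step~1(e) succeeds. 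Since the $2|M|\ge\eps\cdot 2^n/2$ endpoints of the pairs in $M$ are distinct points of $\binom{[n]}{I}$ and $\bigl|\binom{[n]}{I}\bigr|\le 2^n$, a uniform draw $\bx\sim\binom{[n]}{I}$ is a witness with probability at least $\eps/2$; a routine boosting calculation over the $M=100/\eps$ independent iterations (failure probability at most $(1-\eps/2)^{100/\eps}\le e^{-50}<1/10$) gives success probability at least $9/10$.

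The only genuinely delicate point—more bookkeeping than difficulty—is keeping straight which function ($f$ versus $f_{\mathsf{trunc}}$) each distance and each violation refers to: the distance lower bound and the disjoint-pair count must be established for $f_{\mathsf{trunc}}$, while the algorithm queries $f$, and the bridge is exactly \Cref{claim:zeroing-inter}(iii), which confines all endpoints of $f_{\mathsf{trunc}}$-violations to the middle layers $\binom{[n]}{I}$ on which $f$ and $f_{\mathsf{trunc}}$ agree. Everything else is routine.
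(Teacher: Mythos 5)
Your proposal is correct and follows essentially the same route as the paper's proof: truncate via \Cref{claim:zeroing-inter}, invoke \Cref{lem:dist-to-intersecting} to extract many disjoint violating pairs all confined to $\binom{[n]}{I}$, observe that either endpoint of such a pair is a witness that the sampler hits with probability at least $\eps/2$, and amplify over $100/\eps$ repetitions. You merely spell out the endpoint-counting step ($2|M|\ge \eps\cdot 2^n/2$) that the paper states implicitly, and you correctly cite \Cref{claim:zeroing-inter} where the paper's text has a small citation slip.
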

\begin{proof}
Note that for any $x \in {[n]\choose I}$, $|\overline{x_{\downarrow}}| = n^{O(\sqrt{n \log (1/\epsilon)})}$ and so the claimed query complexity of the algorithm follows immediately. 

The first item is also immediate, because if $f$ is intersecting, then in Step~1(d), the algorithm will fail to find an $\mathsf{I}$-violating pair. Thus, it will always output ``intersecting" in Step~2. 

Finally, to prove the second item, note that if $\dint(f) > \epsilon$,  then $\dint(f_{\mathsf{trunc}}) > \epsilon/2$ (where $f_{\mathsf{trunc}}$ is defined in  \Cref{claim:zeroing-inter}). Now, call a point $x \in \zo^n$ a ``witness" if there exists $y \in \zo^n$ such that $(y, x)$ is an $\mathsf{I}$-violating pair for $f_{\mathsf{trunc}}$. By \Cref{claim:zeroing}, note that any such $x$ and the corresponding $y$ must lie in $ {[n]\choose I}$.

Applying~\Cref{lem:dist-to-intersecting}, the probability that $\bx$ sampled in Step~3 is a witness is at least $\epsilon/2$. Further, if $x$ is a witness, then in Step~5, the algorithm will find a $\mathsf{I}$-violating pair $(y, x)$. Thus each iteration of lines (a)--(f) in the algorithm will output ``$\eps$-far from intersecting" with probability at least $\epsilon/2$. This means that repeating the procedure $M=100/\epsilon$ times, we get the correct answer with probability at least $9/10$. 
\end{proof}

We remark that similar to \Cref{sec:extension}, a straightforward modification of the above algorithm yields a ``pair tester'' for the property of  intersectingness with a comparable query complexity. 


\section{Discussion:  Towards Sharper Bounds for Testing Union-Closed Families?}
\label{sec:discussion}

An appealing goal for future work is to try to narrow the gap between our $n^{\Omega(\log 1/\eps)}$-query lower bound and our $\poly(n^{\sqrt{n \log 1/\eps}},1/\eps)$-query upper bound for non-adaptive testing of union-closed families.  Towards this end, it would be helpful to have a better understanding of the types of violating triples which must be present in functions which are far from union-closed.

In more detail, given a violating triple $(x,y,x \cup y)$ for union-closedness, let us say that the \emph{locality} of the triple is 
\[
\pbra{|x \cup y|_1 - |x|_1} + \pbra{|x \cup y|_1 - |y|_1},
\]
i.e.~the size of the symmetric difference of the two sets $x$ and $y$.
The arguments of \Cref{sec:extension} show that if $\duc(f) \geq \epsilon$, then $f$ must have at least $\Omega(\eps/n) \cdot 2^n$ many triples, each of which has locality $O(\sqrt{n \log(1/\eps)})$, and this simple structural result is at the heart of the triple tester sketched in that section.  

We (somewhat loosely) conjecture that if $f: \zo^n \to \zo$ has $\duc(f) \geq \eps$, then in fact $f$ must have ``many'' violating triples of locality $\ell$ for some locality parameter $\ell \ll \sqrt{n \log(1/\eps)}$.  If such a structural result were true with suitable parameters, then this would directly yield more query-efficient union-closed triple-based testing algorithms than the one given by \Cref{thm:main-alg}.  On the other hand, if there exist functions $f: \zo^a \to \zo$ that are $\eps$-far from union-closed but whose only violating triples have large locality, then such functions might be useful for improving the lower bound construction of \Cref{thm:one-sided-lb-intersecting}.  (Recall that at the core of the proof of \Cref{thm:one-sided-lb-intersecting} is the simple function $g_r: \zo^a \to \zo$ which has two satisfying assignments, which are antipodal strings $r,\overline{r} \in \zo^a$, where $r,\overline{r} \neq 0^a$ and $a=\log_2(1/\eps)$.  This function is $\eps$-far from being union-closed but its only violating triple has locality $a=\log_2(1/\eps).$)

As a concrete first goal for future work in this direction, we pose the following question:  

\begin{question}
Let $f: \zo^n \to \zo$ satisfy $\duc(f) \geq \epsilon$. What is the smallest value $\ell_f$ such that $f$ must contain some violating triple of locality at most $\ell_f$?
\end{question}

\section*{Acknowledgements}
X.C. is supported by NSF grants IIS-1838154, CCF-2106429, and CCF-2107187. A.D. is supported
by NSF grants CCF-1910534 and CCF-2045128. Y.L. is supported by NSF grants IIS-1838154,
CCF-2106429 and CCF-2107187. S.N. is supported by NSF grants IIS-1838154, CCF-2106429,
CCF-2211238, CCF-1763970, and CCF-2107187. R.A.S. is supported by NSF grants IIS-1838154,
CCF-2106429, and CCF-2211238. This work was partially completed while some of the authors
were visiting the Simons Institute for the Theory of Computing at UC Berkeley.

\begin{flushleft}
\bibliographystyle{alpha}
\bibliography{allrefs}
\end{flushleft}

\appendix

\end{document}